\DeclarePairedDelimiterX\tog[1]{\langle\!\langle}{\rangle\!\rangle}{#1}
\let\orgdescriptionlabel\descriptionlabel
\renewcommand*{\descriptionlabel}[1]{%
  \let\orglabel\label
  \let\label\@gobble
  \phantomsection
  \edef\@currentlabel{#1}%
  \let\label\orglabel
  \orgdescriptionlabel{#1}%
}
\newcommand{\customlabel}[2]{%
   \protected@write \@auxout {}{\string \newlabel {#1}{{#2}{\thepage}{#2}{#1}{}} }%
   \hypertarget{#1}{#2}
}
\crefname{thm}{Thm.}{Thms.}
\crefname{prop}{Prop.}{Props.}
\crefname{property}{Property}{Properties}
\crefname{lem}{Lem.}{Lems.}
\crefname{cor}{Cor.}{Cors.}
\crefname{defi}{Def.}{Defs.}
\crefname{section}{\S}{\S}
\crefname{figure}{Fig.}{Figs.}
\crefname{exa}{Example}{Examples}
\crefname{rem}{Remark}{Remarks}
\newcommand{\proofcasebase}[1]{\colorbox{black!10}{\hspace{.2em}\textsf{#1}\hspace{.2em}}~}
\newcommand{\proofcase}[1]{\noindent\setlength{\fboxsep}{0pt}\proofcasebase{#1}}
\newtheorem*{theorem*}{Theorem}
\definecolor[named]{ACMBlue}{cmyk}{1,0.1,0,0.1}
\definecolor[named]{ACMYellow}{cmyk}{0,0.16,1,0}
\definecolor[named]{ACMOrange}{cmyk}{0,0.42,1,0.01}
\definecolor[named]{ACMRed}{cmyk}{0,0.90,0.86,0}
\definecolor[named]{ACMLightBlue}{cmyk}{0.49,0.01,0,0}
\definecolor[named]{ACMGreen}{cmyk}{0.20,0,1,0.19}
\definecolor[named]{ACMPurple}{cmyk}{0.55,1,0,0.15}
\definecolor[named]{ACMDarkBlue}{cmyk}{1,0.58,0,0.21}
\definecolor{Dark Ruby Red}{HTML}{580507}
\definecolor{Dark Blue Sapphire}{HTML}{053641}
\definecolor{yoshikieditcolor}{RGB}{200,230,200}
 \newtheorem{problem}[theorem]{Problem}
\DeclarePairedDelimiter\set{\{}{\}}
\DeclarePairedDelimiter\tuple{(}{)}
\knowledgenewrobustcmd{\range}[2]{\cmdkl{[}#1\,\cmdkl{..}\,#2\cmdkl{]}}
\knowledgenewrobustcmd{\rangeone}[1]{\cmdkl{[}#1\cmdkl{]}}
\knowledgenewrobustcmd{\getith}[1]{\cmdkl{[}#1\cmdkl{]}}
\knowledgenewrobustcmd{\card}{\cmdkl{\#}}
\knowledgenewrobustcmd{\pset}{\mathord{\cmdkl{\wp}}}
\knowledgenewrobustcmd{\dcup}{\mathop{\cmdkl{\sqcup}}}
\newcommand{\const}[1]{\mathsf{#1}}
\newcommand{\bl}{\cdot}
\newcommand{\defeq}{\coloneq}
\newcommand{\nat}{\mathbb{N}}
\knowledgenewrobustcmd{\univ}[1]{\cmdkl{|}#1\cmdkl{|}}
\knowledgenewrobustcmd{\diagonal}{\cmdkl{\triangle}}
\knowledgenewrobustcmd{\quo}[1]{\cmdkl{[}#1\cmdkl{]}}
\knowledgenewrobustcmd{\bigO}{\cmdkl{O}}
\knowledgenewrobustcmd{\klmodels}{\mathrel{\cmdkl{\models}}}
\knowledgenewrobustcmd{\autolen}[1]{\cmdkl{\|}#1\cmdkl{\|}}
\knowledgenewrobustcmd{\ljump}[1]{\cmdkl{[}#1\cmdkl{]}}
\NewDocumentCommand\word{O{1}}{%
    \ifcase#1
        undefined
    \or w
    \or v
    \or u
    \else undefined

    \fi
}
\NewDocumentCommand\la{O{1}}{%
    \ifcase#1
        undefined
    \or \mathcal{L}
    \or \mathcal{K}
    \else undefined

    \fi
}
\knowledgenewrobustcmd{\laInac}{\cmdkl{\mathcal{L}}^{\cmdkl{\mathrm{Inac}}}}
\knowledgenewrobustcmd{\laNorm}{\cmdkl{\mathcal{L}}^{\cmdkl{\mathrm{Norm}}}}
\knowledgenewrobustcmd{\laIncon}{\cmdkl{\mathcal{L}}^{\cmdkl{\mathrm{Incon}}}}
\knowledgenewrobustcmd{\laTop}{\cmdkl{\mathcal{L}}^{\cmdkl{\mathrm{Top}}}}
\knowledgenewrobustcmd{\laTest}{\cmdkl{\mathcal{L}}^{\cmdkl{\mathrm{Test}}}}
\knowledgenewrobustcmd{\laConv}{\cmdkl{\mathcal{L}}^{\cmdkl{\mathrm{Conv}}}}
\knowledgenewrobustcmd{\laNom}{\cmdkl{\mathcal{L}}^{\cmdkl{\mathrm{Nom}}}}
\NewDocumentCommand\rel{O{1}}{%
    \ifcase#1
        undefined
    \or R
    \or S
    \else undefined

    \fi
}
\knowledgenewrobustcmd{\resTop}{\cmdkl{\mathrm{Top}}}
\knowledgenewrobustcmd{\resTest}{\cmdkl{\mathrm{Test}}}
\knowledgenewrobustcmd{\resConv}{\cmdkl{\mathrm{Conv}}}
\knowledgenewrobustcmd{\resNom}{\cmdkl{\mathrm{Nom}}}
\newcommand{\mul}[1]{\bar{#1}}
\NewDocumentCommand\struc{O{1}}{%
    \ifcase#1
        undefined
    \or \mathcal{M}
    \or \mathcal{N}
    \else undefined

    \fi
}
\NewDocumentCommand\graph{O{1}}{%
    \ifcase#1
        undefined
    \or G
    \or H
    \or J
    \else undefined

    \fi
}
\knowledgenewrobustcmd{\eps}{\cmdkl{\varepsilon}}
\knowledgenewrobustcmd{\src}{\cmdkl{\const{1}}}
\knowledgenewrobustcmd{\tgt}{\cmdkl{\const{2}}}
\knowledgenewrobustcmd{\autotrans}{\cmdkl{\delta}}
\knowledgenewrobustcmd{\autosrc}{\cmdkl{\const{1}}}
\knowledgenewrobustcmd{\Bautosrc}{\cmdkl{\const{1}}}
\NewDocumentCommand\aterm{O{1}}{%
    \ifcase#1
        undefined
    \or a
    \or b
    \or c
    \or d
    \else undefined

    \fi
}
\NewDocumentCommand\term{O{1}}{%
    \ifcase#1
        undefined
    \or t
    \or s
    \or {u}
    \else undefined

    \fi
}
\NewDocumentCommand\fml{O{1}}{%
    \ifcase#1
        undefined
    \or \varphi
    \or \psi
    \or \rho
    \else undefined

    \fi
}
\newcommand{\union}{\mathbin{+}}
\newcommand{\intersection}{\mathbin{\cap}}
\newcommand{\id}{\const{I}}
\newcommand{\emp}{\const{0}}
\newcommand{\compo}{\mathbin{;}}
\newcommand{\dom}{\mathsf{d}}
\newcommand{\rng}{\mathsf{r}}
\newcommand{\lop}{\circlearrowleft}
\NewDocumentCommand\tset{O{1}}{%
    \ifcase#1
        undefined
    \or T
    \or S
    \or {U}
    \else undefined

    \fi
}
\NewDocumentCommand\lab{O{1}}{%
    \ifcase#1
        undefined
    \or x
    \or y
    \or z
    \else undefined

    \fi
}
\NewDocumentCommand\glang{O{1}}{%
    \ifcase#1
        undefined
    \or \mathcal{G}
    \or \mathcal{H}
    \else undefined

    \fi
}
\NewDocumentCommand\algclass{O{1}}{%
    \ifcase#1
        undefined
    \or \mathcal{C}
    \or \mathcal{D}
    \else undefined

    \fi
}
\knowledgenewrobustcmd{\REL}{\mathsf{\cmdkl{REL}}}
\knowledgenewrobustcmd{\RELk}[1]{\mathsf{\cmdkl{REL}}_{#1}}
\knowledgenewrobustcmd{\LANG}{\mathsf{\cmdkl{LANG}}}
\NewDocumentCommand\trace{O{1}}{%
    \ifcase#1
        undefined
    \or \tau
    \or \sigma
    \or \rho
    \else undefined
    \fi
}
\tikzset{
    every edge/.append style = {
            line width = .3pt,
        },
    plab/.style={line width = 0.1pt, fill=#1, inner sep = .025cm, anchor=center, font = \fontsize{6pt}{0}},
    plab/.default= white,
    elab/.style={draw, rectangle, line width = 0.1pt, fill=#1, inner sep = .035cm, anchor=center, font = \footnotesize},
    elab/.default= white,
    tlab/.style={line width = 0.1pt, fill=#1, inner sep = .025cm, anchor=center, font = \fontsize{6pt}{0}\selectfont},
    tlab/.default= white
}
\tikzset{
    png export/.style={
            external/system call/.add={}{; convert -density 300 -transparent white "\image.pdf" "\image.png"},
            /pgf/images/external info,
            /pgf/images/include external/.code={
                    \includegraphics[width=\pgfexternalwidth,height=\pgfexternalheight]{##1.png}
                },
        }
}
\tikzstyle{mynode} = [inner sep = 1.5pt, fill= gray!20, font=\footnotesize]
\tikzstyle{mysmallnode} = [inner sep = 1.pt, fill= gray!20]
\tikzstyle{vert} = [draw, circle, mynode]
\tikzset{earrow/.style={>={{[flex] Latex[length=.1cm, width=2.5pt]}}}}
\tikzset{homoarrow/.style={earrow, line width = .3pt, color = olive, opacity=0.8}}
\tikzstyle{elabel} = [inner sep = 1.pt, font = \scriptsize, opacity = 1]
\tikzstyle{dvert} = [vert, color = gray!20]
\tikzstyle{dearrow} = [earrow, color = gray!20]
\knowledgenewrobustcmd{\jump}[1]{\cmdkl{\llbracket} #1 \cmdkl{\rrbracket}}
\newcommand{\vsig}{\Sigma}
\NewDocumentCommand\automaton{O{1}}{%
    \ifcase#1
        undefined
    \or \mathcal{A}
    \or \mathcal{B}
    \or \mathcal{C}
    \else undefined
    \fi
}
\knowledgenewrobustcmd{\iw}{\cmdkl{\operatorname{iw}}}
\knowledgenewrobustcmd{\pw}{\cmdkl{\operatorname{pw}}}
\knowledgenewrobustcmd{\tw}{\cmdkl{\operatorname{tw}}}
\knowledgenewrobustcmd{\loopRKAautomaton}[1]{\cmdkl{\automaton}_{#1}}
\knowledgenewrobustcmd{\binloopRKAautomaton}[1]{\cmdkl{\automaton}_{#1}}
\knowledgenewrobustcmd{\loopautomatonAFA}{\cmdkl{\automaton[2]}}
\knowledgenewrobustcmd{\binloopautomatonAFA}{\cmdkl{\tilde{\automaton[2]}}}
\knowledgenewrobustcmd{\LoopLabel}[1]{\cmdkl{L}_{#1}}
\knowledgenewrobustcmd{\transrel}[2]{\mathrel{\cmdkl{\longrightarrow}}^{#1}_{#2}}
\knowledgenewrobustcmd{\decomposedtransrel}[2]{\mathrel{\cmdkl{\longrightarrow}}^{#1}_{#2}}
\knowledgenewrobustcmd{\univl}[3]{\cmdkl{|}#1\cmdkl{|}_{\cmdkl{(}#2\cmdkl{<)}}^{#3}}
\knowledgenewrobustcmd{\univc}[3]{\cmdkl{|}#1\cmdkl{|}_{\cmdkl{(}#2\cmdkl{)}}^{#3}}
\knowledgenewrobustcmd{\univr}[3]{\cmdkl{|}#1\cmdkl{|}_{\cmdkl{(}#2\cmdkl{>)}}^{#3}}
\knowledgenewrobustcmd{\PBFML}{\cmdkl{\mathbb{B}_{+}}}
\knowledgenewrobustcmd{\glue}{{\cmdkl{\odot}}}
\knowledgenewrobustcmd{\bigglue}{{\cmdkl{\bigodot}}}
\newcommand{\series}{\diamond}
\knowledgenewrobustcmd{\lanchor}{{\cmdkl{\triangleright}}}
\knowledgenewrobustcmd{\ranchor}{{\cmdkl{\triangleleft}}}
\knowledgenewrobustcmd{\binenc}[2]{\cmdkl{\lfloor} #1 \cmdkl{\rfloor}_{#2}} %
\title{The Equational Theory of Relational Kleene Algebra with Graph Loop is PSPACE-Complete}
\titlerunning{The Equational Theory of Loop-RKA is PSPACE-Complete}
\author{Yoshiki Nakamura}{Chiba University, Japan}{nakamura.yoshiki.ny@gmail.com}{https://orcid.org/0000-0003-4106-0408}{}%
\authorrunning{Y. Nakamura} %
\keywords{Kleene algebra, Graph loop, Domain}
\begin{document}

\maketitle

\begin{abstract}
In this paper, we show that the equational theory of relational Kleene algebra with the \emph{graph loop} operator (a.k.a.~\emph{fixset}) is \textsc{PSpace}-complete.
Here, the graph loop is the unary operator that restricts a binary relation to the identity relation.
We further show that this \textsc{PSpace}-completeness still holds by extending the terms with top, tests, converse, and nominals, over relational models.
Notably, for Kleene algebra with tests (KAT),
while the equational theory of relational KAT with antidomain is \textsc{ExpTime}-complete,
we show that the equational theory of relational KAT with domain is \textsc{PSpace}-complete,
thereby resolving a problem left open in previous works.

To this end,
we introduce a novel automaton model on relational structures (graphs), called \emph{loop-automata}.
Loop-automata extend nondeterministic finite automata with a transition type that tests whether the current vertex has a loop.
Using this model, we can give a polynomial-time reduction from the equational theories above to the language inclusion problem for 2-way alternating automata.

 \end{abstract}

\nointro{language equivalence problem}
\nointro{regular expressions}
\nointro{1NFA}
\nointro{empty relation}
\nointro{union}
\nointro{intersection}
\nointro{top}
\nointro{source}
\nointro{target}
\nointro{path graph}
\nointro{fresh}
\nointro{constant}
\nointro{KAT term}
\nointro{KAT}
\nointro{variable complements}
\nointro{PDL}
\nointro{theory}
\nointro{disjoint union}
\nointro{language model}
\nointro{relational model}
\nointro{element}

\begin{scope}

\section{Introduction} \label{section: introduction}
\AP
\intro*\kl{Relational Kleene algebra} (\reintro*\kl{RKA}), \intro*\kl{Kleene algebra} over "relational models",
is an algebraic system on binary relations with the operators $\set{\emp, \id, \compo, \union, \bl^{*}}$ of
\kl{empty relation} ($\emp$),
\kl{identity relation} ($\id$),
\kl{union} ($\union$),
\kl{composition} ($\compo$),
and \kl{reflexive transitive closure} (Kleene star) ($\bl^{*}$).
\kl{RKA} arises in various research areas,
for instance,
in database theory via \intro*\kl{regular path queries} (\kl{RPQs}) \cite{calvaneseRewritingRegularExpressions1999} and
in program verification via \intro*\kl{relational Kleene algebra with tests} (\reintro*\kl{relational KAT}, \reintro*\kl{RKAT}) \cite{kozenKleeneAlgebraTests1996}.
The \kl{equational theory} of \kl{RKA} is well-known to be decidable and \kl{PSPACE}-complete,
because it coincides\footnote{\label{footnote: lang and REL equiv}%
This coincidence can be seen, "eg", from the completeness theorem of \kl{Kleene algebra} \cite{kozenCompletenessTheoremKleene1994} or from a straightforward argument regarding labelled paths in "(relational) structures".} with the \kl{language equivalence problem} of \kl{regular expressions} (which is \kl{PSPACE}-complete \cite{meyerEquivalenceProblemRegular1972}).
The \kl{equational theory} over "relational models" is still \kl{PSPACE}-complete even if
we extend \kl{RKA} terms with several operators,
such as
\kl{tests} \cite{kozenKleeneAlgebraTests1996},
\kl{converse} \cite{brunetKleeneAlgebraConverse2014},
\kl{top} \cite{nakamuraExistentialCalculiRelations2023,pousCompletenessTheoremsKleene2024},
\kl{nominals} \cite{nakamuraDerivativesGraphsPositive2025},
and all the combinations of them \cite[Cor.~6.7]{nakamuraDerivativesGraphsPositive2025}.

A remarkable extension is \kl{RKA} with \reintro*\kl{domain} ($\bl^{\mathsf{d}}$) and \intro*\kl{antidomain} ($\bl^{\mathsf{a}}$) operators \cite{desharnaisKleeneAlgebraDomain2006,desharnaisModalSemiringsRevisited2008,desharnaisInternalAxiomsDomain2011}.
The \kl{domain} operator $\bl^{\mathsf{d}}$ maps a binary relation $R$ (on a set $X$) to the diagonal of the \kl{domain}
and the \kl{antidomain} operator $\bl^{\mathsf{a}}$ maps $R$ to the (unary) complement of $R^{\mathsf{d}}$:
\begin{align*}
  R^{\mathsf{d}} &\defeq \set{ \tuple{x,x} \mid x \in X,~ \exists y \in X. \tuple{x,y} \in R}, &
  R^{\mathsf{a}} &\defeq \set{ \tuple{x,x} \mid x \in X,~ \forall y \in X. \tuple{x,y} \not\in R}.
\end{align*}
The \kl{equational theory} of \kl{RKA} with \kl{antidomain}%
\footnote{\label{footnote: undecidable}%
The \kl{domain} $\bl^{\dom}$ can be expressed using \kl{antidomain} as $R^{\mathsf{d}} = (R^{\mathsf{a}})^{\mathsf{a}}$.
The converse does not hold in general.
Using complement $R^{-} \defeq X^2 \setminus R$ where $X$ is the base set, including top,
the antidomain $\bl^{\mathsf{a}}$ can be expressed as $R^{\mathsf{a}} = ((R^{\mathsf{d}} \top)^{-})^{\mathsf{d}}$.
However, the \kl{equational theory} of \kl{RKA} with complement is undecidable (by \cite{hardinComplexityHornTheory2003}).
It is still $\Pi^{0}_{1}$-complete even when the application of complement is restricted to \kl{constants} \cite[Corollary 7.2]{nakamuraUndecidabilityEmptinessProblem2025} or \kl{variables} \cite[Theorem 50]{nakamuraExistentialCalculiRelations2023} (via the reduction of \cite[Example 3.4]{nakamuraUndecidabilityEmptinessProblem2025} from Horn formulas to "equations").}
is \kl{EXPTIME}-complete \cite{sedlarComplexityKleeneAlgebra2023},
coinciding with the complexity of the \kl{theory} of Propositional Dynamic Logic (\kl{PDL}).
However, for the \kl{equational theory} of \kl{RKA} with \kl{domain}, the complexity was left open by McLean \cite[Problem 7.3]{mcleanFreeKleeneAlgebras2020} and Sedl{\'a}r \cite[p.~16]{sedlarKleeneAlgebraDynamic2023}.
To the best of our knowledge,%
\footnote{In \cite[unpublished manuscript]{reutterContainmentNestedRegular2013} 
it is claimed that the containment problem of \emph{nested regular expressions} \cite{perezNSPARQLNavigationalLanguage2008} ("ie", the \kl{inequational theory} of \kl{RKA} with domain and converse) is in \kl{PSPACE},
which also implies the \kl{PSPACE} upper bound of the \kl{equational theory} of \kl{RKA} with \kl{domain}.
However, in the version of \cite{reutterContainmentNestedRegular2013}, the containment on ``semipaths'' (not general "structures") is only considered as noted in \cite[p.~1]{reutterContainmentNestedRegular2013}.}
it was previously known only that the problem is \kl{PSPACE}-hard \cite{meyerEquivalenceProblemRegular1972} and in \kl{EXPTIME} \cite{sedlarComplexityKleeneAlgebra2023}.

Another extension is \kl{RKA} with \emph{\kl{intersection}} ($\intersection$), a.k.a.\ (representable) \emph{relational Kleene lattice} \cite{andrekaEquationalTheoryKleene2011}.
The \kl{equational theory} is \kl{EXPSPACE}-complete \cite{nakamuraPartialDerivativesGraphs2017,brunetPetriAutomata2017,nakamuraDerivativesGraphsPositive2025},
while it is \kl{PSPACE}-complete if we fix the \kl{intersection width} ("ie", the maximum nesting depth of $\intersection$) of input \kl(loop-RKA){terms} \cite{nakamuraDerivativesGraphsPositive2025}.
\kl{RKA} with \emph{\kl{graph loop}} ($\bl^{\lop}$) --henceforth, \emph{\kl{loop-RKA}} for short-- \cite{nakamuraDerivativesGraphsPositive2025} is a restriction of \kl{RKA} with \kl{intersection},
where we only allow the \kl{intersection} with \kl{identity relation} ("ie", the \kl{graph loop} operator).
Here, \kl{graph loop}\footnote{In \cite{daneckiPropositionalDynamicLogic1984}, the \emph{strong loop predicate} $\mathrm{loop}(\term)$ is considered. In our paper, we use it as operator.
Precisely, $\term^{\lop}$ is equal to $\mathrm{loop}(\term)?$ where $\fml?$ denotes the ""rich test"".}
\cite{daneckiPropositionalDynamicLogic1984,nakamuraUndecidabilityPositiveCalculus2024} (also called \emph{fixset} \cite{hirschAlgebraFunctionsAntidomain2016}) is the unary operator of restricting a given binary relation $R$ on $X$ to the \kl{identity relation}:
$R^{\lop} \defeq R \cap \diagonal_{X}$.
\kl{Loop-RKA} is still expressive enough to define \kl{domain} and \kl{range} operators using top ($\top$), the full relation, as follows:
\begin{align*}
\term^{\mathsf{d}} ~&=~ (\term \compo \top)^{\lop}, &
\term^{\mathsf{r}} ~&=~ (\top \compo \term)^{\lop}.
\tag{$\mathsf{d}$/$\mathsf{r}$-by-$\lop$,$\top$}\label{equation: dom range by loop top}
\end{align*}
However, the complexity of the "equational theory" of \kl{loop-RKA} (without $\top$) was also left open by Nakamura \cite[\S 8]{nakamuraDerivativesGraphsPositive2025}.
To the best of our knowledge, it was previously known only that the problem is \kl{PSPACE}-hard \cite{meyerEquivalenceProblemRegular1972} and in \kl{EXPTIME} \cite{daneckiPropositionalDynamicLogic1984,gollerPDLIntersectionConverse2009}.
\AP
\intro*\kl[loop-PDL]{Loop-PDL} \cite{daneckiPropositionalDynamicLogic1984} is propositional dynamic logic of regular programs with "graph loop" ("strong loop predicate").
The theory of \kl{loop-PDL} is \kl{EXPTIME}-complete \cite{daneckiPropositionalDynamicLogic1984} (another proof is in \cite[Corollary 4.9]{gollerPDLIntersectionConverse2009}).
By a canonical embedding ("cf", \cite[p.\ 209]{fischerPropositionalDynamicLogic1979}),
we can give a polynomial-time reduction from the \kl{equational theory} of \kl{loop-RKA} into the \kl{theory} of \kl{loop-PDL}.
The \kl{EXPTIME} upper bound for \kl{loop-RKA} can be obtained via this reduction.
For \kl{loop-PDL},
the \kl{EXPTIME} upper bound can also be obtained from the \kl{theory} of \kl{PDL} with \kl{intersection} where the "intersection width" is fixed \cite[Corollary 4.9]{gollerPDLIntersectionConverse2009}.
In contrast, we cannot obtain the \kl{PSPACE} upper bound of \kl{loop-RKA} from the \kl{equational theory} of \kl{RKA} with \kl{intersection} where the "intersection width" is fixed,
because ``"rich tests"'' in the context of \kl{PDL} are not employed in \kl{RKA} with \kl{intersection}.\footnote{%
For \kl{loop-PDL}, we can reduce the "intersection width" to at most $2$ \cite[Corollary 4.9]{gollerPDLIntersectionConverse2009}, "eg", by translating the nested $(a b^{\lop} a)^{\lop}$ into $(a (\langle b^{\lop}\rangle\mathsf{true})? a)^{\lop}$,
where $\fml?$ denotes the "rich test" and $\langle\term\rangle\fml$ denotes the diamond modality.
Note that the "intersection width" of $\fml?$ is always defined to be $1$, even if $\fml$ contains arbitrary nested "graph loop".}

\subparagraph*{Contributions}
We first show the following, which affirmatively answers \cite[\S 8]{nakamuraDerivativesGraphsPositive2025}:
\begin{theorem}\label{theorem: KL PSPACE-complete}
The \kl{equational theory} for \kl{loop-RKA} is \textup{\kl{PSPACE}}-complete.
\end{theorem}
Moreover, by adapting the encodings of some operators from \cite[\S 6]{nakamuraDerivativesGraphsPositive2025} in our automata construction,
the \kl{PSPACE} upper bound is strengthened as follows (\Cref{section: encoding extras}).
Here, "tests" are in the context of \kl{KAT} \cite{kozenKleeneAlgebraTests1996} and "nominals" are in the context of hybrid modal logic \cite{arecesHybridLogics2007}.
\begin{theorem}\gdef\theoremKLextrasPSPACEcomplete{
The \kl{equational theory}
for \kl{loop-RKA} with \kl{top}, \kl{tests}, \kl{converse}, and \kl{nominals} is \textup{\kl{PSPACE}}-complete.
}%
\label{theorem: KL extras PSPACE-complete}
\theoremKLextrasPSPACEcomplete
\end{theorem}

As an application,
we can also encode \kl{domain} and \kl{range} as above \eqref{equation: dom range by loop top}.
Consequently, our result also implies the \kl{PSPACE} upper bounds for 
\kl{RKAT} with \kl{domain} and \kl{RKA} with \kl{domain} \cite{desharnaisInternalAxiomsDomain2011,mcleanFreeKleeneAlgebras2020,sedlarKleeneAlgebraDynamic2023}\footnote{
\kl{RKA} with \kl{domain} is a strictly less expressive fragment of \kl{RKA} with \kl{tests} and \kl{antidomain}.
The latter system is often also called (relational) ``Kleene algebra with domain'' \cite{desharnaisKleeneAlgebraDomain2006}.},
which answers \cite[Problem 7.3]{mcleanFreeKleeneAlgebras2020} and \cite[p.\ 16]{sedlarKleeneAlgebraDynamic2023}:
\begin{cor}\label{corollary: KA with domain PSPACE-complete}
The \kl{equational theory}
for \kl{RKA} with \kl{tests}, \kl{domain}, and \kl{range} is \textup{\kl{PSPACE}}-complete.
\end{cor}
Notably, adding \kl{domain} to \kl{RKA} maintains the \kl{PSPACE} upper bound,
while the \kl{equational theory} of \kl{RKA} with \kl{antidomain} ($\bl^{\mathsf{a}}$) is \kl{EXPTIME}-complete \cite{sedlarComplexityKleeneAlgebra2023}.

To prove \Cref{theorem: KL PSPACE-complete,theorem: KL extras PSPACE-complete},
we give a reduction from the \kl{equational theory} into the \kl{language inclusion problem} of \kl{2-way alternating string automata} (\kl{2AFAs}) \cite{geffertTransformingTwoWayAlternating2014}.
We first observe that the \kl{equational theory} has the \emph{\kl{linearly bounded pathwidth model property}} (\Cref{proposition: bounded pw property}) \cite[Proposition 2.9]{nakamuraDerivativesGraphsPositive2025}:
if there is some counter-model for an \kl{equation} ("ie", a \kl{structure} refuting the \kl{equation}),
then there is also some counter-model with \kl{pathwidth} linear in the size of the \kl{equation}.
By viewing \kl{path decompositions} of bounded "width@@pathwidth" as \kl{strings},
we can enumerate all possible \kl{structures} of bounded \kl{pathwidth}, using \kl(2AFA){string automata}.
This baseline approach itself is the same as in \cite{nakamuraDerivativesGraphsPositive2025} for \kl{RKA} with \kl{intersection}.
However,
the construction of \cite{nakamuraDerivativesGraphsPositive2025} does not imply the \kl{PSPACE} upper bound (see also \cite[p.~46--47]{nakamuraDerivativesGraphsPositive2025}), especially if the \kl{intersection width} is not fixed.

To obtain a \kl{PSPACE}-algorithm,
we introduce a novel automaton model, called \emph{\kl{loop-automata}} (\Cref{section: loop-automata})
as an alternative expression of the relational semantics of \kl{loop-RKA}.
\kl[loop-automata]{Loop-automata} are obtained from nondeterministic automata by adding a new transition for testing whether there exists a loop in the current vertex.
From \kl{loop-automata}, we build \kl{2AFAs}, where each \kl{string} expresses a \kl{path decomposition}.
Using this construction,
we give a reduction from the \kl{equational theory} of \kl{loop-RKA} into the \kl{language inclusion problem} of \kl{2AFAs}.
In our automata construction, we can also encode several additional operators,
which shows \Cref{theorem: KL extras PSPACE-complete} and \Cref{corollary: KA with domain PSPACE-complete}.

\subparagraph*{Related and future work}
\AP
Our automata construction in this paper is based on \cite{nakamuraDerivativesGraphsPositive2025},
which shows the \kl{equational theory} for the \emph{positive} fragment of Tarski's calculus of relations with transitive closure (\intro*\kl{PCoR*}) \cite{tarskiCalculusRelations1941,pousPositiveCalculusRelations2018,nakamuraDerivativesGraphsPositive2025} is \kl{EXPSPACE}-complete.
While \kl{loop-RKA} can be viewed as a syntactic fragment of \kl{PCoR*},
the algorithm presented in \cite{nakamuraDerivativesGraphsPositive2025} requires an exponential space even for \kl{loop-RKA} \cite[p.~46--47]{nakamuraDerivativesGraphsPositive2025}.
Our automata construction in this paper refines it, specializing \kl{loop-RKA}.
Our reduction is different from the reduction of \cite{nakamuraDerivativesGraphsPositive2025}, mainly in the following two points:
\begin{itemize}
    \item
    We consider only (nested) \kl{path graphs} instead of \intro*\kl{series-parallel graphs},
    using \kl{loop-automata} instead of \intro*\kl(RKL){derivatives} on \kl{graphs}.
    These \kl{graphs} are sufficient for \kl{loop-RKA}.
    This significantly simplifies our algorithm compared to that of \cite{nakamuraDerivativesGraphsPositive2025}
    and enables us to obtain the \kl{PSPACE} upper bound.
    
    \item
    We reduce the alphabet size using a binary encoding of \kl{structures}.
    This is also crucial for our \kl{PSPACE} upper bound, especially if the \kl{intersection width} is not fixed.
\end{itemize}

Notably, \kl{RKAT} with \kl{domain} and \kl{range} has a connection with \intro*\kl{incorrectness logic} \cite{ohearnIncorrectnessLogic2019}.
An (angelic total) incorrectness triple ("ie", all states in $q$ can be reached by running $\term$ on some state in $p$)
can be expressed as follows, where $p, q$ are \kl{tests} and $\term$ is a \kl{KAT term}:
\begin{gather*}
[p]~\term~[q] \quad\leftrightarrow\quad (p \term q)^{\rng} \ge q \quad(\leftrightarrow\quad (p \term q)^{\rng} = q^{\rng} \quad\leftrightarrow\quad \top p \term q = \top q).
\end{gather*}
See also \cite[Table 2]{verschtTaxonomyHoareLikeLogics2025} for further correspondences between Hoare-like logics and \kl{RKAT} with \kl{domain} and \kl{range}, based on the equivalences $\term[1] \top = \term[2] \top \leftrightarrow \term[1]^{\dom} = \term[2]^{\dom}$ and $\top \term[1] = \top \term[2] \leftrightarrow \term[1]^{\rng} = \term[2]^{\rng}$.
In previous work, the \kl{PSPACE} upper bound of the validity of such incorrectness triples was shown via an encoding to \kl{KAT} with "top" over \kl{relational models} \cite{nakamuraExistentialCalculiRelations2023,pousCompletenessTheoremsKleene2024}, or \kl{KAT} with "top" over \kl{language models} \cite{zhangDomainReasoningTopKAT2024}.
Our \Cref{corollary: KA with domain PSPACE-complete} more directly shows that the above (in)"equation" is in \kl{PSPACE}.

In this paper, we present an algorithm for \kl{loop-RKA}; however,
it would also be interesting to present an algorithm specialized to \kl{RKA} with \kl{domain}.\footnote{%
Our approach of encoding \kl{domain} using \kl{graph loop} and \kl{top} does not preserve the complexity in general.
For example, whereas the "equational theory" of \kl{RKA} with \kl{domain} and \kl{variable complements}
is in \kl{EXPTIME} (by \cite{lutzPDLNegationAtomic2005}), it is undecidable for \kl{loop-RKA} with \kl{variable complements} \cite{nakamuraExistentialCalculiRelations2023,nakamuraUndecidabilityEmptinessProblem2025} (\Cref{footnote: undecidable}).}

Additionally, it is open whether the \kl{equational theory} of \kl{loop-RKA} ("resp", ""loop-RKAT"", "ie",
"RKAT" with "graph loop") is finitely axiomatizable.
An interesting valid "equation" in \kl{loop-RKA} is $(\term^{+})^{\lop} \le (\term \term)^{+}$ (cf.~\cite[p.~14]{pousPositiveCalculusRelations2018}).

\subparagraph*{Organization}
In \Cref{section: preliminaries}, we give basic definitions.
In \Cref{section: RKA with graph loop}, we define \kl{loop-RKA}.
In \Cref{section: loop-automata}, we introduce \kl{loop-automata} as an alternative representation of \kl{loop-RKA terms}.
In \Cref{section: decomposing derivatives,section: automata construction},
we show that the \kl{equational theory} of \kl{loop-RKA} is \kl{PSPACE}-complete (\Cref{theorem: KL PSPACE-complete}).
In \Cref{section: encoding extras}, we extend the automata construction with several additional operators (\Cref{theorem: KL extras PSPACE-complete} and \Cref{corollary: KA with domain PSPACE-complete}).
\end{scope}

\section{Preliminaries}\label{section: preliminaries}
We write $\nat$ for the set of non-negative integers.
\AP
For $l, r \in \nat$, we write $\intro*\range{l}{r}$ for the set $\set{i \in \nat \mid l \le i \le r}$.
\AP
For $n \in \nat$, we abbreviate $\range{1}{n}$ to $\intro*\rangeone{n}$.
\AP
For a set $X$,
we write $\intro*\card X$ for the cardinality of $X$. %
\AP
We may use $\intro*\dcup$ to denote that the union $\cup$ is disjoint, explicitly.
\AP For a sequence $\mul{a} = a_1 \dots a_n$,
we write $\mul{a}\intro*\getith{i}$ for the element $a_i$.

\subsection{Word languages}
\AP
For a set $X$ of \intro*\kl{characters}, we write $X^{*}$ for the set of \intro*\kl{strings} over $X$.
\AP
We write $\word[1] \word[2]$ for the \intro*\kl(word){concatenation} of \kl{strings} $\word[1]$ and $\word[2]$.
\AP
We write $\intro*\eps$ for the \intro*\kl{empty string}.
\AP
A \intro*\kl{language} over $X$ is a subset of $X^{*}$.
\AP
We use $\word[1], \word[2]$ to denote \kl{strings} and use $\la[1], \la[2]$ to denote \kl{languages}.
\AP
For \kl{languages} $\la[1], \la[2] \subseteq X^{*}$, the \intro*\kl(lang){concatenation} $\la[1] \compo \la[2]$,
the \intro*\kl(lang){$n$-th iteration} $\la[1]^n$ (where $n \in \nat$),
the \intro*\kl{Kleene plus} $\la[1]^{+}$, and
the \intro*\kl{Kleene star} $\la[1]^{*}$ are defined by:
\begin{align*}
    \la[1] \compo \la[2] & \defeq \set{\word[1] \word[2] \mid \word[1] \in \la[1] \;\land\; \word[2] \in \la[2]},\hspace{-.1em}                    \span\span    \\
    \la[1]^{n} & \defeq \begin{cases}
        \la[1] \compo \la[1]^{n-1} & (n \ge 1)\\
        \set{\eps} & (n = 0),
    \end{cases}\hspace{-.1em} & 
    \la[1]^{+}  & \defeq \bigcup_{n \ge 1} \la[1]^{n},\hspace{-.1em}&
    \la[1]^{*}  & \defeq \bigcup_{n \ge 0} \la[1]^{n}.
\end{align*}

\subsection{Binary relations}
\AP
We write $\intro*\diagonal_{A}$ for the \intro*\kl{identity relation} on a set $A$: $\diagonal_{A} \defeq \set{\tuple{x, x} \mid x \in A}$.
\AP
For \intro*\kl{binary relations} $\rel[1], \rel[2]$ on a set $B$, the \intro*\kl{composition} $\rel[1] \compo \rel[2]$, the \AP
    \intro*\kl{$n$-th iteration} $\rel[1]^{n}$ (where $n \in \nat$), the \intro*\kl{transitive closure} $\rel[1]^{+}$, the \intro*\kl{reflexive transitive closure} $\rel[1]^*$,
    the \intro*\kl{domain} $\rel[1]^{\dom}$,
the \intro*\kl{range} $\rel[1]^{\rng}$, and 
the \intro*\kl{graph loop} $\rel[1]^{\lop}$
are defined by:
\begin{align*}
    \rel[1] \compo \rel[2] & \defeq \set{\tuple{x, z} \mid \exists y, \tuple{x, y} \in \rel[1] \;\land\;  \tuple{y, z} \in \rel[2]}, &
    \rel^{\lop} &\defeq \rel \cap \diagonal_{B}, \\
    \rel[1]^n             & \defeq \begin{cases}
                                       \rel[1] \compo \rel[1]^{n-1} & (n \ge 1) \\
                                       \diagonal_{B}               & (n = 0)
                                   \end{cases}, &
    \rel[1]^{+}        & \defeq \bigcup_{n \ge 1} \rel[1]^n, \quad
    \rel[1]^*  \defeq \bigcup_{n \ge 0} \rel[1]^n,\\
    \rel^{\dom} &\defeq \set{\tuple{x, x} \mid \exists y, \tuple{x, y} \in \rel}, & 
    \rel^{\rng} &\defeq \set{\tuple{y, y} \mid \exists x, \tuple{x, y} \in \rel}. 
\end{align*}

\subsection{Graphs and structures}
\AP
For $k \ge 0$ and a set $A$,
a \intro*\kl{graph} $\graph$ over $A$ with $k$ ""ports"" ("ie", $k$ named vertices) is a tuple $\tuple{\univ{\graph}, \set{a^{\graph}}_{a \in A}, 1^{\graph}, \dots, k^{\graph}}$, where
\begin{itemize} 
    \item \AP$\intro*\univ{\graph}$ is a set (which is possibly empty when $k = 0$) of \intro*\kl{vertices},
    \item \AP$a^{\graph} \subseteq \univ{\graph}^{2}$ is a binary relation for each $a \in A$ denoting the set of \intro*\kl{edges} labelled by $a$,
    \item $i^{\graph} \in \univ{\graph}$ is the $i$-th port for each $i \in \rangeone{k}$.
\end{itemize}
We depict \kl{graphs} as usual.
For instance, when $\graph$ is defined as $\univ{\graph} = \set{1, 2, 3}$, $a^{\graph} = \set{\tuple{1, 2}, \tuple{2, 3}}$, $1^{\graph} = 1$, and $2^{\graph} = 3$,
we depict $\graph$ as
\begin{tikzpicture}[baseline = -.5ex]
    \graph[grow right = 1.cm, branch down = 6ex, nodes={mynode}]{
    {1/{$1$}[draw, circle]}-!-{2/{$2$}[draw, circle]}-!-{3/{$3$}[draw, circle]}
    };
    \node[left = .5em of 1](l){\scriptsize $1$};
    \node[right = .5em of 3](r){\scriptsize $2$};
    \graph[use existing nodes, edges={color=black, pos = .5, earrow}, edge quotes={fill=white, inner sep=1pt,font= \scriptsize}]{
    1 ->["$a$"] 2 ->["$a$"] 3;
    l -- 1; 3 -- r;
    };
\end{tikzpicture}.
By viewing the "ports" $1^{\graph}$ and $2^{\graph}$ as the ``input'' and ``output'', respectively,
we may depict $\graph$ as \begin{tikzpicture}[baseline = -.5ex]
    \graph[grow right = 1.cm, branch down = 6ex, nodes={mynode}]{
    {1/{}[draw, circle]}-!-{2/{}[draw, circle]}-!-{3/{}[draw, circle]}
    };
    \node[left = .5em of 1](l){};
    \node[right = .5em of 3](r){};
    \graph[use existing nodes, edges={color=black, pos = .5, earrow}, edge quotes={fill=white, inner sep=1pt,font= \scriptsize}]{
    1 ->["$a$"] 2 ->["$a$"] 3;
    l -> 1; 3 -> r;
    };
\end{tikzpicture} (by drawing \begin{tikzpicture}[baseline = -.5ex]
\graph[grow right = 1.cm, branch down = 6ex, nodes={mynode}]{
{1/{}[draw, circle]}
};
\node[left = .5em of 1](l){\scriptsize $1$};
\graph[use existing nodes, edges={color=black, pos = .5, earrow}, edge quotes={fill=white, inner sep=1pt,font= \scriptsize}]{
    l -- 1;
};
\end{tikzpicture} as \begin{tikzpicture}[baseline = -.5ex]
\graph[grow right = 1.cm, branch down = 6ex, nodes={mynode}]{
{1/{}[draw, circle]}
};
\node[left = .5em of 1](l){};
\graph[use existing nodes, edges={color=black, pos = .5, earrow}, edge quotes={fill=white, inner sep=1pt,font= \scriptsize}]{
    l -> 1;
};
\end{tikzpicture} and
\begin{tikzpicture}[baseline = -.5ex]
\graph[grow right = 1.cm, branch down = 6ex, nodes={mynode}]{
{1/{}[draw, circle]}
};
\node[right = .5em of 1](r){\scriptsize $2$};
\graph[use existing nodes, edges={color=black, pos = .5, earrow}, edge quotes={fill=white, inner sep=1pt,font= \scriptsize}]{
    1 -- r;
};
\end{tikzpicture} as \begin{tikzpicture}[baseline = -.5ex]
\graph[grow right = 1.cm, branch down = 6ex, nodes={mynode}]{
{1/{}[draw, circle]}
};
\node[right = .5em of 1](r){};
\graph[use existing nodes, edges={color=black, pos = .5, earrow}, edge quotes={fill=white, inner sep=1pt,font= \scriptsize}]{
    1 -> r;
};
\end{tikzpicture}, respectively, and by forgetting vertex labels), \begin{tikzpicture}[baseline = -.5ex]
    \graph[grow right = 1.cm, branch down = 6ex, nodes={mynode}]{
    {1/{}[draw, circle]}-!-{2/{}[draw, circle]}-!-{3/{}[draw, circle]}
    };
    \node[left = .5em of 1](l){};
    \node[right = .5em of 3](r){};
    \graph[use existing nodes, edges={color=black, pos = .5, earrow}, edge quotes={fill=white, inner sep=1pt,font= \scriptsize}]{
    1 -> 2 -> 3;
    l -> 1; 3 -> r;
    };
\end{tikzpicture} (by forgetting edge labels $a$), or \begin{tikzpicture}[baseline = -.5ex]
    \graph[grow right = 1.5cm, branch down = 6ex, nodes={mynode}]{
    {1/{}[draw,circle]}-!-{3/{}[draw,circle]}
    };
    \graph[use existing nodes, edges={color=black, pos = .5, earrow}, edge quotes={fill=white, inner sep=1pt,font= \scriptsize}]{
    1 ->["$\graph$"{draw,rectangle}] 3;
    };
\end{tikzpicture}.
For two \kl{graphs} $\graph[1]$ and $\graph[2]$ over $A$ with the same number of ports,
a \intro*\kl{graph isomorphism} is a bijection $f \colon \univ{\graph[1]} \to \univ{\graph[2]}$
that preserves and reflects every relation $a \in A$ and maps each port of $\graph[1]$ to the corresponding port of $\graph[2]$.
We write $\graph[1] \cong \graph[2]$ if such an isomorphism exists.

\AP
A \intro*\kl{structure} over $A$ is a non-empty graph without ports over $A$.
We use $\struc[1], \struc[2]$ to denote \kl{structures}.
We write $\intro*\REL$ for the class of all \kl{structures}.
Whenever a signature is fixed, such as $\vsig$ below, $\REL$ denotes the structures over that signature.

\subsection{Pathwidth, path decomposition, and gluing operator}
\AP
We recall the \kl{pathwidth} of \kl{structures} \cite[Def.\ 9.12]{courcelleGraphStructureMonadic2012}.
For a finite \kl{structure} $\struc[1]$ over $A$,
a \intro*\kl{path decomposition} of $\struc[1]$ is a sequence $\mul{\struc[2]} = \struc[2]_1 \dots \struc[2]_n$ of finite \kl{structures} such that
\begin{itemize}
    \item $\univ{\struc[1]} = \bigcup_{i \in \rangeone{n}} \univ{\struc[2]_{i}}$ and $a^{\struc[1]} = \bigcup_{i \in \rangeone{n}} a^{\struc[2]_{i}}$ for each $a \in A$,
    \item $\univ{\struc[2]_i} \cap \univ{\struc[2]_k} \subseteq \univ{\struc[2]_j} \mbox{ for all $1 \le i \le j \le k \le n$}$.
\end{itemize}
Each $\struc[2]_i$ is called a \intro*\kl{bag}.
\AP
The \intro*\kl(pathwidth){width} of $\mul{\struc[2]}$ is $\max_{i \in \rangeone{n}}(\card \univ{\struc[2]_i} - 1)$.
The \intro*\kl{pathwidth} $\intro*\pw(\struc[1])$ of a finite \kl{structure} $\struc[1]$ is defined as the minimum \kl(pathwidth){width} among \kl{path decompositions} of $\struc[1]$.
For instance, for the $\struc$ and $\mul{\struc[2]}$ in \Cref{figure: glue},
$\mul{\struc[2]}$ is a \kl{path decomposition} of $\struc$ of \kl(pathwidth){width} $2$.
\begin{figure}[h]
    \begin{align*}
    \struc =& \left(\begin{tikzpicture}[baseline = -3.5ex]
    \graph[grow right = .9cm, branch down = 3ex, nodes={}]{
    {1/{$\mathsf{1}$}[mynode, draw, circle]}-!-{/, 2/{$\mathsf{2}$}[mynode, draw, circle]}-!-{/, 3/{$\mathsf{3}$}[mynode, draw, circle], 4/{$\mathsf{4}$}[mynode, draw, circle]}-!-
    {5/{$\mathsf{5}$}[mynode, draw, circle]}-!-{/, 6/{$\mathsf{6}$}[mynode, draw, circle]}-!-{/, 7/{$\mathsf{7}$}[mynode, draw, circle], 8/{$\mathsf{8}$}[mynode, draw, circle]}-!-
    {9/{$\mathsf{9}$}[mynode, draw, circle]}-!-{/, a/{$\mathsf{a}$}[mynode, draw, circle]}-!-{/, b/{$\mathsf{b}$}[mynode, draw, circle], c/{$\mathsf{c}$}[mynode, draw, circle]}-!-
    {d/{$\mathsf{d}$}[mynode, draw, circle]}
    };
    \graph[use existing nodes, edges={color=black, pos = .5, earrow}, edge quotes={fill=white, inner sep=1pt,font= \scriptsize}]{
    1 -> 2 -> {3, 4};
    5 -> 6 -> {7, 8};
    9 -> a -> {b, c};
    1 -> 5 -> 9 -> d;
    };
\end{tikzpicture}\right), \\[1ex]
    \mul{\struc[2]} =& \left(\begin{tikzpicture}[baseline = -2.ex, remember picture]
        \graph[grow right = .7cm, branch down = 3ex, nodes={}]{
        {11/{$\mathsf{1}$}[mynode, draw, circle]}-!-{/, 12/{$\mathsf{2}$}[mynode, draw, circle]}-!-{/, 13/{$\mathsf{3}$}[mynode, draw, circle]}
        };
        \graph[use existing nodes, edges={color=black, pos = .5, earrow}, edge quotes={fill=white, inner sep=1pt,font= \scriptsize}]{
        11 -> 12 -> 13;
        };
    \end{tikzpicture}\right)
    \left(\begin{tikzpicture}[baseline = -2.ex, remember picture]
        \graph[grow right = .7cm, branch down = 3ex, nodes={}]{
        {21/{$\mathsf{1}$}[mynode, draw, circle]}-!-{/, 22/{$\mathsf{2}$}[mynode, draw, circle]}-!-{/, 24/{$\mathsf{4}$}[mynode, draw, circle]}
        };
        \graph[use existing nodes, edges={color=black, pos = .5, earrow}, edge quotes={fill=white, inner sep=1pt,font= \scriptsize}]{
        21 -> 22 -> 24;
        };
    \end{tikzpicture}\right)
    \left(\begin{tikzpicture}[baseline = -2.ex, remember picture]
        \graph[grow right = .7cm, branch down = 3ex, nodes={}]{
        {31/{$\mathsf{1}$}[mynode, draw, circle]}-!-{/}-!-
        {35/{$\mathsf{5}$}[mynode, draw, circle]}
        };
        \graph[use existing nodes, edges={color=black, pos = .5, earrow}, edge quotes={fill=white, inner sep=1pt,font= \scriptsize}]{
        31 -> 35;
        };
    \end{tikzpicture}\right)
    \left(\begin{tikzpicture}[baseline = -2.ex, remember picture]
        \graph[grow right = .7cm, branch down = 3ex, nodes={}]{
        {45/{$\mathsf{5}$}[mynode, draw, circle]}-!-{/, 46/{$\mathsf{6}$}[mynode, draw, circle]}-!-{/, 47/{$\mathsf{7}$}[mynode, draw, circle]}
        };
        \graph[use existing nodes, edges={color=black, pos = .5, earrow}, edge quotes={fill=white, inner sep=1pt,font= \scriptsize}]{
        45 -> 46 -> 47;
        };
    \end{tikzpicture}\right)
    \left(\begin{tikzpicture}[baseline = -2.ex, remember picture]
        \graph[grow right = .7cm, branch down = 3ex, nodes={}]{
        {55/{$\mathsf{5}$}[mynode, draw, circle]}-!-{/, 56/{$\mathsf{6}$}[mynode, draw, circle]}-!-{/, 58/{$\mathsf{8}$}[mynode, draw, circle]}
        };
        \graph[use existing nodes, edges={color=black, pos = .5, earrow}, edge quotes={fill=white, inner sep=1pt,font= \scriptsize}]{
        55 -> 56 -> 58;
        };
    \end{tikzpicture}\right)\\[3ex]
    &\left(\begin{tikzpicture}[baseline = -2.ex, remember picture]
        \graph[grow right = .7cm, branch down = 3ex, nodes={}]{
        {65/{$\mathsf{5}$}[mynode, draw, circle]}-!-{/}-!-
        {69/{$\mathsf{9}$}[mynode, draw, circle]}
        };
        \graph[use existing nodes, edges={color=black, pos = .5, earrow}, edge quotes={fill=white, inner sep=1pt,font= \scriptsize}]{
        65 -> 69;
        };
    \end{tikzpicture}\right)
    \left(\begin{tikzpicture}[baseline = -2.ex, remember picture]
        \graph[grow right = .7cm, branch down = 3ex, nodes={}]{
        {79/{$\mathsf{9}$}[mynode, draw, circle]}-!-{/, 7a/{$\mathsf{a}$}[mynode, draw, circle]}-!-{/, 7b/{$\mathsf{b}$}[mynode, draw, circle]}
        };
        \graph[use existing nodes, edges={color=black, pos = .5, earrow}, edge quotes={fill=white, inner sep=1pt,font= \scriptsize}]{
        79 -> 7a -> {7b};
        };
    \end{tikzpicture}\right)
    \left(\begin{tikzpicture}[baseline = -2.ex, remember picture]
        \graph[grow right = .7cm, branch down = 3ex, nodes={}]{
        {89/{$\mathsf{9}$}[mynode, draw, circle]}-!-{/, 8a/{$\mathsf{a}$}[mynode, draw, circle]}-!-{/, 8c/{$\mathsf{c}$}[mynode, draw, circle]}
        };
        \graph[use existing nodes, edges={color=black, pos = .5, earrow}, edge quotes={fill=white, inner sep=1pt,font= \scriptsize}]{
        89 -> 8a -> {8c};
        };
    \end{tikzpicture}\right)
    \left(\begin{tikzpicture}[baseline = -2.ex, remember picture]
        \graph[grow right = .7cm, branch down = 3ex, nodes={}]{
        {99/{$\mathsf{9}$}[mynode, draw, circle]}-!-{/}-!-
        {9d/{$\mathsf{d}$}[mynode, draw, circle]}
        };
        \graph[use existing nodes, edges={color=black, pos = .5, earrow}, edge quotes={fill=white, inner sep=1pt,font= \scriptsize}]{
        99 -> 9d;
        };
    \end{tikzpicture}\right),\\[1ex]
     &\quad\leadsto\quad \glue \mul{\struc[2]} = \left(\begin{tikzpicture}[baseline = -3.5ex]
    \graph[grow right = .7cm, branch down = 2.5ex, nodes={inner sep = 6pt, minimum size = 6pt}]{
    {1/{}[mynode, draw, circle]}-!-{/, 2/{}[mynode, draw, circle]}-!-{/, 3/{}[mynode, draw, circle], 4/{}[mynode, draw, circle]}-!-
    {5/{}[mynode, draw, circle]}-!-{/, 6/{}[mynode, draw, circle]}-!-{/, 7/{}[mynode, draw, circle], 8/{}[mynode, draw, circle]}-!-
    {9/{}[mynode, draw, circle]}-!-{/, a/{}[mynode, draw, circle]}-!-{/, b/{}[mynode, draw, circle], c/{}[mynode, draw, circle]}-!-
    {d/{}[mynode, draw, circle]}
    };
    \graph[use existing nodes, edges={color=black, pos = .5, earrow}, edge quotes={fill=white, inner sep=1pt,font= \scriptsize}]{
    1 -> 2 -> {3, 4};
    5 -> 6 -> {7, 8};
    9 -> a -> {b, c};
    1 -> 5 -> 9 -> d;
    };
\end{tikzpicture}\right).
    \end{align*}
    \begin{tikzpicture}[remember picture, overlay]
    \node[below = 5ex of 55.south, inner sep = 0, minimum size = 0](55b){};
    \node[above = 2ex of 65.north, inner sep = 0, minimum size = 0](65a){};
    \graph[use existing nodes, edges={color=olive, pos = .5, earrow, densely dashed, line width = 1pt,}, edge quotes={draw = gray, inner sep=1pt,font= \scriptsize}]{
                11 --[out = 45, in = 135, looseness = .5] 21 --[out = 45, in = 135, looseness = .5] 31;
                12 --[out = -45, in = -135, looseness = .5] 22;
                35 --[out = 45, in = 135, looseness = 1.] 45 --[out = 45, in = 135, looseness = 0.5] 55;
                55.south -- 55b -- 65a --[out = -135, in = 45, looseness = 0] 65.north;
                46 --[out = -45, in = -135, looseness = .5] 56;
                69 --[out = 45, in = 135, looseness = 1.] 79 --[out = 45, in = 135, looseness = .5] 89 --[out = 45, in = 135, looseness = .5] 99;
                7a --[out = -45, in = -135, looseness = .5] 8a;
            };
    \end{tikzpicture}     \caption{Illustrative example of \kl{path decomposition} and the gluing operator $\glue$.
    $\mul{\struc[2]}$ is a \kl{path decomposition} of $\struc[1]$.
    $\glue \mul{\struc[2]}$ is obtained from $\mul{\struc[2]}$ by gluing adjacent \kl{vertices} with the same label (indicated by dotted lines).}
    \label{figure: glue}
\end{figure}

\phantomintro{gluing operator}
In this paper, we will view \kl{path decompositions} as \emph{\kl{strings}}, based on, "eg", \cite{nakamuraPartialDerivativesGraphs2017,nakamuraDerivativesGraphsPositive2025}.
\AP \phantomintro{\bigglue}
We define $\intro*\glue \mul{\struc}$ \cite[Definition 5.1]{nakamuraDerivativesGraphsPositive2025} as the \kl{structure}
obtained from the \kl{disjoint union} of \kl{structures} in $\mul{\struc}$ by gluing \kl{vertices} having the same name in adjacent \kl{structures}.
We write $\intro*\quo{\tuple{i, x}}_{\sim}$ for the 
equivalence class of the vertex named by $x$ in $\mul{\struc}\getith{i}$ "wrt" the equivalence relation induced from the construction of $\glue \mul{\struc}$; more precisely, it is the minimal equivalence relation $\sim$
satisfying $\tuple{i, v} \sim \tuple{i+1, v}$ for all $i \in \range{1}{n-1}$ and $v \in \univ{\mul{\struc}\getith{i}} \cap \univ{\mul{\struc}\getith{i+1}}$.
More explicitly, its universe and relations are
\begin{align*}
\univ{\glue\mul{\struc}}
&= \set{\quo{\tuple{i,x}}_{\sim} \mid i \in \rangeone{n},\ x \in \univ{\struc_i}},\\
a^{\glue\mul{\struc}}
&= \set{\tuple{\quo{\tuple{i,x}}_{\sim},\quo{\tuple{i,y}}_{\sim}} \mid i \in \rangeone{n},\ \tuple{x,y} \in a^{\struc_i}}
\quad (a \in A).
\end{align*}
\begin{example}
For the $\mul{\struc[2]}$ and $\struc[1]$ in \Cref{figure: glue},
the "structure" $\glue \mul{\struc[2]}$ is illustrated as in \Cref{figure: glue} by gluing "vertices" according to the dotted lines.
We observe that the "structure" is isomorphic to $\struc[1]$.
In general, the \kl{gluing operator} transforms a given \kl{path decomposition} into the original \kl{structure} up to \kl[graph isomorphism]{isomorphisms}.
\end{example}
\begin{example}
    Below is another example of the \kl{gluing operator} $\glue$:
\[\bigglue \left(\begin{tikzpicture}[baseline = -3.5ex, remember picture]
    \graph[grow right = 1.cm, branch down = 5.5ex]{
    {1s1/{$1$}[vert]} -!- {1t1/{$2$}[vert], 1s4/{$4$}[vert, xshift = -.5cm]}
    };
    \graph[use existing nodes, edges={color=black, pos = .5, earrow}, edge quotes={fill=white, inner sep=1pt,font= \scriptsize}]{
        1s1 ->["$a$"] 1t1;
        1s4 ->["$b$"] 1s1;
    };
\end{tikzpicture} \right) \left(\begin{tikzpicture}[baseline = -3.5ex, remember picture]
    \graph[grow right = 1.cm, branch down = 5.5ex]{
    {2s1/{$2$}[vert]} -!- {2t1/{$3$}[vert], 2s4/{$4$}[vert, xshift = -.5cm]}
    };
    \graph[use existing nodes, edges={color=black, pos = .5, earrow}, edge quotes={fill=white, inner sep=1pt,font= \scriptsize}]{
        2s1 ->["$a$"] 2t1;
    };
\end{tikzpicture} \right) \left(\begin{tikzpicture}[baseline = -3.5ex, remember picture]
    \graph[grow right = 1.cm, branch down = 5.5ex]{
    {3s1/{$3$}[vert]} -!- {3t1/{$1$}[vert], 3s4/{$4$}[vert, xshift = -.5cm]}
    };
    \graph[use existing nodes, edges={color=black, pos = .5, earrow}, edge quotes={fill=white, inner sep=1pt,font= \scriptsize}]{
        3s1 ->["$a$"] 3t1 ->["$b$"] 3s4;
        3s4 ->["$a$"] 3s1;
    };
\end{tikzpicture} \right) = \left(\begin{tikzpicture}[baseline = -4.ex]
    \graph[grow right = 1.7cm, branch down = 7.5ex]{
    {1/{$[\tuple{1, 1}]_{\sim}$}[vert, font = \tiny]} -!-
    {2/{$[\tuple{1, 2}]_{\sim}$}[vert, font = \tiny]} -!-
    {3/{$[\tuple{2, 3}]_{\sim}$}[vert, font = \tiny], 5/{$[\tuple{3, 4}]_{\sim}$}[vert, font = \tiny, xshift = -1.cm]} -!-
    {4/{$[\tuple{3, 1}]_{\sim}$}[vert, font = \tiny]}
    };
    \graph[use existing nodes, edges={color=black, pos = .5, earrow}, edge quotes={fill=white, inner sep=1pt,font= \scriptsize}]{
        1 ->["$a$"] 2 ->["$a$"] 3 ->["$a$"] 4;
        4 -> ["$b$", out = -135, in = 0, looseness = 0] 5 ->["$b$", out = 180, in = -45, looseness = 0] 1;
        5 ->["$a$", out = 90, in = -135, looseness = 0.5] 3; 
    };
    \end{tikzpicture}
    \right).%
    \begin{tikzpicture}[remember picture, overlay]
    \graph[use existing nodes, edges={color=olive, pos = .5, earrow, densely dashed, line width = 1pt,}, edge quotes={draw = gray, inner sep=1pt,font= \scriptsize}]{
        1t1 --[out = 60, in = 120, looseness = 1.] 2s1;
        2t1 --[out = 60, in = 120, looseness = 1.] 3s1;
        1s4 --[out = -60, in = -120, looseness = .5] 2s4  --[out = -60, in = -120, looseness = .5] 3s4;
    };
    \end{tikzpicture}\]
    Note that the "vertices" labelled with $1$ are not glued since they are not adjacent.
\end{example}

\section{Relational Kleene Algebra with Graph Loop}\label{section: RKA with graph loop}
\AP
In this section,
we define \intro*\kl{relational Kleene algebra with graph loop} (\reintro*\kl{loop-RKA}).

\subparagraph*{Syntax}
\AP
Given a set $\vsig$ of \intro*\kl{variables},
the \intro*\kl(loop-RKA){terms} over $\vsig$ 
are defined as the terms over the signature $\set{\emp_{(0)}, \id_{(0)}, \union_{(2)}, \compo_{(2)}, \bl^{*}_{(1)}, \bl^{\lop}_{(1)}}$: 
\begin{align*}
    \term[1], \term[2], \term[3] \quad\Coloneqq\quad \aterm \mid \emp \mid \id \mid \term[1] \union \term[2] \mid \term[1] \compo \term[2] \mid \term[1]^* \mid \term[1]^{\lop}. \tag*{where $a \in \vsig$}
\end{align*}
We often abbreviate $\term[1] \compo \term[2]$ to $\term[1] \term[2]$.
We use parentheses in ambiguous situations.
We write $\sum_{i = 1}^{n} \term[1]_i$ for the term $\emp \union \term[1]_1 \union \dots \union \term[1]_n$.
We also let $\term[1]^{+} \defeq \term[1] \term[1]^*$.
\AP
An \intro*\kl{equation} $\term[1] = \term[2]$ is a pair of \kl(loop-RKA){terms}.
An \intro*\kl{inequation} $\term[1] \le \term[2]$ abbreviates the \kl{equation} $\term[1] \union \term[2] = \term[2]$.
The \intro*\kl{size} $\|\term\|$ of a \kl(loop-RKA){term} $\term$ is defined as the number of symbols occurring in $\term$.

\subparagraph*{Semantics}
\AP
For a \kl{structure} $\struc[1]$ over $\vsig$
and a \kl(loop-RKA){term} $\term$,
the \intro*\kl{semantics}\footnote{
For notational convenience, we use \kl{structures} as an alternative to algebras of binary relations.}
$\intro*\jump{\term}^{\struc} \subseteq \univ{\struc}^2$ is defined as follows:
\begin{gather*}
    \jump{a}^{\struc} \defeq a^{\struc}, \qquad
    \jump{\emp}^{\struc} \defeq \emptyset, \qquad
    \jump{\id}^{\struc} \defeq \diagonal_{\univ{\struc}}, \qquad
    \jump{\term[1] \union \term[2]}^{\struc} \defeq \jump{\term[1]}^{\struc} \cup \jump{\term[2]}^{\struc}, \\
    \jump{\term[1] \compo \term[2]}^{\struc} \defeq \jump{\term[1]}^{\struc} \compo \jump{\term[2]}^{\struc}, \qquad
    \jump{\term[1]^*}^{\struc} \defeq (\jump{\term[1]}^{\struc})^*, \qquad
    \jump{\term[1]^{\lop}}^{\struc} \defeq \jump{\term[1]}^{\struc} \cap \diagonal_{\univ{\struc}}.
\end{gather*}
Recall that $\REL$ is the class of all \kl{structures}.
\AP
For $\algclass \subseteq \REL$,
we write $\algclass \intro*\klmodels \term[1] = \term[2]$ if $\jump{\term[1]}^{\struc} = \jump{\term[2]}^{\struc}$ for all $\struc \in \algclass$.
In the sequel, we mainly consider the \intro*\kl{equational theory} "wrt" $\REL$.

\subparagraph*{Linearly bounded pathwidth model property}
\AP
The \intro*\kl{intersection width} $\intro*\iw(\term)$ \cite{gollerPDLIntersectionConverse2009} (where the \kl(loop-RKA){term} $\term^{\lop}$ is viewed as $\term \cap \id$) is defined as follows:
\begin{align*}
    \iw(\aterm) &\defeq 1  \mbox{ for $\aterm \in \vsig \cup \set{\id, \emp}$}, & 
    \iw(\term[1]^{*}) &\defeq \iw(\term[1]),\\
    \iw(\term[1] \mathbin{\heartsuit} \term[2]) &\defeq \max(\iw(\term[1]), \iw(\term[2])) \mbox{  for $\mathbin{\heartsuit} \in \set{\compo, \union}$}, &
    \iw(\term[1]^{\lop}) &\defeq \iw(\term[1]) + 1.
\end{align*}
For instance, $\iw(a^{\lop} b^{\lop} c^{\lop}) = 2$ and $\iw((c (b a^{\lop} b)^{\lop} c)^{\lop}) = 4$.
\begin{proposition}\label{proposition: iw}
    For all \kl(loop-RKA){terms} $\term$, we have $\iw(\term) \le \|\term\|$.
\end{proposition}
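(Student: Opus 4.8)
The plan is a routine structural induction on the term $\term$, following the grammar clause by clause and comparing $\iw(\term)$ with $\|\term\|$ in each case.

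In the base cases $\term \in \vsig \cup \set{\id, \emp}$ we have $\iw(\term) = 1 = \|\term\|$, so the claim holds (indeed with equality). For the inductive step, I would distinguish the four compound constructors. If $\term = \term[1] \mathbin{\heartsuit} \term[2]$ with $\mathbin{\heartsuit} \in \set{\compo, \union}$, then $\iw(\term) = \max(\iw(\term[1]), \iw(\term[2])) \le \max(\|\term[1]\|, \|\term[2]\|) \le \|\term[1]\| + \|\term[2]\| + 1 = \|\term\|$, using the induction hypothesis on $\term[1], \term[2]$ and the fact that \kl{sizes} are positive. If $\term = \term[1]^{*}$, then $\iw(\term) = \iw(\term[1]) \le \|\term[1]\| < \|\term[1]\| + 1 = \|\term\|$.

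The only case that merits a second glance is $\term = \term[1]^{\lop}$, since this is exactly where $\iw$ can strictly increase: here $\iw(\term) = \iw(\term[1]) + 1 \le \|\term[1]\| + 1 = \|\term\|$, where the inequality is the induction hypothesis and the final equality is the definition of $\|{\cdot}\|$. Thus the single extra symbol contributed by a $\bl^{\lop}$ (or by $\bl^{*}$) constructor is precisely what is needed to absorb the possible $+1$ to the \kl{intersection width}; in particular the bound is tight, as already witnessed by $\iw(a^{\lop}) = 2 = \|a^{\lop}\|$ and by iterated loops. I do not anticipate any genuine obstacle here: the statement is pure bookkeeping, and the delicate-looking $\bl^{\lop}$ clause works out exactly because its size cost matches its intersection-width cost.
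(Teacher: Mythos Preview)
Your proposal is correct and is exactly the routine structural induction the paper has in mind; the paper's own proof is simply ``By easy induction on $\term$.'' You have just spelled out the cases explicitly.
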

\begin{proof}
    By easy induction on $\term$.
\end{proof}

For a class $\mathcal{C} \subseteq \REL$, we let
$\mathcal{C}_{\pw \le k} = \set{\struc \in \mathcal{C} \mid \struc \text{ is finite and } \pw(\struc) \le k}$.
Because \kl{loop-RKA} can be viewed as a syntactic fragment of "PCoR*" \cite{pousPositiveCalculusRelations2018,nakamuraDerivativesGraphsPositive2025},
\AP
we have the \intro*\kl{linearly bounded pathwidth model property}:
\begin{proposition}[{\cite[Proposition 2.9]{nakamuraDerivativesGraphsPositive2025}}]\label{proposition: bounded pw property}
    For all \kl{loop-RKA terms}, $\term[1]$ and $\term[2]$,
    we have:
    \[\REL \klmodels \term[1] \le \term[2] \quad\iff\quad \REL_{\pw \le \iw(\term[1])} \klmodels \term[1] \le \term[2].\]
\end{proposition}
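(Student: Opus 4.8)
The plan is to establish the two implications separately. The direction $(\Rightarrow)$ is immediate, since $\REL_{\pw \le \iw(\term)} \subseteq \REL$. For $(\Leftarrow)$ I would argue by contraposition: from any \kl{structure} that refutes $\term[1] \le \term[2]$, I would build one of \kl{pathwidth} at most $\iw(\term[1])$. The engine of the argument is the graph-language characterization of the \kl{relational semantics}, available for \kl{loop-RKA} because it is a syntactic fragment of the positive calculus of relations with transitive closure \cite{pousPositiveCalculusRelations2018,nakamuraDerivativesGraphsPositive2025}: one assigns to each \kl{loop-RKA term} $\term$ a set $\glang(\term)$ of finite \kl{graphs} with two ports such that
\[
  \tuple{x, y} \in \jump{\term}^{\struc}
  \quad\iff\quad
  \exists\, \graph \in \glang(\term),\ \exists\, h \colon \graph \to \struc \text{ a graph homomorphism with } h(1^{\graph}) = x \text{ and } h(2^{\graph}) = y,
\]
where $\glang(\aterm)$ is the single $\aterm$-\kl{edge}, $\glang(\id)$ the single \kl{vertex}, $\glang(\emp) = \emptyset$, $\glang(\term[1] \union \term[2]) = \glang(\term[1]) \cup \glang(\term[2])$, $\glang(\term[1] \compo \term[2]) = \set{\graph[1] \series \graph[2] \mid \graph[1] \in \glang(\term[1]),\ \graph[2] \in \glang(\term[2])}$ with $\graph[1] \series \graph[2]$ the series composition (glue $2^{\graph[1]}$ with $1^{\graph[2]}$), $\glang(\term[1]^{*}) = \bigcup_{n \ge 0} \glang(\term[1]^{n})$, and $\glang(\term[1]^{\lop})$ is obtained from each $\graph \in \glang(\term[1])$ by identifying its two ports into one.

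\emph{The small-model step.} Suppose $\REL \not\klmodels \term[1] \le \term[2]$, witnessed by $\struc$ and $\tuple{x, y} \in \jump{\term[1]}^{\struc} \setminus \jump{\term[2]}^{\struc}$. Fix $\graph \in \glang(\term[1])$ and a \kl{graph homomorphism} $h \colon \graph \to \struc$ with $h(1^{\graph}) = x$ and $h(2^{\graph}) = y$. Regarding $\graph$ itself as a \kl{structure} (it is finite and non-empty), the identity map yields $\tuple{1^{\graph}, 2^{\graph}} \in \jump{\term[1]}^{\graph}$, while $\tuple{1^{\graph}, 2^{\graph}} \notin \jump{\term[2]}^{\graph}$: otherwise some $\graph[2] \in \glang(\term[2])$ and $g \colon \graph[2] \to \graph$ would map the ports to $1^{\graph}$ and $2^{\graph}$, and then $h \circ g \colon \graph[2] \to \struc$ would witness $\tuple{x, y} \in \jump{\term[2]}^{\struc}$, a contradiction. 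Thus $\graph$ is already a counter-model, and it remains to bound its \kl{pathwidth}.

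\emph{The pathwidth bound.} I would prove by induction on $\term$ the invariant that every $\graph \in \glang(\term)$ has a \kl{path decomposition} of \kl(pathwidth){width} at most $\iw(\term)$ whose first component contains $1^{\graph}$ and whose last component contains $2^{\graph}$; specialized to $\term[1]$ this gives $\pw(\graph) \le \iw(\term[1])$ and closes the proof. The base cases ($\aterm$, $\id$, $\emp$) use a single component of size at most $2$. For $\union$ one reuses the decomposition of the chosen disjunct, and $\iw$ takes the maximum. For $\term[1] \compo \term[2]$ one concatenates a decomposition of $\graph[1]$ whose last component exposes $2^{\graph[1]}$ with one of $\graph[2]$ whose first component exposes $1^{\graph[2]}$; since these two ports are glued, the path-decomposition conditions are preserved and the \kl(pathwidth){width} is again the maximum. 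The case $\term[1]^{*}$ reduces to finitely many compositions together with $\id$. The essential case is $\term[1]^{\lop}$: starting from a decomposition of $\graph \in \glang(\term[1])$ of \kl(pathwidth){width} $\le \iw(\term[1])$ with the two ports exposed at the two ends, identifying them produces a vertex that occurs in a prefix and in a suffix of the components but possibly not in between; inserting this vertex into every component restores the connectivity condition and increases the \kl(pathwidth){width} by at most $1$, hence to $\le \iw(\term[1]) + 1 = \iw(\term[1]^{\lop})$, with the single resulting port exposed at both ends — precisely the invariant.

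I expect the main obstacle to be the graph-language characterization itself. Its proof is a routine induction — homomorphisms compose, and $\glang$ is closed under series composition and port identification — but making it precise for exactly this fragment, or pinpointing the right statement in \cite{pousPositiveCalculusRelations2018,nakamuraDerivativesGraphsPositive2025}, is where the real content sits. The secondary subtlety is threading the port-exposure invariant through the induction so that the successive $+1$ increments of $\iw$ match the nesting depth of \kl{graph loops} (as in $(c (b a^{\lop} b)^{\lop} c)^{\lop}$), with no over- or under-counting; dropping it would break the composition and loop cases.
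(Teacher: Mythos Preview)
The paper does not give its own proof of this proposition: it simply notes that \kl{loop-RKA} is a syntactic fragment of \kl{PCoR*} and defers entirely to \cite[Proposition~2.9]{nakamuraDerivativesGraphsPositive2025}. Your proposal is a correct reconstruction of the argument behind that cited result --- the graph-language (homomorphism) semantics for the positive fragment, the small-model step via homomorphism composition, and the inductive pathwidth bound with the port-exposure invariant --- and in particular your treatment of the $\lop$ case (inserting the identified port into every bag, costing exactly the $+1$ that $\iw$ records) is the key step that makes the bound tight.
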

For instance, when $\term[1] = ((a (b^+)^{\lop} a)^{\lop} c)^+$ and $\term[2] = (a (b + bb) a c^*) \union (c a b a c^*)$,
to show $\REL \not\klmodels \term[1] \le \term[2]$,
it suffices to consider \kl{structures} in $\REL_{\pw \le 3}$ by $\iw(\term[1]) = 3$.
For instance,
the following \kl{structure} $\struc \in \REL_{\pw \le 3}$
satisfies $\tuple{x, y} \in \jump{\term[1]}^{\struc} \setminus \jump{\term[2]}^{\struc}$ (hence, $\REL \not\klmodels \term[1] \le \term[2]$):
\begin{center}
    $\struc ~=~ \begin{tikzpicture}[baseline = -6.0ex]
        \graph[grow right = 1.3cm, branch down = 4.5ex]{
        {12/{}[vert, xshift = .65cm]} -!- {13/{}[vert, xshift = .6cm]/,11/{}[vert], 1/{$x$}[vert]} -!- {22/{}[vert], 21/{}[vert], 2/{}[vert]} -!- {/, /, 3/{$y$}[vert]}
        };
        \graph[use existing nodes, edges={color=black, pos = .45, earrow}, edge quotes={fill=white, inner sep=1pt,font= \scriptsize}]{
            1 ->["$a$", bend left] 11 ->["$a$", bend left] 1;
            11 ->["$b$"] 12 ->["$b$"] 13 ->["$b$"] 11;
            2 ->["$a$", bend left] 21 ->["$a$", bend left] 2;
            21 ->["$b$", bend left] 22 ->["$b$", bend left] 21;
            1 ->["$c$"]  2 ->["$c$"] 3;
        };
    \end{tikzpicture}$
\end{center}

\section{Automata Expression for Loop-RKA}\label{section: loop-automata}
\AP
In this section,
we introduce an automaton model, named \emph{\kl{loop-automata}},
for representing the \kl{relational semantics} of \kl{loop-RKA}.
\kl[loop-automata]{Loop-automata} accept/reject ordered pairs of "vertices" in "graphs" ("structures").
Given two finite sets $\Sigma$ (of \kl{characters}) and $Q$ (of \intro*\kl{states}),
we consider the following set of transition labels:
\AP
\[\intro*\LoopLabel{\Sigma, Q} ~\defeq~ \Sigma \dcup \set{\id} \dcup \set{\ell_{\tuple{p, q}} \mid \tuple{p, q} \in Q^2}.\]
Each label $\aterm \in \Sigma$
is employed for the standard transitions.
The label $\id$ is employed for the ``epsilon''-transitions:
transitions that do not consume any input symbol.
Each label $\ell_{\tuple{p, q}}$ is employed for expressing \emph{loop-transitions}, which are conditional ``epsilon''-transitions: the transition is allowed only when there exists a \kl{run} from the \kl{state} $p$ on the current \kl{vertex} to $q$ on the same \kl{vertex}.

\begin{definition}\label{definition: loop-automata}
Let $\Sigma$ be a finite set of \kl{characters}.
A (non-deterministic) \intro*\kl{loop-automaton} $\automaton$ over $\Sigma$ is a \kl{graph} over $\LoopLabel{\Sigma, \univ{\automaton}}$ with $2$ ports, "ie",
a tuple $\tuple{\univ{\automaton}, \set{\alpha^{\automaton}}_{\alpha \in \LoopLabel{\Sigma, \univ{\automaton}}}, \src^{\automaton}, \tgt^{\automaton}}$ where
\begin{itemize}
    \item $\univ{\automaton}$ is a finite set of \kl{states},
    \item $\alpha^{\automaton} \subseteq \univ{\automaton}^2$ is a binary relation for expressing a transition function for each $\alpha \in \LoopLabel{\Sigma, \univ{\automaton}}$,
    \item $\intro*\src^{\automaton} \in \univ{\automaton}$ is the \kl{source} \kl{state},
    \item $\intro*\tgt^{\automaton} \in \univ{\automaton}$ is the \kl{target} \kl{state}.
\end{itemize}
\end{definition}

Given a \kl{loop-automaton} $\automaton$ and a \kl{structure} $\struc$ over a finite set $\Sigma$,
an $\struc$-\intro*\kl{run} (or, \kl{run} if $\struc$ is clear from the context) $\trace$ of $\automaton$ is
a non-empty sequence $x_0 \alpha_1 x_1 \dots x_{n-1} \alpha_{n} x_{n}$,
where
$x_i \in \univ{\struc}$ and $\alpha_i \in \LoopLabel{\Sigma, \univ{\automaton}}$ for each $i$.
We write $\src^{\trace} = x_0$ and $\tgt^{\trace} = x_n$.
For two $\struc$-\kl{runs} $\trace[1] = x_0 \alpha_1 \dots \alpha_{n} x_n$ and $\trace[2] = y_0 \beta_1 \dots \beta_{m} y_m$,
the \intro*\kl{coalesced product} $\trace[1] \series \trace[2]$ is partially defined as follows:
$\trace[1] \series \trace[2] \defeq x_0 \alpha_1 \dots \alpha_{n} x_n \beta_1 \dots \beta_{m} y_m$
if $x_n = y_0$ and undefined otherwise.
\AP
The relation $p \intro*\transrel{\struc, \automaton}{\trace} q$ (or, $p \transrel{\struc}{\trace} q$ if $\automaton$ is clear)
is defined as the minimal ternary relation of $p, q \in \univ{\automaton}$ and a \kl{run} $\trace$, closed under the following rules:
\begin{gather*}
    \begin{prooftree}
        \hypo{p \in \univ{\automaton} \mbox{ and } x \in \univ{\struc}}
        \infer1[R]{p \transrel{\struc}{x} p}
    \end{prooftree}
    \qquad
    \begin{prooftree}
        \hypo{\tuple{p, q} \in \alpha^{\automaton} \mbox{ and } \tuple{x, y} \in \jump{\alpha}^{\struc}}
        \infer1[$\alpha$]{p \transrel{\struc}{x \alpha y} q}
    \end{prooftree} \text{ for $\alpha \in \Sigma \dcup \set{\id}$}
    \\
    \begin{prooftree}
        \hypo{p \transrel{\struc}{\trace[1]} q}
        \hypo{q \transrel{\struc}{\trace[2]} r}
        \infer2[T]{p \transrel{\struc}{\trace[1] \series \trace[2]} r}
    \end{prooftree}
    \qquad
    \begin{prooftree}
        \hypo{\tuple{p, q} \in \ell_{\tuple{p', q'}}^{\automaton} \mbox{, } p' \transrel{\struc}{\trace[1]} q' \mbox{, and } \src^{\trace[1]} = \tgt^{\trace[1]} = x \in \univ{\struc}}
        \infer1[$\ell_{\tuple{p', q'}}$]{p \transrel{\struc}{x \ell_{\tuple{p', q'}} x} q}
    \end{prooftree}
\end{gather*}
The \kl{semantics} $\jump{\automaton}^{\struc} \subseteq \univ{\struc}^2$ is given as follows:
\[\jump{\automaton}^{\struc} ~\defeq~ \set{\tuple{x, y} \in \univ{\struc}^2 \mid \src^{\automaton} \transrel{\struc}{\trace[1]} \tgt^{\automaton} \mbox{ for some \kl{run} $\trace[1]$ "st" $\src^{\trace[1]} = x$ and $\tgt^{\trace[1]} = y$}}.\]

\begin{example}\label{example: loop-automata}
Let $\term \defeq a a (b a)^{\lop} b b$.
Then the following $\loopRKAautomaton{\term}$ is the corresponding \kl{loop-automaton} (i.e., $\jump{\loopRKAautomaton{\term}}^{\struc} = \jump{\term}^{\struc}$ holds for every \kl{structure} $\struc$):

\[\loopRKAautomaton{\term} \defeq \left(\begin{tikzpicture}[baseline = -2.5ex]
        \graph[grow right = 1.9cm, branch down = 4ex, nodes={}]{
        {/,0/{$p_1$}[mynode, draw, circle]}-!-{0'/{$q_1$}[mynode, draw, circle, xshift = 0.9cm], 1/{$p_2$}[mynode, draw, circle]}-!-{1'/{$q_2$}[mynode, draw, circle, , xshift = 0.9cm], 2/{$p_3$}[mynode, draw, circle]}-!-{2'/{$q_3$}[mynode, draw, circle, xshift = 0.9cm],3/{$p_4$}[mynode, draw, circle]}-!-{/,4/{$p_5$}[mynode, draw, circle]}-!-{/,5/{$p_6$}[mynode, draw, circle]}
        };
        \node[left = .5em of 0](l){};
        \node[right = .5em of 5](r){};
        \graph[use existing nodes, edges={color=black, pos = .5, earrow}, edge quotes={fill=white, inner sep=1pt,font= \scriptsize}]{
        0' ->["$b$"] 1' ->["$a$"] 2';
        0 ->["$a$"] 1 ->["$a$"] 2 ->["$\ell_{\tuple{q_1, q_3}}$"] 3 ->["$b$"] 4 ->["$b$"] 5;
        l -> 0; 5 -> r;
        };
    \end{tikzpicture}\right).\]
    For instance, let $\struc \defeq \glue (\textcolor{red!70}{\struc[2]_1} \textcolor{blue!50}{\struc[2]_2})$ where
    $\textcolor{red!70}{\struc[2]_1} = \left(\begin{tikzpicture}[baseline = -.5ex]
        \graph[grow right = 1.cm, branch down = 2.5ex]{
        {s1/{$1$}[vert, text = red!70]} -!- {t1/{$2$}[vert]}
        };
        \graph[use existing nodes, edges={color=black, pos = .5, earrow}, edge quotes={fill=white, inner sep=1pt,font= \scriptsize}]{
            s1 ->["$a$", bend right] t1;
            t1 ->["$b$", bend right] s1;
        };
    \end{tikzpicture} \right)$
     and 
    $\textcolor{blue!50}{\struc[2]_2} = \left(\begin{tikzpicture}[baseline = -.5ex]
        \graph[grow right = 1.cm, branch down = 2.5ex]{
        {s1/{$2$}[vert]} -!- {t1/{$3$}[vert, text = blue!50]}
        };
        \graph[use existing nodes, edges={color=black, pos = .5, earrow}, edge quotes={fill=white, inner sep=1pt,font= \scriptsize}]{
            s1 ->["$a$", bend right] t1;
            t1 ->["$b$", bend right] s1;
        };
    \end{tikzpicture} \right)$.
    We just write
    $\quo{\tuple{1, 1}}_{\sim}$,
    $\quo{\tuple{1, 2}}_{\sim}$ ($= \quo{\tuple{2, 2}}_{\sim}$), and
    $\quo{\tuple{2, 3}}_{\sim}$,
    as $\tilde{1}$, $\tilde{2}$, and $\tilde{3}$, respectively, for short.
    We then have $\tuple{\tilde{1}, \tilde{1}} \in \jump{\term}^{\struc}$.
    We also have $\tuple{\tilde{1}, \tilde{1}} \in \jump{\loopRKAautomaton{\term}}^{\struc}$,
    "eg", by the following derivation tree:
    \begin{center}
        \scalebox{1}{\begin{prooftree}[separation = .6em]
            \hypo{p_1 \transrel{\struc}{\tilde{\textcolor{red!70}{1}} a \tilde{2}} p_2}
            \hypo{p_2 \transrel{\struc}{\tilde{2} a \tilde{\textcolor{blue!50}{3}}} p_3}
            \infer2[T]{p_1 \transrel{\struc}{\tilde{\textcolor{red!70}{1}} a \tilde{2} a \tilde{\textcolor{blue!50}{3}}} p_3}
            \hypo{q_1 \transrel{\struc}{\tilde{\textcolor{blue!50}{3}} b \tilde{2}} q_2}
            \hypo{q_2 \transrel{\struc}{\tilde{2} a \tilde{\textcolor{blue!50}{3}}} q_3}
            \infer2[T]{q_1 \transrel{\struc}{\tilde{\textcolor{blue!50}{3}} b \tilde{2} a \tilde{\textcolor{blue!50}{3}}} q_3}
            \infer1[$\ell_{\tuple{q_1, q_3}}$]{p_3 \transrel{\struc}{\tilde{\textcolor{blue!50}{3}} \ell_{\tuple{q_1, q_3}} \tilde{\textcolor{blue!50}{3}}} p_4}
            \hypo{p_4 \transrel{\struc}{\tilde{\textcolor{blue!50}{3}} b \tilde{2}} p_5}
            \hypo{p_5 \transrel{\struc}{\tilde{2} b \tilde{\textcolor{red!70}{1}}} p_6}
            \infer2[T]{p_4 \transrel{\struc}{\tilde{\textcolor{blue!50}{3}} b \tilde{2} b \tilde{\textcolor{red!70}{1}}} p_6}
            \infer[double]3[T]{p_1 \transrel{\struc}{\tilde{\textcolor{red!70}{1}} a \tilde{2} a \tilde{3} \ell_{\tuple{q_1, q_3}} \tilde{3} b \tilde{2} b \tilde{\textcolor{red}{1}}} p_6}
        \end{prooftree}}
    \end{center}
\end{example}
Generalizing \Cref{example: loop-automata},
we can transform every \kl{loop-RKA term} into a \kl{loop-automaton}, by extending the McNaughton--Yamada--Thompson construction \cite{mcnaughtonRegularExpressionsState1960,thompsonProgrammingTechniquesRegular1968}
for the \kl{graph loop} $\term^{\lop}$.
\knowledgeconfigure {quotation=false}
\begin{defi}\label{defi: term to automata}
    For a \kl{loop-RKA term} $\term$,
    we define the \kl{loop-automaton} $\intro*\loopRKAautomaton{\term}$, as follows:\footnote{%
For simplicity, this graphical definition is given modulo renaming of \kl{states} by hiding names (note that $\ell_{\tuple{p, q}}$ also uses names of \kl{states}).}
    \begin{gather*}
        \loopRKAautomaton{a} \defeq \begin{tikzpicture}[baseline = -.5ex]
            \graph[grow right = 1.cm, branch down = 6ex, nodes={mynode}]{
            {0/{}[draw, circle]}-!-{1/{}[draw, circle]}
            };
            \node[left = .5em of 0](l){};
            \node[right = .5em of 1](r){};
            \graph[use existing nodes, edges={color=black, pos = .5, earrow}, edge quotes={fill=white, inner sep=1pt,font= \scriptsize}]{
            0 ->["$a$"] 1;
            l -> 0; 1 -> r;
            };
        \end{tikzpicture}, \quad
        \loopRKAautomaton{\emp} \defeq \begin{tikzpicture}[baseline = -.5ex]
            \graph[grow right = 1.cm, branch down = 6ex, nodes={mynode}]{
            {0/{}[draw, circle]}-!-{1/{}[draw, circle]}
            };
            \node[left = .5em of 0](l){};
            \node[right = .5em of 1](r){};
            \graph[use existing nodes, edges={color=black, pos = .5, earrow}, edge quotes={fill=white, inner sep=1pt,font= \scriptsize}]{
            l -> 0; 1 -> r;
            };
        \end{tikzpicture}, \quad
        \loopRKAautomaton{\id} \defeq \begin{tikzpicture}[baseline = -.5ex]
            \graph[grow right = 1.cm, branch down = 6ex, nodes={mynode}]{
            {0/{}[draw, circle]}-!-{1/{}[draw, circle]}
            };
            \node[left = .5em of 0](l){};
            \node[right = .5em of 1](r){};
            \graph[use existing nodes, edges={color=black, pos = .5, earrow}, edge quotes={fill=white, inner sep=1pt,font= \scriptsize}]{
            0 ->["$\id$"] 1;
            l -> 0; 1 -> r;
            };
        \end{tikzpicture},\\ 
        \loopRKAautomaton{\term[1] \union \term[2]} \defeq \begin{tikzpicture}[baseline = -2.5ex]
            \graph[grow right = 1.cm, branch down = 2ex, nodes={}]{
            {/,0/{}[mynode, draw, circle]}-!-{11/{}[mynode, draw, circle],/,12/{}[mynode, draw, circle]}-!-{21/{}[mynode, draw, circle],/,22/{}[mynode, draw, circle]}-!-{/,3/{}[mynode, draw, circle]}
            };
            \node[left = .5em of 0](l){};
            \node[right = .5em of 3](r){};
            \graph[use existing nodes, edges={color=black, pos = .5, earrow}, edge quotes={fill=white, inner sep=1pt,font= \scriptsize}]{
            0 ->["$\id$"] {11, 12};
            {21,22} ->["$\id$"] {3};
            {11} ->["$\loopRKAautomaton{\term[1]}$"{draw,rectangle}] {21};
            {12} ->["$\loopRKAautomaton{\term[2]}$"{draw,rectangle}] {22};
            l -> 0; 3 -> r;
            };
        \end{tikzpicture}, \quad
                \loopRKAautomaton{\term[1] \compo \term[2]} \defeq \begin{tikzpicture}[baseline = -.5ex]
            \graph[grow right = 1.cm, branch down = 6ex, nodes={mynode}]{
            {0/{}[draw, circle]}-!-{1/{}[draw, circle]}-!-{2/{}[draw, circle]}
            };
            \node[left = .5em of 0](l){};
            \node[right = .5em of 2](r){};
            \graph[use existing nodes, edges={color=black, pos = .5, earrow}, edge quotes={fill=white, inner sep=1pt,font= \scriptsize}]{
            0 ->["$\loopRKAautomaton{\term[1]}$"{draw,rectangle}] 1;
            1 ->["$\loopRKAautomaton{\term[2]}$"{draw,rectangle}] 2;
            l -> 0; 2 -> r;
            };
        \end{tikzpicture}, \\
        \loopRKAautomaton{\term[1]^*} \defeq \begin{tikzpicture}[baseline = -.5ex]
            \graph[grow right = 1.cm, branch down = 2ex, nodes={}]{
            {0/{}[mynode, draw, circle]}-!-{1/{}[mynode, draw, circle]}-!-{2/{}[mynode, draw, circle]}-!-{3/{}[mynode, draw, circle]}
            };
            \node[left = .5em of 0](l){};
            \node[right = .5em of 3](r){};
            \graph[use existing nodes, edges={color=black, pos = .5, earrow}, edge quotes={fill=white, inner sep=1pt,font= \scriptsize}]{
            0 ->["$\id$"] 1 ->["$\loopRKAautomaton{\term}$"{draw,rectangle}] 2 ->["$\id$"] 3;
            2 ->["$\id$", bend right = 60] 1;
            l -> 0; 3 -> r;
            0 ->["$\id$", bend right = 20, out = -45, in = -135, looseness = .5] 3;
            };
        \end{tikzpicture}, \quad
        \loopRKAautomaton{\term^{\lop}} \defeq \begin{tikzpicture}[baseline = -2.ex]
            \graph[grow right = 1.5cm, branch down = 3ex, nodes={}]{
            {u0/{$p'$}[mynode, draw, circle], 0/{}[mynode, draw, circle]}-!-{u1/{$q'$}[mynode, draw, circle], 1/{}[mynode, draw, circle]}
            };
            \node[left = .5em of 0](l){};
            \node[right = .5em of 1](r){};
            \graph[use existing nodes, edges={color=black, pos = .5, earrow}, edge quotes={fill=white, inner sep=1pt,font= \scriptsize}]{
            u0 ->["$\loopRKAautomaton{\term}$"{draw,rectangle}] u1;
            0 ->["$\ell_{\tuple{p', q'}}$"] 1;
            l -> 0; 1 -> r;
            };
        \end{tikzpicture} \mbox{where $p', q'$ are \kl{fresh}.}
    \end{gather*}
\end{defi}
\knowledgeconfigure {quotation}
The construction is computable in time $\mathcal{O}(\|\term\|)$ and produces
$\mathcal{O}(\|\term\|)$ states and transitions.

\begin{proposition}\label{prop: automata}
    For every \kl(loop-RKA){term} $\term$ and every \kl{structure} $\struc$,
    we have $\jump{\term}^{\struc} = \jump{\loopRKAautomaton{\term}}^{\struc}$.
\end{proposition}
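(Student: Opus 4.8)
The plan is to prove $\jump{\term}^{\struc} = \jump{\loopRKAautomaton{\term}}^{\struc}$ by induction on the structure of the \kl{loop-RKA term} $\term$; this is the usual correctness proof for the McNaughton--Yamada--Thompson construction, with a new case for the \kl{graph loop}. It is convenient to carry along, also by induction on $\term$, the following structural invariant of \Cref{defi: term to automata}: in $\loopRKAautomaton{\term}$ the \kl{state} $\src^{\loopRKAautomaton{\term}}$ has no incoming transitions and the \kl{state} $\tgt^{\loopRKAautomaton{\term}}$ has no outgoing transitions, and whenever a sub-automaton $\loopRKAautomaton{\term[1]}$ is plugged into a bigger one its \kl{states} are disjoint from all the other \kl{states} of that bigger automaton (we use \kl{fresh} \kl{states}) and it touches the rest only through the explicitly drawn $\id$- or $\ell$-transitions at its \kl{source} and \kl{target}. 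The base cases $\term \in \set{a, \emp, \id}$ are immediate, since $\loopRKAautomaton{\term}$ then has at most one transition, between $\src^{\loopRKAautomaton{\term}}$ and $\tgt^{\loopRKAautomaton{\term}}$: the \kl{runs} from \kl{source} to \kl{target} produce exactly $a^{\struc} = \jump{a}^{\struc}$, $\emptyset = \jump{\emp}^{\struc}$, and $\diagonal_{\univ{\struc}} = \jump{\id}^{\struc}$, respectively.

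For the inductive cases, the key ingredient is a \emph{locality lemma} for \kl{runs}: thanks to the invariant, any \kl{run} from $\src^{\loopRKAautomaton{\term}}$ to $\tgt^{\loopRKAautomaton{\term}}$ decomposes, via rule~$T$ and the \kl{coalesced product}, into \kl{runs} each of which stays inside a single plugged-in sub-automaton, linked by the explicit connecting transitions; conversely, such local \kl{runs} compose back. For $\term[1] \compo \term[2]$, the \kl{state} shared by $\loopRKAautomaton{\term[1]}$ and $\loopRKAautomaton{\term[2]}$ is $\tgt^{\loopRKAautomaton{\term[1]}} = \src^{\loopRKAautomaton{\term[2]}}$, which by the invariant has no outgoing transition inside $\loopRKAautomaton{\term[1]}$ and no incoming one inside $\loopRKAautomaton{\term[2]}$; hence every \kl{run} from \kl{source} to \kl{target} of $\loopRKAautomaton{\term[1] \compo \term[2]}$ visits it exactly once and so splits as a \kl{run} of $\loopRKAautomaton{\term[1]}$ followed by a \kl{run} of $\loopRKAautomaton{\term[2]}$. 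This yields $\jump{\loopRKAautomaton{\term[1] \compo \term[2]}}^{\struc} = \jump{\loopRKAautomaton{\term[1]}}^{\struc} \compo \jump{\loopRKAautomaton{\term[2]}}^{\struc}$, which equals $\jump{\term[1]}^{\struc} \compo \jump{\term[2]}^{\struc} = \jump{\term[1] \compo \term[2]}^{\struc}$ by the induction hypothesis. The cases $\term[1] \union \term[2]$ and $\term[1]^{*}$ are analogous: a \kl{run} is split according to which branch it takes, respectively according to how many times it goes around the back $\id$-edge through $\loopRKAautomaton{\term[1]}$.

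The main obstacle is the new case $\term^{\lop}$. Here $\loopRKAautomaton{\term^{\lop}}$ consists of a \kl{fresh} \kl{source} $0$ and \kl{target} $1$ joined by one edge labelled $\ell_{\tuple{p', q'}}$, plus a copy of $\loopRKAautomaton{\term}$ (with \kl{source} $p'$ and \kl{target} $q'$) that is \emph{disconnected} from $\set{0, 1}$. The only \kl{run} from $0$ to $1$ has the form $x\, \ell_{\tuple{p', q'}}\, x$, and by the rule for $\ell_{\tuple{p', q'}}$ it is derivable precisely when $p' \transrel{\struc}{\trace[1]} q'$ for some \kl{run} $\trace[1]$ with $\src^{\trace[1]} = \tgt^{\trace[1]} = x$. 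By the locality lemma applied to the disconnected copy, such $\trace[1]$ exist iff $\tuple{x, x} \in \jump{\loopRKAautomaton{\term}}^{\struc}$, hence, by the induction hypothesis, iff $\tuple{x, x} \in \jump{\term}^{\struc}$; therefore $\jump{\loopRKAautomaton{\term^{\lop}}}^{\struc} = \set{\tuple{x, x} \mid \tuple{x, x} \in \jump{\term}^{\struc}} = \jump{\term}^{\struc} \cap \diagonal_{\univ{\struc}} = \jump{\term^{\lop}}^{\struc}$. The point that needs care is that $\transrel{\struc}{\cdot}$ is the \emph{least} relation closed under the rules, and the rule for $\ell_{\tuple{p', q'}}$ refers back to that very relation, so the argument must not be circular. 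I would therefore prove the locality lemma --- in particular the direction asserting that a derivation of $p \transrel{\struc, \loopRKAautomaton{\term^{\lop}}}{\trace} q$ with $p, q \in \univ{\loopRKAautomaton{\term}}$ already lives inside the copy of $\loopRKAautomaton{\term}$ --- by induction on the height of the derivation, using the structural invariant (no incoming transition to $0$, no outgoing from $1$) to exclude any detour through $0$, $1$, or the edge $\tuple{0, 1}$.
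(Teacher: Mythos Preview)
Your proof is correct and follows the same inductive approach as the paper's: induction on $\term$, with the classical McNaughton--Yamada--Thompson cases treated as routine and the $\term^{\lop}$ case as the only novelty. The paper is considerably terser --- it writes only the $\lop$ case and says ``by construction of $\loopRKAautomaton{\term[2]^{\lop}}$'' precisely at the step where you spell out the locality lemma and the induction on derivation height needed to handle the self-reference in the $\ell$-rule --- but the underlying argument is the same.
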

\begin{proof}
    By easy induction on $\term$.
    As the other cases are the same as the McNaughton--Yamada--Thompson construction,
    we only write the case of $\term[1] = \term[2]^{\lop}$.
    First, we have both $\jump{\term[2]^{\lop}}^{\struc} \subseteq \diagonal_{\univ{\struc}}$ and $\jump{\loopRKAautomaton{\term[2]^{\lop}}}^{\struc} \subseteq \diagonal_{\univ{\struc}}$ by the form of $\term[2]^{\lop}$ and $\loopRKAautomaton{\term[2]^{\lop}}$, respectively.
    We also have:
    $\tuple{x, x} \in \jump{\term[2]^{\lop}}^{\struc}$ "iff"
    $\tuple{x, x} \in \jump{\term[2]}^{\struc}$ "iff"
    $\tuple{x, x} \in \jump{\loopRKAautomaton{\term[2]}}^{\struc}$ (IH) "iff"
    there is a \kl{run} $\trace$ of $\loopRKAautomaton{\term[2]}$ on $\struc$ such that $\src^{\trace} = \tgt^{\trace} = x$
    and $\src^{\loopRKAautomaton{\term[2]}} \transrel{\struc}{\trace} \tgt^{\loopRKAautomaton{\term[2]}}$
    "iff"
    $\src^{\loopRKAautomaton{\term[2]^{\lop}}}
    \transrel{\struc}{x \ell_{\tuple{\src^{\loopRKAautomaton{\term[2]}}, \tgt^{\loopRKAautomaton{\term[2]}}}} x}
    \tgt^{\loopRKAautomaton{\term[2]^{\lop}}}$ (by construction of $\loopRKAautomaton{\term[2]^{\lop}}$) "iff"
    $\tuple{x, x} \in \jump{\loopRKAautomaton{\term[2]^{\lop}}}^{\struc}$.
    Hence, this case has been proved.
\end{proof}
From \Cref{prop: automata},
in the sequel,
we use \kl{loop-automata} instead of \kl(loop-RKA){terms} in \kl{loop-RKA}.
\section{Decomposition Theorem}\label{section: decomposing derivatives}
Let $\mul{\struc} = \struc_1 \dots \struc_n \in \REL^{+}$.
We consider the glued \kl{structure} $\glue \mul{\struc}$ (recall \Cref{section: preliminaries}).
For each $i \in \rangeone{n}$, we define the $3$-partition $\set{\intro*\univc{\mul{\struc}}{i}{}, \intro*\univl{\mul{\struc}}{i}{}, \intro*\univr{\mul{\struc}}{i}{}}$ of $\univ{\glue \mul{\struc}}$ (whose parts may be empty):
\begin{align*}
\univc{\mul{\struc}}{i}{} &\defeq \set{[\tuple{i, \lab}]_{\sim_{\glue \mul{\struc}}} \mid \lab \in \univ{\mul{\struc}\getith{i}}},\\ 
\univl{\mul{\struc}}{i}{} &\defeq (\bigcup_{j = 1}^{i} \univc{\mul{\struc}}{j}{}) \setminus \univc{\mul{\struc}}{i}{},& \hspace{-.5em} 
\univr{\mul{\struc}}{i}{} &\defeq (\bigcup_{j = i}^{n} \univc{\mul{\struc}}{j}{}) \setminus \univc{\mul{\struc}}{i}{}.
\end{align*}
In this section, we introduce the following relation $\decomposedtransrel{\mul{\struc}, \automaton}{\trace}$.
\begin{defi}\label{definition: decomposition}
    Let $\mul{\struc} = \struc_1 \dots \struc_n \in \REL^{+}$.
    The relation $p \intro*\decomposedtransrel{\mul{\struc}, \automaton}{\trace} q$ (or, $p \decomposedtransrel{\mul{\struc}}{\trace} q$ if $\automaton$ is clear)
    is defined as the minimal ternary relation of $p, q \in \univ{\automaton}$ and an $\glue \mul{\struc}$-\kl{run} $\trace$
    such that $\tuple{\src^{\trace}, \tgt^{\trace}} \in \bigcup_{i = 1}^n \univc{\mul{\struc}}{i}{2}$,
    closed under the following rules:
    \begin{gather*}
        \scalebox{.93}{\begin{prooftree}
            \hypo{p \in \univ{\automaton} \mbox{ and } x \in \univ{\mul{\struc}\getith{i}}}
            \infer1[R]{p \decomposedtransrel{\mul{\struc}}{\quo{\tuple{i,x}}_{\sim}} p}
        \end{prooftree}}
        \quad
        \scalebox{.93}{\begin{prooftree}
            \hypo{\tuple{p, q} \in \alpha^{\automaton} \mbox{ and } \tuple{x, y} \in \jump{\alpha}^{\mul{\struc}\getith{i}}}
            \infer1[$\alpha$]{p \decomposedtransrel{\mul{\struc}}{\quo{\tuple{i, x}}_{\sim} \alpha \quo{\tuple{i, y}}_{\sim}} q}
        \end{prooftree}}
        \text{ for $\alpha \in \Sigma \dcup \set{\id}$}
        \\
        \scalebox{.93}{\begin{prooftree}
            \hypo{p \decomposedtransrel{\mul{\struc}}{\trace[1]} q}
            \hypo{q \decomposedtransrel{\mul{\struc}}{\trace[2]} r}
            \infer2[T]{p \decomposedtransrel{\mul{\struc}}{\trace[1] \series \trace[2]} r}
        \end{prooftree}}
        \quad
        \scalebox{.93}{\begin{prooftree}
            \hypo{\tuple{p, q} \in \ell_{\tuple{p', q'}}^{\automaton}, p' \decomposedtransrel{\mul{\struc}}{\trace[1]} q', \mbox{ and } \src^{\trace[1]} = \tgt^{\trace[1]} = \quo{\tuple{i, x}}_{\sim} \in \univ{\glue \mul{\struc}}}
            \infer1[$\ell_{\tuple{p', q'}}$]{p \decomposedtransrel{\mul{\struc}}{\quo{\tuple{i, x}}_{\sim} \ell_{\tuple{p', q'}} \quo{\tuple{i, x}}_{\sim}} q}
        \end{prooftree}}
    \end{gather*}
\end{defi}
The relation $\decomposedtransrel{\mul{\struc}}{\trace}$
is a ``decomposed'' version of $\transrel{\glue \mul{\struc}}{\trace}$
in that \kl{runs} $\trace$ in $\decomposedtransrel{\mul{\struc}}{\trace}$ are
constructed by composing \kl{runs} such that $\tuple{\src^{\trace}, \tgt^{\trace}} \in \bigcup_{i = 1}^n \univc{\mul{\struc}}{i}{2}$.
The following theorem shows that $\decomposedtransrel{\mul{\struc}}{\trace}$ coincides with $\transrel{\glue \mul{\struc}}{\trace}$
(under the restriction of \kl{runs} $\trace$ that $\tuple{\src^{\trace}, \tgt^{\trace}} \in \bigcup_{i = 1}^n \univc{\mul{\struc}}{i}{2}$).
\begin{restatable}[Decomposition theorem]{theorem}{decompositiontheorem}\label{theorem: decomposition}
    Let $\automaton$ be a \kl{loop-automaton} and let $\mul{\struc} = \struc_1 \dots \struc_n \in \REL^{+}$.
    For all $p, q \in \univ{\automaton}$ and \kl{runs} $\trace$ on $\glue\mul{\struc}$ such that $\tuple{\src^{\trace}, \tgt^{\trace}} \in \bigcup_{i = 1}^n \univc{\mul{\struc}}{i}{2}$,
    we have:
    \[p \transrel{\glue \mul{\struc}}{\trace} q \quad\iff\quad p \decomposedtransrel{\mul{\struc}}{\trace} q.\]
\end{restatable}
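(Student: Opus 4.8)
The plan is to prove both directions by induction on the derivation of the respective ternary relation, using the rule-based (inductive) definitions of $\transrel{\glue\mul{\struc}}{\trace}$ and $\decomposedtransrel{\mul{\struc}}{\trace}$. The direction ($\Leftarrow$), i.e.\ $p \decomposedtransrel{\mul{\struc}}{\trace} q$ implies $p \transrel{\glue\mul{\struc}}{\trace} q$, should be the easy one: every rule defining $\decomposedtransrel{\mul{\struc}}{\trace}$ is an ``annotated'' instance of the corresponding rule for $\transrel{\glue\mul{\struc}}{\trace}$, once we observe that $\tuple{x,y}\in\jump{a}^{\mul{\struc}\getith{i}}$ entails $\tuple{\quo{\tuple{i,x}}_\sim,\quo{\tuple{i,y}}_\sim}\in\jump{a}^{\glue\mul{\struc}}$ (because $\glue$ takes the union of the edge relations after gluing), and the $\ell_{\tuple{p',q'}}$ and $\mathsf{T}$ rules translate verbatim. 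So I would dispatch this direction quickly by a straightforward induction on the $\decomposedtransrel{\mul{\struc}}{}$-derivation, discarding the extra side-condition $\tuple{\src^\trace,\tgt^\trace}\in\bigcup_i\univc{\mul{\struc}}{i}{2}$.

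The substantive direction is ($\Rightarrow$): given $p \transrel{\glue\mul{\struc}}{\trace} q$ with endpoints both lying in some common $\univc{\mul{\struc}}{i}{}$, reconstruct a $\decomposedtransrel{\mul{\struc}}{}$-derivation of the same $\trace$. Here the obstacle is that an arbitrary $\transrel{\glue\mul{\struc}}{}$-derivation may pass through intermediate vertices that do \emph{not} lie in any single block $\univc{\mul{\struc}}{i}{}$ simultaneously with the endpoints; the composition (rule $\mathsf{T}$) could split $\trace$ at a vertex $z\in\univ{\glue\mul{\struc}}$ with $z\notin\univc{\mul{\struc}}{i}{}$. To handle this I would first prove a \emph{cut/splitting lemma}: whenever $p\transrel{\glue\mul{\struc}}{\trace}q$ and $\trace=\trace[1]\series\trace[2]$ with $\tgt^{\trace[1]}=\src^{\trace[2]}=z$, then $p\transrel{\glue\mul{\struc}}{\trace[1]}m$ and $m\transrel{\glue\mul{\struc}}{\trace[2]}q$ for suitable $m$ — actually the key structural fact is finer: any $\transrel{\glue\mul{\struc}}{}$-derivation can be normalized so that each $\mathsf{T}$-split occurs at a vertex of $\univ{\glue\mul{\struc}}$ that is a genuine vertex $\quo{\tuple{j,x}}_\sim$ of some block, and more importantly, the run $\trace$, read off the glued structure, decomposes block-by-block along the interval structure of $\mul{\struc}$. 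Concretely, because a run $\trace=x_0a_1x_1\dots a_nx_n$ only uses edges of $\glue\mul{\struc}$, and each edge $\tuple{x_{k-1},x_k}\in\jump{a_k}^{\glue\mul{\struc}}$ comes from $\jump{a_k}^{\mul{\struc}\getith{j_k}}$ for some block $j_k$ (with both endpoints in $\univc{\mul{\struc}}{j_k}{}$), the run induces a labeling of its positions by block indices; the path-decomposition interpolation property ($\univ{\mul{\struc}\getith{i}}\cap\univ{\mul{\struc}\getith{k}}\subseteq\univ{\mul{\struc}\getith{j}}$ for $i\le j\le k$) then lets me argue that consecutive runs sharing a vertex can always be ``re-based'' into a common block, so that the whole run decomposes into sub-runs each living in a single block.

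So the detailed plan for ($\Rightarrow$) is: (1) induct on the $\transrel{\glue\mul{\struc}}{}$-derivation of $p\transrel{\glue\mul{\struc}}{\trace}q$ with the strengthened statement ``for all $p,q$ and runs $\trace$ whose source and target lie in a \emph{common} block $\univc{\mul{\struc}}{i}{}$, we have $p\decomposedtransrel{\mul{\struc}}{\trace}q$''; (2) the base rules $\mathsf{R}$ and $a$ map directly, since a single-edge run has both endpoints in one block by the definition of $\glue$ and $\jump{a}^{\glue\mul{\struc}}=\bigcup_j\jump{a}^{\mul{\struc}\getith{j}}$ after gluing; (3) for the $\ell_{\tuple{p',q'}}$ rule, the inner run $\trace[1]$ is a loop $\src^{\trace[1]}=\tgt^{\trace[1]}=\quo{\tuple{i,x}}_\sim$, so its endpoints trivially lie in block $i$, IH applies to give $p'\decomposedtransrel{\mul{\struc}}{\trace[1]}q'$, and the $\ell$-rule of Definition~\ref{definition: decomposition} closes it; (4) the $\mathsf{T}$ rule is the crux: we have $p\transrel{\glue\mul{\struc}}{\trace[1]}m$, $m\transrel{\glue\mul{\struc}}{\trace[2]}q$ with $\trace=\trace[1]\series\trace[2]$, $z:=\tgt^{\trace[1]}=\src^{\trace[2]}$, and $\src^\trace,\tgt^\trace$ in a common block $i$ — but $z$ need not be. To fix this I use the block-labeling of the positions of $\trace$ together with the interpolation property to find, for each prefix/suffix split, indices under which both halves have their (new) endpoints in a common block; iterating, I decompose $\trace$ as $\trace=\trace[1]'\series\cdots\series\trace[k]'$ where each $\trace[m]'$ has source and target in one block $\univc{\mul{\struc}}{j_m}{}$, apply IH to each piece, and recombine via the $\mathsf{T}$-rule of Definition~\ref{definition: decomposition}. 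The fiddly accounting — choosing the block indices monotonically so the interpolation hypothesis $i\le j\le k$ is satisfied at each split, and checking that the intermediate vertex $z$ between two adjacent pieces does land in the required common block — is exactly where I expect to spend the most effort; everything else is bookkeeping over the rule system.
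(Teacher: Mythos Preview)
Your high-level strategy matches the paper's, including the atomization/``cut'' lemma (which is exactly Lemma~\ref{lem: decomposing traces}) and the use of the path-decomposition interpolation property. The difference that matters is the induction measure for $(\Longrightarrow)$: you propose structural induction on the $\transrel{\glue\mul{\struc}}{}$-derivation, whereas the paper inducts on the \emph{length of the run $\trace$}. This bites precisely in your step (4). When the $\mathsf{T}$-split point $z$ fails to lie in block $i$, you rightly want to re-split $\trace$ elsewhere --- but the re-split pieces are not premises of the current rule application, so a structural induction on derivations gives you no IH for them. Under induction on run length, by contrast, any proper re-split strictly shortens both halves and the IH is immediate. (Your $\ell$ case is fine under derivation induction; it is the $\mathsf{T}$ case that breaks.)

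The paper's $\mathsf{T}$-case is also more concrete than your ``decompose globally into single-block pieces, then recombine.'' It atomizes once via Lemma~\ref{lem: decomposing traces}, then finds \emph{one} good split and recurses, via a four-way case analysis on where the second vertex $\tilde{x}_1$ and the penultimate vertex $\tilde{x}_{m-1}$ lie in the $3$-partition $\univl{\mul{\struc}}{i}{}\,\dcup\,\univc{\mul{\struc}}{i}{}\,\dcup\,\univr{\mul{\struc}}{i}{}$. The only non-obvious case is when one of $\tilde{x}_1,\tilde{x}_{m-1}$ lies left of block $i$ and the other right; since no single edge can jump from $\univl{\mul{\struc}}{i}{}$ to $\univr{\mul{\struc}}{i}{}$, a discrete intermediate-value argument produces an interior position $k$ with $\tilde{x}_k\in\univc{\mul{\struc}}{i}{}$, and one recurses on the two halves. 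This recursive single-split automatically yields a valid bracketing of $\mathsf{T}$-rules in the resulting $\decomposedtransrel{\mul{\struc}}{}$-derivation, so the ``monotone block indices'' bookkeeping you anticipate for recombination never arises.
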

The direction $(\Longleftarrow)$ is easy by using the derivation tree of the same form.
The converse direction $(\Longrightarrow)$ is relatively non-trivial.
We first give an example.
\begin{exa}\label{example: decomposition}
    Recall the derivation tree given in \Cref{example: loop-automata}.
    In the tree, we cannot replace ($\transrel{\glue \mul{\struc[2]}}{}$) with ($\decomposedtransrel{\mul{\struc[2]}}{}$)
    straightforwardly,
    because some \kl{runs} contain the pair $\tilde{\textcolor{red!70}{1}}$ and $\tilde{\textcolor{blue!50}{3}}$ 
    as the \kl{source} and \kl{target}
    (note that there is no $i$ such that $\set{\tilde{\textcolor{red!70}{1}}, \tilde{\textcolor{blue!50}{3}}} \subseteq \univc{\mul{\struc[2]}}{i}{}$).
    Nevertheless, we can show $\tuple{\tilde{1}, \tilde{1}} \in \jump{\automaton}^{\glue \mul{\struc[2]}}$ by modifying the derivation tree as follows:
    \begin{center}
        \scalebox{1}{\begin{prooftree}[separation = .5em]
            \hypo{p_1 \decomposedtransrel{\mul{\struc[2]}}{\tilde{\textcolor{red!70}{1}} a \tilde{2}} p_2}
            \hypo{p_2 \decomposedtransrel{\mul{\struc[2]}}{\tilde{2} a \tilde{\textcolor{blue!50}{3}}} p_3}
            \hypo{q_1 \decomposedtransrel{\mul{\struc[2]}}{\tilde{\textcolor{blue!50}{3}} b \tilde{2}} q_2}
            \hypo{q_2 \decomposedtransrel{\mul{\struc[2]}}{\tilde{2} a \tilde{\textcolor{blue!50}{3}}} q_3}
            \infer2[T]{q_1 \decomposedtransrel{\mul{\struc[2]}}{\tilde{\textcolor{blue!50}{3}} b \tilde{2} a \tilde{\textcolor{blue!50}{3}}} q_3}
            \infer1[$\ell_{\tuple{q_1, q_3}}$]{p_3 \decomposedtransrel{\mul{\struc[2]}}{\tilde{\textcolor{blue!50}{3}} \ell_{\tuple{q_1, q_3}} \tilde{\textcolor{blue!50}{3}}} p_4}
            \hypo{p_4 \decomposedtransrel{\mul{\struc[2]}}{\tilde{\textcolor{blue!50}{3}} b \tilde{2}} p_5}
            \infer[double]3[T]{p_2 \decomposedtransrel{\mul{\struc[2]}}{\tilde{2} a \tilde{3} \ell_{\tuple{q_1, q_3}} \tilde{3} b \tilde{2}} p_5}
            \hypo{p_5 \decomposedtransrel{\mul{\struc[2]}}{\tilde{2} b \tilde{\textcolor{red!70}{1}}} p_6}
            \infer[double]3[T]{p_1 \decomposedtransrel{\mul{\struc[2]}}{\tilde{\textcolor{red!70}{1}} a \tilde{2} a \tilde{3} \ell_{\tuple{q_1, q_3}} \tilde{3} b \tilde{2} b \tilde{\textcolor{red!70}{1}}} p_6}
        \end{prooftree}}
        \end{center}
\end{exa}

\subsection*{Proof of \Cref{theorem: decomposition}}
We first observe the following lemma.
\begin{lem}\label{lem: decomposing traces}
    Let $\automaton$ be a \kl{loop-automaton} and let $\struc \in \REL$.
    Let $p, q \in \univ{\automaton}$ and let $\trace = \tilde{x}_0 \alpha_1 \tilde{x}_1 \dots \tilde{x}_{m-1} \alpha_m \tilde{x}_{m}$.
    If $p \transrel{\struc}{\trace} q$,
    then there exist $r_0, \dots, r_{m}$ such that $r_0 = p$, $r_{m} = q$, and for each $j \in \rangeone{m}$,
    we have $r_{j-1} \transrel{\struc}{\tilde{x}_{j-1} \alpha_{j} \tilde{x}_{j}} r_{j}$.
\end{lem}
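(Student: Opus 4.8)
The plan is to induct on the derivation tree witnessing $p \transrel{\struc}{\trace} q$, following the four rules (R), ($a$), (T), and ($\ell_{\tuple{p',q'}}$) defining $\transrel{\struc}{\trace}$. The statement asserts that whenever a run $\trace = \tilde{x}_0 a_1 \dots a_m \tilde{x}_m$ is accepted from $p$ to $q$, the acceptance can be ``recut'' along every vertex position of $\trace$: there are intermediate states $r_0 = p, \dots, r_m = q$ with each single-step sub-run $\tilde{x}_{j-1} a_j \tilde{x}_j$ accepted from $r_{j-1}$ to $r_j$. So the goal is really a normalization result: any derivation can be rearranged so that the applications of (T) are right-nested (or left-nested) at the granularity of individual transition labels.

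First I would handle the base cases. For rule (R), $\trace = x$ has length $m = 0$, so we just take $r_0 = p = q$ and there is nothing to check. For rule ($a$) with $a \in \Sigma \dcup \set{\id}$, the run is already a single step $\tilde{x}_0 a_1 \tilde{x}_1$ and we take $r_0 = p$, $r_1 = q$. For rule ($\ell_{\tuple{p',q'}}$), the run is $\tilde{x}\, \ell_{\tuple{p',q'}}\, \tilde{x}$, again a single step, so $r_0 = p$, $r_1 = q$; note we do \emph{not} need to unfold the inner loop run, since the label $\ell_{\tuple{p',q'}}$ is itself one of the $a_j$'s and is treated atomically. Thus all base cases and the loop-rule case are immediate. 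The only interesting case is (T).

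For the inductive case (T), suppose $p \transrel{\struc}{\trace} q$ is derived from $p \transrel{\struc}{\trace[1]} q'$ and $q' \transrel{\struc}{\trace[2]} q$ with $\trace = \trace[1] \series \trace[2]$. By definition of the \kl{coalesced product}, if $\trace[1] = \tilde{x}_0 a_1 \dots a_k \tilde{x}_k$ and $\trace[2] = \tilde{x}_k b_1 \dots b_\ell \tilde{x}_{k+\ell}$ (the gluing condition forces the last vertex of $\trace[1]$ to equal the first vertex of $\trace[2]$), then $\trace$ is the concatenation, so $m = k + \ell$ and the label sequence of $\trace$ is the concatenation of those of $\trace[1]$ and $\trace[2]$. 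Applying the induction hypothesis to $\trace[1]$ yields states $r_0 = p, r_1, \dots, r_k = q'$ with $r_{j-1} \transrel{\struc}{\tilde{x}_{j-1} a_j \tilde{x}_j} r_j$ for $j \in \rangeone{k}$; applying it to $\trace[2]$ yields states $r'_0 = q', r'_1, \dots, r'_\ell = q$ with the analogous one-step derivations. Since $r_k = q' = r'_0$, concatenating the two lists gives a single list $r_0, \dots, r_m$ with $r_0 = p$, $r_m = q$, and every consecutive single-step sub-run of $\trace$ derivable, which is exactly the claim. The one point requiring a little care is the edge case where $\trace[1]$ or $\trace[2]$ has length $0$ (a bare vertex from rule (R)): then one of $k, \ell$ is $0$ and the corresponding IH-output is the singleton list $[r_0]$ with $r_0 = p = q'$ (resp.\ $q' = q$), and the concatenation still works because the shared endpoint state matches. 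I expect this case analysis on degenerate runs to be the only mild obstacle; it is routine but must be stated so that the concatenation of the two state-lists is unambiguous. No deeper difficulty arises, since the loop-transition label is never decomposed — that is what makes this lemma a clean preliminary step toward the $(\Longrightarrow)$ direction of \Cref{theorem: decomposition}.
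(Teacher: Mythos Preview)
Your proposal is correct and follows exactly the approach the paper indicates: the paper's proof reads simply ``By easy induction on the derivation tree of $p \transrel{\struc}{\trace} q$,'' and you have spelled out precisely that induction, handling the base rules (R), ($a$), ($\ell_{\tuple{p',q'}}$) directly and splicing the two IH-supplied state lists at the shared middle state in case (T). Your observation that the loop label is treated atomically (so the inner run need not be unfolded) and your attention to the degenerate $k=0$ or $\ell=0$ sub-cases are both apt and make the argument complete.
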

\begin{proof}
    By easy induction on the derivation tree of $p \transrel{\struc}{\trace} q$.
    More precisely, the same induction chooses one-step derivations $D_1,\dots,D_m$ with $\sum_j |D_j|+(m-1)\le |D|$, where $D$ is the original derivation; hence, when $m\ge2$, every non-empty proper consecutive subsequence composes to a derivation smaller than $D$.
\end{proof}
For instance, for $p_1 \transrel{\glue \mul{\struc[2]}}{\tilde{\textcolor{red!70}{1}} a \tilde{2} a \tilde{3} \ell_{\tuple{q_1, q_3}} \tilde{3} b \tilde{2} b \tilde{\textcolor{red!70}{1}}} p_6$ in \Cref{example: loop-automata},
we can take the following sequence:
$p_1 \transrel{\glue \mul{\struc[2]}}{\tilde{\textcolor{red!70}{1}} a \tilde{2}} p_2
\transrel{\glue \mul{\struc[2]}}{\tilde{2} a \tilde{\textcolor{blue!50}{3}}} p_3
\transrel{\glue \mul{\struc[2]}}{\tilde{\textcolor{blue!50}{3}} \ell_{\tuple{q_1, q_3}} \tilde{\textcolor{blue!50}{3}}} p_4
\transrel{\glue \mul{\struc[2]}}{\tilde{\textcolor{blue!50}{3}} b \tilde{2}} p_5
\transrel{\glue \mul{\struc[2]}}{\tilde{2} b \tilde{\textcolor{red!70}{1}}} p_6$.

\begin{proof}[Proof of \Cref{theorem: decomposition}]
    ($\Longleftarrow$):
    By using the derivation tree of the same form.
    ($\Longrightarrow$):
    By induction on the size of the derivation tree of $p \transrel{\glue \mul{\struc}}{\trace} q$.
    By the assumption on $\trace$, fix $i \in \rangeone{n}$ such that $\tuple{\src^{\trace}, \tgt^{\trace}} \in \univc{\mul{\struc}}{i}{2}$.
    In the induction step, we use the following one-step observation.
    If $s \transrel{\glue \mul{\struc}}{\tilde{u} \alpha \tilde{v}} t$ and $\tuple{\tilde{u}, \tilde{v}} \in \univc{\mul{\struc}}{h}{2}$ for some $h$,
    then $s \decomposedtransrel{\mul{\struc}}{\tilde{u} \alpha \tilde{v}} t$.
    Indeed, when $\alpha \in \Sigma \dcup \set{\id}$, this follows from the corresponding one-step rule in \Cref{definition: decomposition}, using a bag witnessing the edge or the diagonal.
    When $\alpha = \ell_{\tuple{s', t'}}$, take a smallest derivation of $s \transrel{\glue \mul{\struc}}{\tilde{u} \alpha \tilde{v}} t$.
    Its last rule must be the loop-label rule, and hence it gives a smaller derivation
    $s' \transrel{\glue \mul{\struc}}{\trace[1]} t'$ with $\src^{\trace[1]} = \tgt^{\trace[1]} = \tilde{u} = \tilde{v}$;
    by IH, $s' \decomposedtransrel{\mul{\struc}}{\trace[1]} t'$, and then the loop-label rule in \Cref{definition: decomposition} gives the claim.
    We distinguish the following cases:
    \begin{itemize}
        \item (Base Case) Case $\trace = \tilde{x}$ and Case $\trace = \tilde{x} \alpha \tilde{y}$ where $\alpha \in \LoopLabel{\Sigma, \univ{\automaton}}$:
        Easy by applying the same rule and the one-step observation above.
        \item (Inductive Case) Case $\trace = \tilde{x}_0 \alpha_1 \tilde{x}_1 \dots \tilde{x}_{m-1} \alpha_{m} \tilde{x}_{m}$
        where $m \ge 2$ and $\alpha_{i} \in \LoopLabel{\Sigma, \univ{\automaton}}$ for each $i \in \rangeone{m}$.
        Choose $x_0, x_m \in \univ{\mul{\struc}\getith{i}}$ such that
        $\tilde{x}_0 = \quo{\tuple{i, x_0}}_{\sim}$ and $\tilde{x}_m = \quo{\tuple{i, x_m}}_{\sim}$.
        By \Cref{lem: decomposing traces},
        there exist $r_0, \dots, r_{m}$ such that
        $r_0 = p$,
        $r_{m} = q$, and
        $r_{j-1} \transrel{\glue \mul{\struc}}{\tilde{x}_{j-1} \alpha_{j} \tilde{x}_{j}} r_{j}$ for each $j$.
        We distinguish the following cases:
        \begin{itemize}
            \item Case $\tilde{x}_1 \in \univc{\mul{\struc}}{i}{}$:
            We then have:
            \begin{center}
                \scalebox{1}{\begin{prooftree}
                    \hypo{\mathstrut}
                    \infer1[{\footnotesize one-step}]{r_0 \decomposedtransrel{\mul{\struc}}{\tilde{x}_0 \alpha_1 \tilde{x}_1} r_1}
                    \hypo{\mathstrut}
                    \infer[dashed]1[IH at $i$]{r_1 \decomposedtransrel{\mul{\struc}}{\tilde{x}_1 \alpha_2 \dots \alpha_{m} \tilde{x}_{m}} r_m}
                    \infer2[T]{r_0 \decomposedtransrel{\mul{\struc}}{\tilde{x}_0 \alpha_1 \dots \alpha_{m} \tilde{x}_{m}} r_m}
                \end{prooftree}}
            \end{center}
            \item Case $\tilde{x}_{m-1} \in \univc{\mul{\struc}}{i}{}$:
            Similar to the case above.
            \item Case $\tuple{\tilde{x}_1, \tilde{x}_{m-1}} \in \univl{\mul{\struc}}{i}{2} \dcup \univr{\mul{\struc}}{i}{2}$:
            WLOG, we consider when $\tuple{\tilde{x}_1, \tilde{x}_{m-1}} \in \univr{\mul{\struc}}{i}{2}$.
            The first and last one-step transitions have witnesses $i_1, i_m > i$ such that
            $\tilde{x}_{0} = \quo{\tuple{i_1, x_{0}}}_{\sim}$,
            $\tilde{x}_{1} = \quo{\tuple{i_1, x_{1}}}_{\sim}$,
            $\tilde{x}_{m-1} = \quo{\tuple{i_m, x_{m-1}}}_{\sim}$, and
            $\tilde{x}_{m} = \quo{\tuple{i_m, x_m}}_{\sim}$.
            \begin{itemize}
                \item Case $i_{1} \le i_m$:
                By the definition of $\sim_{\glue \mul{\struc}}$ with $\tilde{x}_{m} = \quo{\tuple{i_m, x_m}}_{\sim} = \quo{\tuple{i, x_m}}_{\sim}$ and $i \le i_1 \le i_m$,
                we have $\tilde{x}_{m} = \quo{\tuple{i_{1}, x_m}}_{\sim}$.
                As $\tilde{x}_{1}, \tilde{x}_m \in \univc{\mul{\struc}}{i_1}{}$, we have:
                \begin{center}
                    \scalebox{1}{\begin{prooftree}
                        \hypo{\mathstrut}
                        \infer1[{\footnotesize one-step}]{r_0 \decomposedtransrel{\mul{\struc}}{\tilde{x}_0 \alpha_1 \tilde{x}_1} r_1}
                        \hypo{\mathstrut}
                        \infer[dashed]1[IH at $i_1$]{r_1 \decomposedtransrel{\mul{\struc}}{\tilde{x}_1 \alpha_{2} \dots \alpha_{m} \tilde{x}_{m}} r_{m}}
                        \infer2[T]{r_0 \decomposedtransrel{\mul{\struc}}{\tilde{x}_0 \alpha_1 \dots \alpha_{m} \tilde{x}_{m}} r_{m}}
                    \end{prooftree}}
                \end{center}
                \item Case $i_{1} \ge i_m$:
                Similarly, we have $\tilde{x}_{0}, \tilde{x}_{m-1} \in \univc{\mul{\struc}}{i_m}{}$,
                and thus we have:
                \begin{center}
                    \scalebox{1}{\begin{prooftree}
                        \hypo{\mathstrut}
                        \infer[dashed]1[IH at $i_m$]{r_0 \decomposedtransrel{\mul{\struc}}{\tilde{x}_0 \alpha_{1} \dots \alpha_{m-1} \tilde{x}_{m-1}} r_{m-1}}
                        \hypo{\mathstrut}
                        \infer1[{\footnotesize one-step}]{r_{m-1} \decomposedtransrel{\mul{\struc}}{\tilde{x}_{m-1} \alpha_m \tilde{x}_{m}} r_m}
                        \infer2[T]{r_0 \decomposedtransrel{\mul{\struc}}{\tilde{x}_0 \alpha_1 \dots \alpha_{m} \tilde{x}_{m}} r_{m}}
                    \end{prooftree}}
                \end{center}
              
            \end{itemize}
            \item Otherwise, $\tuple{\tilde{x}_1, \tilde{x}_{m-1}} \in (\univl{\mul{\struc}}{i}{} \times \univr{\mul{\struc}}{i}{}) \dcup (\univr{\mul{\struc}}{i}{} \times \univl{\mul{\struc}}{i}{})$:
            We observe that $\set{\univl{\mul{\struc}}{i}{}, \univc{\mul{\struc}}{i}{}, \univr{\mul{\struc}}{i}{}}$ is a $3$-partition of the set $\univ{\glue \mul{\struc}}$
            and $\tuple{\tilde{x}_{j-1}, \tilde{x}_{j}} \not\in (\univl{\mul{\struc}}{i}{} \times \univr{\mul{\struc}}{i}{}) \dcup (\univr{\mul{\struc}}{i}{} \times \univl{\mul{\struc}}{i}{})$ for each $j$.
            Then by an analog of the ``discrete intermediate value theorem'',
            there exists some $1 < k < m-1$ such that $\tilde{x}_k \in \univc{\mul{\struc}}{i}{}$.
            We then have:
            \begin{center}
                \begin{prooftree}
                    \hypo{\mathstrut}
                    \infer[dashed]1[IH at $i$]{r_0 \decomposedtransrel{\mul{\struc}}{\tilde{x}_0 \alpha_1 \dots \alpha_{k} \tilde{x}_k} r_k}
                    \hypo{\mathstrut}
                    \infer[dashed]1[IH at $i$]{r_k \decomposedtransrel{\mul{\struc}}{\tilde{x}_k \alpha_{k+1} \dots \alpha_{m} \tilde{x}_{m}} r_m}
                    \infer2[T]{r_0 \decomposedtransrel{\mul{\struc}}{\tilde{x}_0 \alpha_1 \tilde{x}_1 \dots \tilde{x}_{m-1} \alpha_{m} \tilde{x}_{m}} r_m}
                \end{prooftree}
            \end{center}
        \end{itemize}
    \end{itemize}
    Hence, this completes the proof. 
\end{proof}

\section{Automata Construction} \label{section: automata construction}
In this section, 
we show that \Cref{theorem: decomposition}
induces a reduction
from the (\kl{relational semantics}) inclusion of \kl{loop-automata}
to the (\kl{string language}) inclusion of \kl{2-way alternating finite string automata} (\kl{2AFAs}).
Our reduction is based on \cite[\S 5.3]{nakamuraDerivativesGraphsPositive2025},
but the main difference is that we apply a binary encoding (\Cref{section: binary encoding}) for obtaining the \kl{PSPACE} upper bound.

\subsection{2AFA}
\AP
We use $\intro*\lanchor$ and $\intro*\ranchor$ as the special \kl{characters} denoting the leftmost and rightmost anchors.
A \intro*\kl{2AFA} $\automaton$ over a finite set $A$ is a tuple $\automaton = \tuple{\univ{\automaton}, \intro*\autotrans^{\automaton}, \intro*\autosrc^{\automaton}}$, where 
\begin{itemize}
    \item $\univ{\automaton}$ is a finite set of "states",
    \item \AP $\autotrans^{\automaton} \colon \univ{\automaton} \times (A \dcup \set{\lanchor, \ranchor}) \to \PBFML(\univ{\automaton} \times \set{-1, 0, 1})$ is a \emph{transition function},
    where $\intro*\PBFML(X)$ denotes the set of ""positive boolean formulas"" over a set $X$ of ""propositional variables"" given by
    \begin{align*}
        \fml[1], \fml[2] \in \PBFML(X) \Coloneqq p \mid \const{false} \mid \const{true} \mid \fml[1] \lor \fml[2] \mid \fml[1] \land \fml[2] \tag*{where $p \in X$,}
    \end{align*}
\item $\autosrc^{\automaton} \in \univ{\automaton}$ is the initial "state".
\end{itemize}
For a \kl{2AFA} $\automaton$ and a \kl{string} $w = a_{1} \dots a_{n}$ over $A \dcup \set{\lanchor, \ranchor}$,
the set $S^{\automaton}_{\word} \subseteq \univ{\automaton} \times \rangeone{n}$ is defined as
the smallest set closed under the following rule:
    For each $\tuple{q, i} \in \univ{\automaton} \times \rangeone{n}$
    and "propositional variables" $\tuple{q_1, i_1}, \dots, \tuple{q_{m}, i_{m}} \in \univ{\automaton} \times \set{-1, 0, 1}$,
    when the "positive boolean formula" $\autotrans^{\automaton}(q, a_i)$ is semantically equivalent to $\const{true}$ under the assumption that each $\tuple{q_k, i_k}$ is $\const{true}$, then 
    \[\begin{prooftree}
        \hypo{\tuple{q_1, i+ i_1} \in S^{\automaton}_{\word}}
        \hypo{\dots}
        \hypo{\tuple{q_{m}, i + i_{m}} \in S^{\automaton}_{\word}}
        \infer3{\tuple{q, i} \in S^{\automaton}_{\word}}
    \end{prooftree}.\]
For a \kl{2AFA} $\automaton$, the \kl{language} is defined as $\intro*\ljump{\automaton} \defeq \set{\word \in A^* \mid \tuple{\autosrc^{\automaton}, 1} \in S^{\automaton}_{{\lanchor}\word{\ranchor}}}$.
\AP We define the \intro*\kl(2AFA){size} $\intro*\autolen{\automaton}$ as $\sum_{\tuple{q, a} \in \univ{\automaton} \times (A \dcup \set{\lanchor, \ranchor})} \|\autotrans^{\automaton}(q, a)\|$, where $\|\fml[1]\|$ denotes the number of symbols occurring in the "positive boolean formula" $\fml[1]$.

By an on-the-fly exponential-size $1$-way NFA construction \cite[Lem.\ 1 and 5]{geffertTransformingTwoWayAlternating2014} for the \kl{language} $\ljump{\automaton[1]} \setminus \ljump{\automaton[2]}$ (see also \cite[Prop.\ 5.6]{nakamuraDerivativesGraphsPositive2025}),
we have the following proposition:
\begin{proposition}\label{proposition: 2AFA PSPACE}
    \AP
    The \intro*\kl{language inclusion problem} for \kl{2AFAs}---given a finite set $A$ and given two \kl{2AFAs} $\automaton[1]$ and $\automaton[2]$ over $A$, does $\ljump{\automaton[1]} \subseteq \ljump{\automaton[2]}$ hold?---is decidable in \textup{\kl{PSPACE}}.
\end{proposition}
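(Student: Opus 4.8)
The plan is to reduce the inclusion question $\ljump{\automaton[1]} \subseteq \ljump{\automaton[2]}$ to nonemptiness of the complement, i.e.\ to deciding whether $\ljump{\automaton[1]} \setminus \ljump{\automaton[2]} = \emptyset$, and to show this last language is recognized by a $1$-way NFA whose states are of polynomial size (though the machine itself is of exponential size), so that the usual on-the-fly reachability argument gives a \kl{PSPACE} algorithm. First I would recall the standard simulation of a \kl{2AFA} by a $1$-way NFA via \emph{crossing sequences} (also called \emph{behaviour functions}): for a \kl{2AFA} $\automaton$ over $A$, one associates to each boundary position between two consecutive input letters the finite function $f \colon \univ{\automaton} \to \pset(\univ{\automaton})$, or rather a pair of such functions, recording for each state $q$ from which the head enters the boundary moving rightward (resp.\ leftward), which states the computation can re-emerge in; this is exactly the construction of \cite[Lem.\ 1]{geffertTransformingTwoWayAlternating2014}. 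The set of such behaviour functions has size $2^{\bigO(\card\univ{\automaton}^2)}$, hence a single behaviour function is describable in space polynomial in $\autolen{\automaton}$, and the transition relation between adjacent boundaries (on reading one letter $a \in A_{\triangleright\triangleleft}$, taking $\delta^{\automaton}(q,a)$ into account and doing the alternating fixpoint to close under the positive boolean combinations) is computable in polynomial space from $a$ and the two neighbouring behaviour functions.

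Next I would combine the two automata: acceptance of $\word$ by $\automaton[1]$ and rejection by $\automaton[2]$ can be checked simultaneously by tracking, at each boundary, a \emph{pair} of behaviour functions $\tuple{f_1, f_2}$, one for each $\automaton[k]$; this still lives in space polynomial in $\autolen{\automaton[1]} + \autolen{\automaton[2]}$. The product NFA $\automaton[3]$ for $\ljump{\automaton[1]} \setminus \ljump{\automaton[2]}$ reads $\triangleright \word \triangleleft$ left to right, guesses these pairs at each boundary, verifies local consistency at every letter, and at the right end checks that $\tuple{\src^{\automaton[1]},0}$ is forced \const{true} in $\automaton[1]$'s behaviour while $\tuple{\src^{\automaton[2]},0}$ is \emph{not} forced in $\automaton[2]$'s behaviour (using that $\automaton[2]$ is alternating, so its complement is obtained by dualizing the boolean formulas and swapping accept/reject, i.e.\ the complement of a \kl{2AFA} is again a \kl{2AFA} of the same \kl(2AFA){size} up to a constant factor, and applying the positive simulation to that dual). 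This is the content of \cite[Lem.\ 5]{geffertTransformingTwoWayAlternating2014} and \cite[Prop.\ 5.6]{nakamuraDerivativesGraphsPositive2025}. Finally, $\ljump{\automaton[1]} \subseteq \ljump{\automaton[2]}$ holds iff $\ljump{\automaton[3]} = \emptyset$, and emptiness of an NFA is reachability of an accepting state in its configuration graph; since each configuration of $\automaton[3]$ is one such pair of behaviour functions — of polynomial size — and successor configurations are computable in polynomial space, a standard nondeterministic reachability search runs in \kl{PSPACE} (indeed in \const{NPSpace} $=$ \kl{PSPACE} by Savitch's theorem), which also lets us do the whole procedure deterministically in polynomial space.

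The main obstacle I expect is not the reachability step but the bookkeeping in the $2AFA$-to-NFA simulation in the presence of \emph{alternation}: a crossing sequence for a two-way \emph{alternating} machine is not a linear list of states but must encode, at each boundary, the full "re-entry relation" summarising an alternating subcomputation on the portion of the tape to one side, and one must argue that closing $\delta^{\automaton}(q,a)$ under the positive boolean structure while merging the left-side and right-side summaries is well defined and monotone (so the least fixpoint exists) and that it exactly captures $S^{\automaton}_{\word}$. Handling the stationary moves ($i_k = 0$ in $\univ{\automaton}\times\set{-1,0,1}$) correctly inside a single boundary, so that the fixpoint terminates and agrees with the definition of $S^{\automaton}_{\word}$, is the delicate point; but this is precisely what is done in \cite[Lem.\ 1 and 5]{geffertTransformingTwoWayAlternating2014}, so I would invoke that result rather than redo it, and only remark that the resulting NFA, while of size $2^{\bigO((\autolen{\automaton[1]}+\autolen{\automaton[2]})^2)}$, has a configuration representation and a transition test computable within polynomial space, which is all that the on-the-fly algorithm needs.
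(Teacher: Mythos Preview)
Your proposal is correct and matches the paper's approach exactly: the paper justifies this proposition in a single sentence preceding the statement, citing precisely the on-the-fly exponential-size $1$-way NFA construction of \cite[Lem.\ 1 and 5]{geffertTransformingTwoWayAlternating2014} for $\ljump{\automaton[1]} \setminus \ljump{\automaton[2]}$ (and \cite[Prop.\ 5.6]{nakamuraDerivativesGraphsPositive2025}), which is the same reduction and the same references you invoke. Your write-up is simply a more detailed unpacking of what those cited lemmas deliver.
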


\subsection{2AFAs construction}
Let $\const{c}_1, \const{c}_2, \dots$ be a set of pairwise distinct "elements".
We write $\intro*\RELk{k}$ for the class of all \kl{structures} $\struc$ such that $\univ{\struc} \subseteq \set{\const{c}_1, \dots, \const{c}_k}$.
Below, we consider \kl{path decompositions} of "width@@pathwidth" $k-1$ as "strings" over $\RELk{k}$.
Additionally, we only consider \emph{normalized} \kl{path decompositions}, based on \cite[\S 5.3.3 \& 5.3.4]{nakamuraDerivativesGraphsPositive2025}.
Let 
$\intro*\laNorm_{k} \defeq \RELk{k}^{+} \setminus (\laInac_{k} \cup \laIncon_{k})$, where%
\AP
\phantomintro{\laInac}\phantomintro{\laIncon}
\begin{align*}
\hspace{-1.em} \reintro*\laInac_{k} &\defeq \RELk{k}^* \compo \set*{\struc[1] \struc[2] \in \RELk{k}^{2} \mid \univ{\struc[1]} \cap \univ{\struc[2]} = \emptyset} \compo \RELk{k}^*,\\
\hspace{-1.em} \reintro*\laIncon_{k} &\defeq \RELk{k}^* \compo \set*{\struc[1] \struc[2] \in \RELk{k}^{2} \mid \exists b \in \vsig,\ b^{\struc[1]} \cap (\univ{\struc[1]} \cap \univ{\struc[2]})^{2} \neq b^{\struc[2]} \cap (\univ{\struc[1]} \cap \univ{\struc[2]})^{2}} \compo \RELk{k}^*.
\end{align*}
Here, $\compo$ denotes the "concatenation@@lang" of \kl{languages}.
$\laInac_{k}$ denotes the class of "path decompositions" having an inaccessible pair of "vertices" under the assumption that every pair of each "bag" is connected.
$\laIncon_{k}$ denotes the class of "path decompositions" having an inconsistent pair of "vertices" in that there is an
"edge" between the two "vertices" in one "bag" but not in the other "bag".
$\laNorm_{k}$ is sufficient to enumerate all \kl{structures} of \kl{pathwidth} at most $k-1$
in that $\REL_{\pw \le k-1}$ coincides with the isomorphism closure of $\set{\glue \mul{\struc} \mid \mul{\struc} \in \laNorm_{k}}$ \cite[Proposition 5.11]{nakamuraDerivativesGraphsPositive2025}.

Using \kl{2AFAs}, we can naturally encode the rules of \Cref{definition: decomposition}.
\begin{defi}\label{definition: loop-automaton to 2AFA}
    For $k \ge 2$ and a \kl{loop-automaton} $\automaton$ over $\vsig$,
    the \kl{2AFA} $\AP\intro*\loopautomatonAFA_{k}^{\automaton}$ over $\RELk{k}$ is defined as follows:
    \begin{itemize}
        \item \AP $\univ{\loopautomatonAFA_{k}^{\automaton}} \defeq \set{\intro*\Bautosrc} \dcup ((\set{\const{c}_1, \dots, \const{c}_k} \times \univ{\automaton})^2 \times \set{?, \checkmark})$; we abbreviate $\tuple{\tuple{x, p}, \tuple{y, q}, m}$ to $\tuple{\tuple{x, p}, \tuple{y, q}}_{m}$,
        \item $\autotrans^{\loopautomatonAFA_{k}^{\automaton}}(q, a) \defeq \bigvee \set{\fml[2] \mid \tuple{q, a} \leadsto \fml[2]}$ where
        $(\leadsto) \subseteq (\univ{\loopautomatonAFA_{k}^{\automaton}} \times (\RELk{k} \dcup \set{\lanchor,\ranchor}))
        \times \PBFML(\univ{\loopautomatonAFA_{k}^{\automaton}} \times \set{-1, 0, +1})$ is the minimal set closed under the following rules:
        \begin{gather*}
        \tuple{\Bautosrc, \lanchor} ~\leadsto~ \tuple{\tuple{\tuple{x, \src^{\automaton}}, \tuple{y, \tgt^{\automaton}}}_{?}, +1} \tag{I} \quad \text{for $x,y \in \set{\const{c}_1, \dots, \const{c}_k}$}\\
            \tuple{\tuple{\tilde{p}, \tilde{p}}_{\checkmark}, \struc} ~\leadsto~ \const{true} \quad\text{ if $\tilde{p} \in \univ{\struc} \times \univ{\automaton}$} \tag{R}\\
            \tuple{\tuple{\tilde{p}, \tilde{q}}_{\checkmark}, \struc} ~\leadsto~ \tuple{\tuple{\tilde{p}, \tilde{r}}_{\checkmark}, 0} \land \tuple{\tuple{\tilde{r}, \tilde{q}}_{\checkmark}, 0} \quad\text{ if $\tilde{r} \in \univ{\struc} \times \univ{\automaton}$} \tag{T}\\
            \tuple{\tuple{\tuple{x, p}, \tuple{y, q}}_{\checkmark}, \struc} ~\leadsto~ \const{true} \quad\text{ if $\tuple{p, q} \in \alpha^{\automaton} \mbox{ and } \tuple{x, y} \in \jump{\alpha}^{\struc}$ for $\alpha \in \vsig \dcup \set{\id}$} \tag{$\alpha$}\\
            \tuple{\tuple{\tuple{x, p}, \tuple{x, q}}_{\checkmark}, \struc} ~\leadsto~ \tuple{\tuple{\tuple{x, p'}, \tuple{x, q'}}_{\checkmark},0} \quad\text{ if $\tuple{p, q} \in \ell_{\tuple{p', q'}}^{\automaton}$} \tag{$\ell_{\tuple{p', q'}}$}\\
            \tuple{\tuple{\tilde{p}, \tilde{q}}_{\checkmark}, \struc} ~\leadsto~ \tuple{\tuple{\tilde{p}, \tilde{q}}_{?}, +1} \tag{$+1$}\\
            \tuple{\tuple{\tilde{p}, \tilde{q}}_{\checkmark}, \struc} ~\leadsto~ \tuple{\tuple{\tilde{p}, \tilde{q}}_{?}, -1} \tag{$-1$}\\
            \tuple{\tuple{\tilde{p}, \tilde{q}}_{?}, \struc} ~\leadsto~
            \tuple{\tuple{\tilde{p}, \tilde{q}}_{\checkmark}, 0} \quad\text{ if $\tilde{p}, \tilde{q} \in \univ{\struc} \times \univ{\automaton}$} \tag{$\checkmark$}
        \end{gather*}
        \item $\autosrc^{\loopautomatonAFA_{k}^{\automaton}} \defeq \Bautosrc$.
    \end{itemize}
\end{defi}
The rules are defined based on those in \Cref{definition: decomposition}, where
\begin{itemize}
    \item the rules ($+1$) and ($-1$) are used for moving vertices in the same quotient class;
    \item the check mark ``$\checkmark$'' in $\tuple{\tuple{x, p}, \tuple{y, q}}_{\checkmark}$ expresses that $x, y \in \univ{\struc}$ holds (where $\struc$ is the \kl{structure} in the current position);
    the question mark ``$?$'' in $\tuple{\tuple{x, p}, \tuple{y, q}}_{?}$ expresses that it has not yet been checked.
    They are introduced to check it after the rule ($+1$) or ($-1$).
\end{itemize}
By construction of $\loopautomatonAFA_{k}^{\automaton}$, we can show the following
(cf., \cite[Proposition 5.9]{nakamuraDerivativesGraphsPositive2025}).
\begin{proposition}\label{proposition: decomposition and 2AFA}
    Let $k \ge 2$ and let $\automaton$ be a \kl{loop-automaton} over $\vsig$.
    Let $\mul{\struc} = \struc_1 \dots \struc_n \in \RELk{k}^{+}$.
    For all $j \in \rangeone{n}$, $x, y \in \univ{\mul{\struc}\getith{j}}$, and $p, q \in \univ{\automaton}$, the following are equivalent:
    \begin{enumerate}
        \item \label{item: decomposition and 2AFA 1} $p\decomposedtransrel{\mul{\struc}}{\trace} q \mbox{ for some \kl{run} $\trace$ of $\automaton$ s.t.\ $\src^{\trace} = [\tuple{j,x}]_{\sim}$ and $\tgt^{\trace} = [\tuple{j,y}]_{\sim}$}$;
        \item \label{item: decomposition and 2AFA 2} $\tuple{\tuple{\tuple{x, p}, \tuple{y, q}}_{\checkmark}, j+1} \in S^{{\loopautomatonAFA_{k}^{\automaton}}}_{\lanchor \mul{\struc} \ranchor}$.
    \end{enumerate}
\end{proposition}
\begin{proof}
Both directions follow by straightforward transformations of the derivation trees:
the equally named rules correspond directly, while $(+1)$, $(-1)$, and $(\checkmark)$
only change the bag used to name the same quotient vertices.
\end{proof}
For a set $\la \subseteq \RELk{k}^{+}$,
we say that a \kl{loop-automaton} $\automaton$ is ""closed@@loopA"" on $\la$ if $\jump{\automaton}^{\glue \mul{\struc}} = \emptyset$ or $\jump{\automaton}^{\glue \mul{\struc}} = \univ{\glue \mul{\struc}}^2$ holds for every $\mul{\struc} \in \la$.
By \Cref{proposition: decomposition and 2AFA}, we have the following.
\begin{lem}\label{lemma: to 2AFA}
    For $k \ge 2$ and a \kl{loop-automaton} $\automaton$ over $\vsig$,
    we have:
    \[\ljump{\loopautomatonAFA_{k}^{\automaton}} ~=~ \set{\mul{\struc} \in \RELk{k}^{+} \mid \tuple{[\tuple{1,x}]_{\sim}, [\tuple{1,y}]_{\sim}} \in \jump{\automaton}^{\glue \mul{\struc}} \mbox{ for some $x, y \in \univ{\mul{\struc}\getith{1}}$}}.\]
    In particular, for a set $\la \subseteq \RELk{k}^{+}$, if $\automaton$ is "closed@@loopA" on $\la$,
    we have:
    \[\ljump{\loopautomatonAFA_{k}^{\automaton}} \cap \la ~=~ \set{\mul{\struc} \in \la \mid \jump{\automaton}^{\glue \mul{\struc}} \neq \emptyset}.\]
\end{lem}
\begin{proof}
    We have:
    \begin{align*}
        \hspace{-.8em}&\mul{\struc} \in \ljump{\loopautomatonAFA_{k}^{\automaton}}\\
        \hspace{-.8em}&\Leftrightarrow
        \exists x, y \in \univ{\mul{\struc}\getith{1}},\ \tuple{\tuple{\tuple{x, \src^{\automaton}}, \tuple{y, \tgt^{\automaton}}}_{\checkmark}, 2} \in S^{\loopautomatonAFA_{k}^{\automaton}}_{\lanchor \mul{\struc} \ranchor}
        \tag{By the form of $\loopautomatonAFA_{k}^{\automaton}$}\\
        \hspace{-.8em}&\Leftrightarrow \exists x, y \in \univ{\mul{\struc}\getith{1}}, \exists \trace \mbox{ s.t.\ $\src^{\trace} = \quo{\tuple{1, x}}_{\sim}$ and $\tgt^{\trace} = \quo{\tuple{1, y}}_{\sim}$},\ \src^{\automaton} \decomposedtransrel{\mul{\struc}}{\trace} \tgt^{\automaton} \tag{\Cref{proposition: decomposition and 2AFA}}\\
        \hspace{-.8em}&\Leftrightarrow \exists x, y \in \univ{\mul{\struc}\getith{1}}, \exists \trace \mbox{ s.t.\ $\src^{\trace} = \quo{\tuple{1, x}}_{\sim}$ and $\tgt^{\trace} = \quo{\tuple{1, y}}_{\sim}$},\ \src^{\automaton}  \transrel{\glue \mul{\struc}}{\trace} \tgt^{\automaton} \tag{\Cref{theorem: decomposition}}\\
        \hspace{-.8em}&\Leftrightarrow \exists x, y \in \univ{\mul{\struc}\getith{1}},\ \tuple{[\tuple{1, x}]_{\sim}, [\tuple{1, y}]_{\sim}} \in \jump{\automaton}^{\glue \mul{\struc}}. \tag{By definition}
    \end{align*}
    Hence, this completes the proof.
    (The second claim is immediate from the first one.)
\end{proof}

\subsection{Encoding the equational theory}\label{section: encoding equational theory}
Let $c_{\top}$ be a \kl{variable} for encoding the top element $\top$
and let $l$ and $r$ be \kl{variables} for indicating the left and right vertex, respectively.
We define
\AP
\phantomintro{\laTop}
\[\reintro*\laTop_{k} \defeq \RELk{k}^{*} \compo \set{\struc \in \RELk{k} \mid {\aterm[3]_{\top}^{\struc} \neq \univ{\struc}^2}} \compo \RELk{k}^{*}.\]
For every $\mul{\struc} \in \laNorm_{k} \setminus \laTop_{k}$,
we can encode the top element $\top$ by the \kl(loop-RKA){term} $\aterm[3]_{\top}^{*}$:
"ie", $\jump{c_{\top}^*}^{\glue \mul{\struc}} = \univ{\glue \mul{\struc}}^2$ holds.
Using them, we can encode the \kl{equational theory} as follows.
(Below, we recall $\loopRKAautomaton{\term}$ in \Cref{defi: term to automata}.)
\begin{lem}\label{lemma: encoding the equational theory of KL}
    Let $\term[1]$ and $\term[2]$ be \kl(loop-RKA){terms} over $\vsig$.
    We extend $\vsig$ with fresh variables $\aterm[3]_{\top}$, $l$, and $r$ that do not occur in $\term[1]$ or $\term[2]$ for the automata construction below.
    Then,
    \[\REL_{\pw \le k - 1} \klmodels \term[1] \le \term[2] \quad\iff\quad \ljump{\loopautomatonAFA_{k}^{\loopRKAautomaton{\aterm[3]_{\top}^* {l} \term[1] {r} \aterm[3]_{\top}^*}}} \subseteq \ljump{\loopautomatonAFA_{k}^{\loopRKAautomaton{\aterm[3]_{\top}^* {l} \term[2] {r} \aterm[3]_{\top}^*}}} \cup \laInac_{k} \cup \laIncon_{k} \cup \laTop_{k}.\]
\end{lem}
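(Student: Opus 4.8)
The plan is to combine \Cref{lemma: to 2AFA} with the characterization of $\mathcal{L}^{\mathrm{N}}_{k}$ recalled just above this lemma, namely that $\REL_{\pw \le k-1}$ is the \kl{graph isomorphism} closure of $\set{\glue \mul{\struc} \mid \mul{\struc} \in \mathcal{L}^{\mathrm{N}}_{k}}$ (\cite[Prop.\ 5.11]{nakamuraDerivativesGraphsPositive2025}), together with the fact that $\aterm[3]_{\top}^{*}$ can be forced to denote the top relation. First I would record two observations. (i) If $\mul{\struc} \in \REL_{k}^{+} \setminus \mathcal{L}^{\mathrm{Inac}}_{k}$ and $\aterm[3]_{\top}^{\mul{\struc}\getith{i}} = \univ{\mul{\struc}\getith{i}}^{2}$ for every bag $\mul{\struc}\getith{i}$, then $\jump{\aterm[3]_{\top}^{*}}^{\glue \mul{\struc}} = \univ{\glue \mul{\struc}}^{2}$, because consecutive bags overlap and each bag-relation is complete, so their reflexive transitive closure is the full relation; in particular this holds for every $\mul{\struc} \in \mathcal{L}^{\mathrm{N}}_{k} \setminus \mathcal{L}^{\top}_{k}$. (ii) Consequently, for such a $\mul{\struc}$ and any \kl{loop-RKA term} $\term[3]$ over $\vsig$, \Cref{prop: automata} and (i) give that $\jump{\automaton_{\aterm[3]_{\top}^{*} l \term[3] r \aterm[3]_{\top}^{*}}}^{\glue \mul{\struc}}$ equals $\univ{\glue \mul{\struc}}^{2}$ when $\jump{l \term[3] r}^{\glue \mul{\struc}} \ne \emptyset$ and $\emptyset$ otherwise; hence, as $\univ{\mul{\struc}\getith{1}} \ne \emptyset$, the \emph{first} form of \Cref{lemma: to 2AFA} yields, for $\mul{\struc} \in \mathcal{L}^{\mathrm{N}}_{k} \setminus \mathcal{L}^{\top}_{k}$, that $\mul{\struc} \in \ljump{\automaton[2]_{k}^{\automaton_{\aterm[3]_{\top}^{*} l \term[3] r \aterm[3]_{\top}^{*}}}}$ iff $\jump{l \term[3] r}^{\glue \mul{\struc}} \ne \emptyset$ (the second form of \Cref{lemma: to 2AFA} does not apply, the ``empty-or-full'' property failing on $\mathcal{L}^{\top}_{k}$). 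Since $\mathcal{L}^{\mathrm{N}}_{k} = \REL_{k}^{+} \setminus (\mathcal{L}^{\mathrm{Inac}}_{k} \cup \mathcal{L}^{\mathrm{Incon}}_{k})$ --- hence $\REL_{k}^{+} \setminus (\mathcal{L}^{\mathrm{Inac}}_{k} \cup \mathcal{L}^{\mathrm{Incon}}_{k} \cup \mathcal{L}^{\top}_{k}) = \mathcal{L}^{\mathrm{N}}_{k} \setminus \mathcal{L}^{\top}_{k}$ --- a routine set-algebra manipulation, instantiating $\term[3]$ at $\term[1]$ and at $\term[2]$, reduces the right-hand side of the lemma to
\[
(\ast)\colon\quad \text{for all } \mul{\struc} \in \mathcal{L}^{\mathrm{N}}_{k} \setminus \mathcal{L}^{\top}_{k}: \quad \jump{l \term[1] r}^{\glue \mul{\struc}} \ne \emptyset \ \Longrightarrow\ \jump{l \term[2] r}^{\glue \mul{\struc}} \ne \emptyset .
\]
It then remains to prove $(\ast) \Longleftrightarrow \REL_{\pw \le k-1} \klmodels \term[1] \le \term[2]$.

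The implication from $\REL_{\pw \le k-1} \klmodels \term[1] \le \term[2]$ to $(\ast)$ is immediate: given $\mul{\struc} \in \mathcal{L}^{\mathrm{N}}_{k} \setminus \mathcal{L}^{\top}_{k}$ and vertices $w, x, y, z$ with $\tuple{w,x} \in l^{\glue \mul{\struc}}$, $\tuple{x,y} \in \jump{\term[1]}^{\glue \mul{\struc}}$, and $\tuple{y,z} \in r^{\glue \mul{\struc}}$, the characterization of $\mathcal{L}^{\mathrm{N}}_{k}$ gives $\glue \mul{\struc} \in \REL_{\pw \le k-1}$, hence $\jump{\term[1]}^{\glue \mul{\struc}} \subseteq \jump{\term[2]}^{\glue \mul{\struc}}$ (since $\aterm[3]_{\top}, l, r$ do not occur in $\term[1]$ or $\term[2]$), so $\tuple{x,y} \in \jump{\term[2]}^{\glue \mul{\struc}}$ and therefore $\tuple{w,z} \in \jump{l \term[2] r}^{\glue \mul{\struc}} \ne \emptyset$.

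For the converse implication (from $(\ast)$), since $\term[1]$ and $\term[2]$ are over $\vsig$ it suffices to take a \kl{structure} $\struc$ over $\vsig$ with $\pw(\struc) \le k-1$ and $\tuple{x,y} \in \jump{\term[1]}^{\struc}$, and to derive $\tuple{x,y} \in \jump{\term[2]}^{\struc}$. Let $\struc'$ extend $\struc$ by the \kl{graph loops} $\tuple{x,x}$ (labelled $l$) and $\tuple{y,y}$ (labelled $r$), with $\aterm[3]_{\top}$ interpreted as the empty relation; adding \kl{graph loops} does not increase the \kl{pathwidth}, so $\pw(\struc') \le k-1$, and, again by the characterization of $\mathcal{L}^{\mathrm{N}}_{k}$, there is $\mul{\struc} \in \mathcal{L}^{\mathrm{N}}_{k}$ with $\glue \mul{\struc} \cong \struc'$; as everything in sight is invariant under \kl{graph isomorphisms} we may identify $\glue \mul{\struc}$ with $\struc'$. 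Now replace the $\aterm[3]_{\top}$-relation of each bag $\mul{\struc}\getith{i}$ by $\univ{\mul{\struc}\getith{i}}^{2}$, yielding $\mul{\struc}'$. Since only $\aterm[3]_{\top}$-edges change and $\aterm[3]_{\top} \notin \vsig$, we still have $\mul{\struc}' \in \mathcal{L}^{\mathrm{N}}_{k}$ (membership in $\mathcal{L}^{\mathrm{Inac}}_{k}$ depends only on vertex sets, and in $\mathcal{L}^{\mathrm{Incon}}_{k}$ only on $\vsig$-edges), now $\mul{\struc}' \notin \mathcal{L}^{\top}_{k}$ (every bag has complete $\aterm[3]_{\top}$), while $\glue \mul{\struc}'$ agrees with $\struc'$ on all labels except $\aterm[3]_{\top}$. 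Hence $\tuple{x,x} \in l^{\glue \mul{\struc}'}$, $\tuple{x,y} \in \jump{\term[1]}^{\glue \mul{\struc}'} = \jump{\term[1]}^{\struc}$, and $\tuple{y,y} \in r^{\glue \mul{\struc}'}$, so $\jump{l \term[1] r}^{\glue \mul{\struc}'} \ne \emptyset$; applying $(\ast)$ gives $\jump{l \term[2] r}^{\glue \mul{\struc}'} \ne \emptyset$, and since $l$ and $r$ denote exactly the single loops at $x$ and $y$ this forces $\tuple{x,y} \in \jump{\term[2]}^{\glue \mul{\struc}'} = \jump{\term[2]}^{\struc}$, as required.

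The step I expect to be the main obstacle is the realization argument in this last direction: one must produce a \kl{path decomposition} lying in $\mathcal{L}^{\mathrm{N}}_{k} \setminus \mathcal{L}^{\top}_{k}$ and realizing a prescribed \kl{structure} of \kl{pathwidth} $\le k-1$ on which, \emph{simultaneously}, $\aterm[3]_{\top}^{*}$ still denotes the top relation \emph{and} the \kl{pathwidth} bound survives. This is why the $\aterm[3]_{\top}$-edges must be inserted bagwise --- complete within each bag, never globally, the latter forcing every vertex pair into a common bag and thereby collapsing the \kl{pathwidth} bound --- with the top relation recovered afterwards from the overlaps of consecutive bags via $\aterm[3]_{\top}^{*}$. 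A smaller but genuine point is that only the first form of \Cref{lemma: to 2AFA} is applicable here, since $\automaton_{\aterm[3]_{\top}^{*} l \term[3] r \aterm[3]_{\top}^{*}}$ has the ``empty-or-full'' property only on \kl{structures} outside $\mathcal{L}^{\top}_{k}$.
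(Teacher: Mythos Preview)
Your proposal is correct and follows essentially the same approach as the paper: both combine the characterization of $\mathcal{L}^{\mathrm{N}}_{k}$, the fact that $c_{\top}^{*}$ expresses $\top$ on $\mathcal{L}^{\mathrm{N}}_{k}\setminus\mathcal{L}^{\top}_{k}$, the freshness of $l,r$, \Cref{prop: automata}, and \Cref{lemma: to 2AFA}. The paper packages these as a terse chain of equivalences (justifying the nontrivial steps only by ``$c_{\top}$ is fresh'' and ``$l$ and $r$ are fresh''), whereas you unpack the same content into the explicit realization argument with the bagwise insertion of $c_{\top}$-edges; your remark that only the first form of \Cref{lemma: to 2AFA} applies globally is accurate and makes explicit what the paper's final ``Note'' is really using.
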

\begin{proof}
    We have:
    \begin{align*}
        &\REL_{\pw \le k - 1} \klmodels \term[1] \le \term[2] 
        ~\Leftrightarrow~ \set{\glue \mul{\struc[1]} \mid \mul{\struc[1]} \in \laNorm_{k}} \klmodels \term[1] \le \term[2] \\
        &~\Leftrightarrow~ \set{\glue \mul{\struc[1]} \mid \mul{\struc[1]} \in \laNorm_{k} \setminus \laTop_{k}} \klmodels \term[1] \le \term[2] \tag{$c_{\top}$ is "fresh"}\\
        &~\Leftrightarrow~ \set{\glue \mul{\struc[1]} \mid \mul{\struc[1]} \in \laNorm_{k} \setminus \laTop_{k}}  \klmodels c_{\top}^* {l} \term[1] {r} c_{\top}^* \le c_{\top}^* {l} \term[2] {r} c_{\top}^* \tag{$c_{\top}^*$ expresses $\top$; $l$, $r$ are "fresh"}\\
        &~\Leftrightarrow~ \forall \mul{\struc[1]} \in \laNorm_{k} \setminus \laTop_{k},\ \jump{\loopRKAautomaton{c_{\top}^* {l} \term[1] {r} c_{\top}^*}}^{\glue \mul{\struc[1]}} \subseteq \jump{\loopRKAautomaton{c_{\top}^* {l} \term[2] {r} c_{\top}^*}}^{\glue \mul{\struc[1]}} \tag{\Cref{prop: automata}}\\
        &~\Leftrightarrow~ \set{\mul{\struc[1]} \mid \jump{\loopRKAautomaton{c_{\top}^* {l} \term[1] {r} c_{\top}^*}}^{\glue \mul{\struc[1]}} \neq \emptyset} \cap (\laNorm_{k} \setminus \laTop_{k}) \subseteq \set{\mul{\struc[1]} \mid \jump{\loopRKAautomaton{c_{\top}^* {l} \term[2] {r} c_{\top}^*}}^{\glue \mul{\struc[1]}} \neq \emptyset} \tag{$\loopRKAautomaton{c_{\top}^* {l} \term[1] {r} c_{\top}^*}$ and $\loopRKAautomaton{c_{\top}^* {l} \term[2] {r} c_{\top}^*}$ are "closed@@loopA" on $\laNorm_{k} \setminus \laTop_{k}$}\\
        &~\Leftrightarrow~ \ljump{\loopautomatonAFA_{k}^{\loopRKAautomaton{\aterm[3]_{\top}^* {l} \term[1] {r} \aterm[3]_{\top}^*}}} \cap (\laNorm_{k} \setminus \laTop_{k}) \subseteq \ljump{\loopautomatonAFA_{k}^{\loopRKAautomaton{\aterm[3]_{\top}^* {l} \term[2] {r} \aterm[3]_{\top}^*}}} \tag{\Cref{lemma: to 2AFA}}\\
        &~\Leftrightarrow~ \ljump{\loopautomatonAFA_{k}^{\loopRKAautomaton{\aterm[3]_{\top}^* {l} \term[1] {r} \aterm[3]_{\top}^*}}} \subseteq \ljump{\loopautomatonAFA_{k}^{\loopRKAautomaton{\aterm[3]_{\top}^* {l} \term[2] {r} \aterm[3]_{\top}^*}}} \cup \laInac_{k} \cup \laIncon_{k} \cup \laTop_{k}.
    \end{align*}
    Hence, this completes the proof.
\end{proof}

\subsection{Binary encoding of structures}\label{section: binary encoding}
The cardinality of $\RELk{k}$ is approximately $2^{\card \vsig \cdot k^2}$.
Due to this, the \kl{size} $\autolen{\loopautomatonAFA_{k}^{\automaton}}$ is exponential in $k$.
Nevertheless, we can avoid this exponential blowup by a standard binary encoding of \kl{structures} (cf., e.g., \cite[Definition 2.1]{immermanDescriptiveComplexity1999}).

Let $\vsig = \set{a_1, \dots, a_m}$.
Given $k \ge 2$ and a \kl{structure} $\struc \in \RELk{k}$,
we define the following ($k + m k^2$)-bit \kl{string} $\intro*\binenc{\struc}{k}$:
\begin{itemize}
    \item The first $k$ bits encode the universe $\univ{\struc} \subseteq \set{\const{c}_1,\dots,\const{c}_k}$:
    the $i$-th bit is $1$ if $\const{c}_i \in \univ{\struc}$, and $0$ otherwise.

    \item For each $\ell$ from $1$ to $m$,
    the next $k^2$ bits encode the relation $a_\ell^{\struc} \subseteq \univ{\struc}^2$:
    the $((p-1) k + q)$-th bit is $1$ if $\tuple{\const{c}_p, \const{c}_q} \in a_\ell^{\struc}$, and $0$ otherwise.
\end{itemize}
For instance, if $\struc$ is given as follows, $k = 3$, and $\vsig = \set{a, b}$,
then
$\binenc{\struc}{k}$ is defined as follows:
\[
\struc ~=~ \begin{tikzpicture}[baseline = -.5ex]
    \graph[grow right = 1.cm, branch down = 2.5ex]{
    {s1/{$\const{c}_1$}[vert]} -!- {t1/{$\const{c}_2$}[vert]}
    };
    \graph[use existing nodes, edges={color=black, pos = .5, earrow}, edge quotes={fill=white, inner sep=1pt,font= \scriptsize}]{
        s1 ->["$a$", loop above, out = 60, in = 120, looseness = 5] s1;
        s1 ->["$a$", bend right] t1;
        t1 ->["$b$", bend right] s1;
    };
\end{tikzpicture} \hspace{1em} \leadsto \hspace{1em} \arraycolsep=3pt\begin{array}{c ccc|ccc ccc ccc|ccccccccc}
   & \multicolumn{3}{c|}{\univ{\struc}} & \multicolumn{9}{c|}{a^{\struc}} & \multicolumn{9}{c}{b^{\struc}}\\
    \cline{2-22}
   \binenc{\struc}{k} ~=~ & 1 & 1 & 0 & 1 & 1 & 0 & 0 & 0 & 0 & 0 & 0 & 0 & 0 & 0 & 0 & 1 & 0 & 0 & 0 & 0 & 0
\end{array}\]
For a sequence $\mul{\struc} = \struc_1 \dots \struc_n \in \RELk{k}^{+}$,
set $\binenc{\mul{\struc}}{k} \defeq \binenc{\struc_1}{k} \dots \binenc{\struc_n}{k}$.
For a set $\mathcal{L} \subseteq \RELk{k}^{+}$,
set $\binenc{\mathcal{L}}{k} \defeq \set{\binenc{\mul{\struc}}{k} \mid \mul{\struc} \in \mathcal{L}}$.
We can redefine the construction of \Cref{definition: loop-automaton to 2AFA} via this binary encoding.
\begin{lemma}\label{lemma: to binary AFA}
    Given $k \ge 2$ and a \kl{loop-automaton} $\automaton$ over $\vsig$,
    we can construct a \kl{2AFA} $\AP\intro*\binloopautomatonAFA_{k}^{\automaton}$ over the alphabet $\set{0,1}$
    such that $\ljump{\binloopautomatonAFA_{k}^{\automaton}} = \binenc{\ljump{\loopautomatonAFA_{k}^{\automaton}}}{k}$,
    in time polynomial in $k$, $\card\vsig$, and the number of states and transitions of $\automaton$.
\end{lemma}
\begin{proof}[Proof Sketch]
    We modify the construction of \Cref{definition: loop-automaton to 2AFA}
    using the binary encoding above.
    For instance, for the rule $(a_i)$,
    when $\tuple{p, q} \in a_i^{\automaton}$ and $x,y\in\rangeone{k}$,
    we replace the rule
    \[\tuple{\tuple{\tuple{\const{c}_x, p}, \tuple{\const{c}_y, q}}_{\checkmark}, \struc} ~\leadsto~ \const{true} \quad\text{ if $\tuple{\const{c}_x, \const{c}_y} \in a_i^{\struc}$}\]
    with the rules so that
    we move the head to the position of the bit corresponding to $\tuple{\const{c}_x, \const{c}_y} \in a_i^{\struc}$ and check whether it is $1$;
    the simulated state for the current \kl{structure} is kept at the first bit of its block, so the $P$-th bit is reached after $P-1$ moves;
    more precisely, as follows:
    \begin{align*}
        \tuple{\tuple{\tuple{\const{c}_x, p}, \tuple{\const{c}_y, q}}_{\checkmark}, b} &~\leadsto~ \tuple{\tuple{\tuple{\const{c}_x, p}, \tuple{\const{c}_y, q}}_{\checkmark}^{(a_i), 0}, 0},\\
        \tuple{\tuple{\tuple{\const{c}_x, p}, \tuple{\const{c}_y, q}}_{\checkmark}^{(a_i), \ell-1}, b} &~\leadsto~ \tuple{\tuple{\tuple{\const{c}_x, p}, \tuple{\const{c}_y, q}}_{\checkmark}^{(a_i), \ell}, +1} \text{ for $\ell \in \rangeone{k + m k^2 - 1}$,}\\
        \tuple{\tuple{\tuple{\const{c}_x, p}, \tuple{\const{c}_y, q}}_{\checkmark}^{(a_i), \ell}, 1} &~\leadsto~ \const{true} \tag*{if $\ell = k + (i-1)k^2 + (x-1)k + y - 1$,}
    \end{align*}
    where $b \in \set{0, 1}$.
    The original moves $(+1)$ and $(-1)$ between adjacent \kl{structures} are simulated by moving the head by exactly $k + m k^2$ bit positions to the next or previous block, using intermediate counter states.
    All the other rules are handled similarly, using polynomially many intermediate states.
    Additionally, putting $N \defeq k + m k^2$,
    we can also construct in polynomial time a \kl{2AFA} recognizing the \kl{language}
    \[
        \binenc{\RELk{k}^{+}}{k} = (\set{0,1}^{N} \setminus (\mathcal{B}_{\emptyset} \cup \mathcal{B}_{\mathrm{edge}}))^{+},
    \]
    where
    \[
        \mathcal{B}_{\emptyset} \defeq \set{0}^{k}\set{0,1}^{mk^2}
    \]
    is the set of blocks whose universe is empty, and
    \begin{align*}
        \mathcal{B}_{\mathrm{edge}}
        \defeq
        \bigcup_{\substack{j \in \rangeone{m}\\ x,y \in \rangeone{k}}}
        \Big(&
        \set{0,1}^{x-1}0
        \set{0,1}^{k-x+(j-1)k^2+(x-1)k+y-1}
        1
        \set{0,1}^{mk^2-((j-1)k^2+(x-1)k+y)}
        \\
        {}\cup{}&
        \set{0,1}^{y-1}0
        \set{0,1}^{k-y+(j-1)k^2+(x-1)k+y-1}
        1
        \set{0,1}^{mk^2-((j-1)k^2+(x-1)k+y)}
        \Big)
    \end{align*}
    is the set of blocks in which some relation bit uses a vertex outside the universe.
    The \kl{2AFA} first checks that the input is a non-empty concatenation of blocks of length $N$.
    It excludes $\mathcal{B}_{\emptyset}$ by checking, in every block, that at least one of the first $k$ bits is $1$.
    It excludes $\mathcal{B}_{\mathrm{edge}}$ by universally branching over relation-bit positions:
    if the current relation bit is $0$, the branch accepts immediately;
    if it is $1$, the automaton moves within the same block to the corresponding universe bits $b_x$ and $b_y$ and checks that both are $1$.
    By taking the intersection of them, we obtain the desired \kl{2AFA}.
\end{proof}

The same bit checks give the auxiliary automata used below.
\begin{lemma}\label{lemma: binary auxiliary languages}
Given $k$ and $\vsig=\set{a_1,\dots,a_m}$ (including the fresh variable $c_{\top}$),
we can construct, in time polynomial in $k$ and $m$, \kl{2AFAs}
$\automaton_k^{\mathrm{Inac}}$, $\automaton_k^{\mathrm{Incon}}$, and
$\automaton_k^{\mathrm{Top}}$ over $\set{0,1}$ such that
$\ljump{\automaton_k^{\mathrm{Inac}}}=\binenc{\laInac_k}{k}$,
$\ljump{\automaton_k^{\mathrm{Incon}}}=\binenc{\laIncon_k}{k}$,
$\ljump{\automaton_k^{\mathrm{Top}}}=\binenc{\laTop_k}{k}$.
\end{lemma}
\begin{proof}
Put $N\defeq k+mk^2$.
First construct property-checking \kl{2AFAs} on strings of $N$-bit blocks.
For $\automaton_k^{\mathrm{Inac}}$, guess two adjacent blocks and universally verify,
for every $x\in\rangeone{k}$, that $\const{c}_x$ does not occur in both universes.
For $\automaton_k^{\mathrm{Incon}}$, guess two adjacent blocks and $a_i,x,y$,
verify that both endpoints occur in both universes, and verify that the two bits differ for
$a_i(\const{c}_x,\const{c}_y)$.
For $\automaton_k^{\mathrm{Top}}$, guess a block and $x,y$ whose universe bits are $1$
and verify that the corresponding $c_{\top}$-bit is $0$.
Each construction has polynomially many states and transition-formula symbols:
the block positions are reached with counters of length at most $N$, and the choices range over
at most $m k^2$ triples.
Each automaton also checks that the input is a non-empty sequence of blocks encoding
structures, using the same checks as in the proof of \Cref{lemma: to binary AFA}.
The resulting automata satisfy the displayed language equalities.
\end{proof}

\begin{theorem}\label{theorem: KL PSPACE-complete restatement}
    The (in)\kl{equational theory} of \kl{loop-RKA}---given a finite set $\vsig$ and two \kl{loop-RKA terms} $\term[1]$ and $\term[2]$ over $\vsig$, does $\REL \klmodels \term[1] \le \term[2]$ hold?---is \textup{\kl{PSPACE}}-complete.
\end{theorem}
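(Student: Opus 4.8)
The plan is to turn the machinery of \Cref{section: loop-automata,section: decomposing derivatives,section: automata construction} into a polynomial-time many-one reduction from the (in)\kl{equational theory} of \kl{loop-RKA} to the \kl{language inclusion problem} for \kl{2AFAs}, which is in \kl{PSPACE} by \Cref{proposition: 2AFA PSPACE}; this also establishes \Cref{theorem: KL PSPACE-complete}. \kl{PSPACE}-hardness is inherited and needs no new work: \kl{RKA terms} are a syntactic fragment of \kl{loop-RKA terms}, and already for \kl{RKA} the (in)\kl{equational theory} coincides with the \kl{language equivalence problem} (resp.\ \kl{language inclusion problem}) of \kl{regular expressions}, which is \kl{PSPACE}-hard \cite{meyerEquivalenceProblemRegular1972}. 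So the content is the \kl{PSPACE} upper bound.

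For the upper bound, given a finite $\vsig$ and \kl{loop-RKA terms} $\term[1], \term[2]$ over $\vsig$, I would carry out three successive reductions. First, set $k \defeq \iw(\term[1]) + 1$; by \Cref{proposition: iw} we have $2 \le k \le \|\term[1]\| + 1$, so $k$ is linear in the input size, and by \Cref{proposition: bounded pw property} it suffices to decide $\REL_{\pw \le k-1} \klmodels \term[1] \le \term[2]$. Second, by \Cref{defi: term to automata} build the \kl{loop-automata} $\automaton_1 \defeq \loopRKAautomaton{\aterm[3]_{\top}^{*}\, l\, \term[1]\, r\, \aterm[3]_{\top}^{*}}$ and $\automaton_2 \defeq \loopRKAautomaton{\aterm[3]_{\top}^{*}\, l\, \term[2]\, r\, \aterm[3]_{\top}^{*}}$; by the McNaughton--Yamada--Thompson-style construction each has a number of \kl{states} linear in $\|\term[1]\|$, resp.\ $\|\term[2]\|$, and the outer $\aterm[3]_{\top}^{*}\, l\, \ldots\, r\, \aterm[3]_{\top}^{*}$ wrapping forces their \kl{semantics} on every glued normalized \kl{path decomposition} to be $\emptyset$ or the full square $\univ{\glue \mul{\struc}}^{2}$ --- exactly the hypothesis required by \Cref{lemma: encoding the equational theory of KL,lemma: to binary AFA}. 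Third, \Cref{lemma: encoding the equational theory of KL} shows that $\REL_{\pw \le k-1} \klmodels \term[1] \le \term[2]$ is equivalent to the inclusion $\ljump{\automaton[2]_{k}^{\automaton_1}} \subseteq \ljump{\automaton[2]_{k}^{\automaton_2}} \cup \mathcal{L}^{\mathrm{Inac}}_{k} \cup \mathcal{L}^{\mathrm{Incon}}_{k} \cup \mathcal{L}^{\top}_{k}$ over the alphabet $\REL_{k}$, where $\automaton[2]_{k}^{\automaton_i}$ denotes the \kl{2AFA} of \Cref{definition: loop-automaton to 2AFA}.

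The one obstruction to this being a \emph{polynomial-time} reduction is that $\card \REL_{k} \approx 2^{\card \vsig \cdot k^{2}}$ is exponential in $k$, so both the alphabet and the \kl{2AFAs} $\automaton[2]_{k}^{\automaton_i}$ are exponentially large. This is precisely what the binary encoding of \Cref{section: binary encoding} removes: by \Cref{lemma: to binary AFA} one constructs, in polynomial time in $k$ and in the sizes of $\automaton_1$ and $\automaton_2$ (hence in $\|\term[1]\| + \|\term[2]\|$), \kl{2AFAs} $\tilde{\automaton[2]}_{k}^{\automaton_1}$ and $\tilde{\automaton[2]}_{k}^{\automaton_2}$ over $\set{0,1}$ with $\ljump{\tilde{\automaton[2]}_{k}^{\automaton_i}} = \binenc{\ljump{\automaton[2]_{k}^{\automaton_i}}}{k}$ for $i \in \set{1,2}$. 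The three ``error'' \kl{languages} $\mathcal{L}^{\mathrm{Inac}}_{k}$, $\mathcal{L}^{\mathrm{Incon}}_{k}$, $\mathcal{L}^{\top}_{k}$ --- each a local, adjacent-block constraint --- together with the well-formedness \kl{language} $\binenc{\REL_{k}^{+}}{k}$, are likewise recognized under the binary encoding by polynomial-size \kl{2AFAs} over $\set{0,1}$ that merely scan consecutive $(k + \card \vsig \cdot k^{2})$-bit blocks (as is already indicated in the proof of \Cref{lemma: to binary AFA}), and a finite union of \kl{2AFAs} costs only a fresh initial state carrying a disjunctive transition. Since the encoding map $\mul{\struc} \mapsto \binenc{\mul{\struc}}{k}$ is injective on $\REL_{k}^{+}$, the original inclusion holds if and only if $\ljump{\tilde{\automaton[2]}_{k}^{\automaton_1}} \subseteq \ljump{\tilde{\automaton[2]}_{k}^{\automaton_2}} \cup \binenc{\mathcal{L}^{\mathrm{Inac}}_{k}}{k} \cup \binenc{\mathcal{L}^{\mathrm{Incon}}_{k}}{k} \cup \binenc{\mathcal{L}^{\top}_{k}}{k}$, and every automaton on either side is of polynomial size. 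Applying \Cref{proposition: 2AFA PSPACE} to this instance yields the \kl{PSPACE} upper bound.

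I expect the main difficulty to lie entirely in this last piece of bookkeeping --- keeping \emph{every} automaton polynomial despite the exponential alphabet $\REL_{k}$. Concretely, one must verify that each rule of \Cref{definition: loop-automaton to 2AFA}, in particular the letter rules ($a$) and the loop rule ($\ell_{\tuple{p', q'}}$) which consult the single symbol $\struc \in \REL_{k}$ under the head, can be faithfully simulated over the binary encoding by a bounded deterministic walk inside the current $(k + \card \vsig \cdot k^{2})$-bit block using only $\bigO(k + \card \vsig \cdot k^{2})$ auxiliary \kl{states} per original transition; that the head never runs off the encoded blocks and the endmarkers $\triangleright, \triangleleft$ are handled correctly; and that intersecting with $\binenc{\REL_{k}^{+}}{k}$ correctly restricts attention to genuine encodings. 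Everything else --- the width reduction (\Cref{proposition: bounded pw property}), the decomposition theorem for \kl{loop-automata} (\Cref{theorem: decomposition}), the \kl{2AFA} construction over the alphabet $\REL_{k}$ (\Cref{lemma: encoding the equational theory of KL}), and the \kl{PSPACE} decision procedure for the \kl{language inclusion problem} for \kl{2AFAs} (\Cref{proposition: 2AFA PSPACE}) --- is available off the shelf and only has to be chained together.
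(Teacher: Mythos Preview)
Your proposal is correct and follows essentially the same route as the paper: \kl{PSPACE}-hardness from the \kl{RKA} fragment, and for the upper bound the chain $\REL \klmodels \term[1] \le \term[2] \Leftrightarrow \REL_{\pw \le k-1} \klmodels \term[1] \le \term[2]$ (\Cref{proposition: bounded pw property}) $\Leftrightarrow$ the $\REL_{k}$-alphabet \kl{2AFA} inclusion (\Cref{lemma: encoding the equational theory of KL}) $\Leftrightarrow$ the binary-encoded \kl{2AFA} inclusion (\Cref{lemma: to binary AFA}), followed by polynomial-size \kl{2AFAs} for the error languages and an appeal to \Cref{proposition: 2AFA PSPACE}. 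Your additional remarks on injectivity of $\binenc{\cdot}{k}$, the block-local simulation of the rules of \Cref{definition: loop-automaton to 2AFA}, and the $\emptyset$-or-full hypothesis for \Cref{lemma: to binary AFA} are exactly the bookkeeping the paper leaves implicit.
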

\begin{proof}
    \proofcase{(Lower bound)}
    The \kl{equational theory} of \kl{RKA}, which is equivalent to the "language equivalence problem" of "regular expressions" (\Cref{footnote: lang and REL equiv}), is already \kl{PSPACE}-hard \cite{meyerEquivalenceProblemRegular1972}.
    \proofcase{(Upper bound)}
    By letting $k \defeq \iw(\term[1]) + 1$, we have: 
    \begin{align*}
    \REL \klmodels \term[1] \le \term[2]
    &~\Leftrightarrow~ \REL_{\pw \le k-1} \klmodels \term[1] \le \term[2] \tag{\Cref{proposition: bounded pw property}}\\
    &~\Leftrightarrow~ \ljump{\loopautomatonAFA_{k}^{\loopRKAautomaton{\aterm[3]_{\top}^* {l} \term[1] {r} \aterm[3]_{\top}^*}}} \subseteq \ljump{\loopautomatonAFA_{k}^{\loopRKAautomaton{\aterm[3]_{\top}^* {l} \term[2] {r} \aterm[3]_{\top}^*}}} \cup \laInac_{k} \cup \laIncon_{k} \cup \laTop_{k} \tag{\Cref{lemma: encoding the equational theory of KL}}\\
    &~\Leftrightarrow~ \ljump{\binloopautomatonAFA_{k}^{\loopRKAautomaton{\aterm[3]_{\top}^* {l} \term[1] {r} \aterm[3]_{\top}^*}}} \subseteq \ljump{\binloopautomatonAFA_{k}^{\loopRKAautomaton{\aterm[3]_{\top}^* {l} \term[2] {r} \aterm[3]_{\top}^*}}}
    \cup \binenc{\laInac_{k} \cup \laIncon_{k} \cup \laTop_{k}}{k} \tag{\Cref{lemma: to binary AFA}}.
    \end{align*}
    By \Cref{lemma: binary auxiliary languages}, we can construct \kl{2AFAs} for
    $\binenc{\laInac_{k}}{k}$, $\binenc{\laIncon_{k}}{k}$, and $\binenc{\laTop_{k}}{k}$ in polynomial time.
    Thus we can construct a \kl{2AFA} for the right-hand side "language" by taking the union of them.
    Hence, by \Cref{proposition: 2AFA PSPACE}, this completes the proof.
\end{proof}
The upper bound of \Cref{theorem: KL PSPACE-complete} immediately follows from \Cref{theorem: KL PSPACE-complete restatement}.
The lower bound of \Cref{theorem: KL PSPACE-complete} also follows from the same proof as in \Cref{theorem: KL PSPACE-complete restatement}.

\section{Loop-RKA with Additional Operators}\label{section: encoding extras}
In this section,
we briefly give some encodings of additional operators in our automata construction, based on \cite{nakamuraDerivativesGraphsPositive2025}.
For combined restrictions below, comma-separated superscripts denote intersection; for example,
$\REL^{\resTest,\resNom}\defeq\REL^{\resTest}\cap\REL^{\resNom}$.
\paragraph*{For top}
As in \Cref{section: encoding equational theory}, introduce a fresh variable $\aterm[3]_{\top}$
and replace each native occurrence of $\top$ with $\aterm[3]_{\top}^{*}$.
For a class $\algclass\subseteq\REL$, let
$\algclass^{\intro*\resTop}\defeq
\set{\struc\in\algclass\mid \jump{\aterm[3]_{\top}^{*}}^{\struc}=\univ{\struc}^{2}}$.
Recall that $\laTop_k$ rejects a word if $\aterm[3]_{\top}$ is not the full relation in some bag.
If $\mul{\struc} \notin \laTop_k \cup \laInac_k$, then consecutive bags overlap and
\[
\jump{\aterm[3]_{\top}^{*}}^{\glue\mul{\struc}}=\univ{\glue\mul{\struc}}^2.
\tag{top}\label{class cond: top}
\]
Conversely, because $\aterm[3]_{\top}$ is fresh, any normalized path decomposition over the original signature
can be expanded by interpreting $\aterm[3]_{\top}$ as the full relation in each bag.
After forgetting this fresh relation, the glued structure is unchanged.
Thus this stronger bag-wise condition preserves the validity of terms in which $\aterm[3]_{\top}$ occurs only as $\aterm[3]_{\top}^{*}$;
it is not asserted to characterize all structures satisfying \eqref{class cond: top} with an arbitrary interpretation of $\aterm[3]_{\top}$.
For $\binenc{\laTop_{k}}{k}$, the probe construction of
\Cref{lemma: binary auxiliary languages} gives a polynomial-size \kl{2AFA}.
Hence, we can encode "top".

\paragraph*{For tests}
Let $B, \bar{B} \subseteq \vsig$ be two disjoint sets of \kl{variables}
with a bijection $\bar{\mbox{\quad}} \colon B \to \bar{B}$.
Before translation, test expressions are Boolean formulas
$\varphi ::= b \mid \mathsf{true} \mid \mathsf{false} \mid
\varphi\land\varphi \mid \varphi\lor\varphi \mid \lnot\varphi$ for $b\in B$,
interpreted as subidentities.
To encode \intro*\kl{tests} as in \kl{KAT} \cite{kozenKleeneAlgebraTests1996},
we require each pair $b,\bar b$ to form a partition of the identity relation:
\begin{align*}
b^{\struc} \cap \bar b^{\struc} &= \emptyset,
& b^{\struc} \cup \bar b^{\struc} &= \diagonal_{\univ{\struc}}
\qquad\text{for every $b \in B$.} \tag{test}\label{class cond: test}
\end{align*}
\AP
For a class $\algclass \subseteq \REL$,
let $\algclass^{\intro*\resTest} \defeq \set{\struc \in \algclass \mid \text{$\struc$ satisfies \eqref{class cond: test}}}$.
In this class, for test terms $p$ generated from variables in $B \cup \bar{B}$ and constants $\id,\emp$ using $\compo$ and $\union$,
we define the complemented term $p^{-}$ (w.r.t.\  the identity relation) as follows:
\begin{gather*}
    b^{-} \defeq \bar{b}, \quad
    \bar{b}^{-} \defeq b, \quad
    \id^{-} \defeq \emp, \quad
    \emp^{-} \defeq \id, \quad
    (p \compo q)^{-} \defeq p^{-} \union q^{-}, \quad
    (p \union q)^{-} \defeq p^{-} \compo q^{-}.
\end{gather*}
Then $\jump{p^{-}}^{\struc} = \diagonal_{\univ{\struc}} \setminus \jump{p}^{\struc}$ holds for all $\struc \in \REL^{\resTest}$.
We thus can encode \kl{tests} (propositional formulas),
by expressing true, false, conjunction, disjunction,
as $\id$, $\emp$, $\compo$, and $\union$,
respectively,
and expressing complement as above.
Using \kl{tests}, for instance, we can encode propositional while programs \cite{fischerPropositionalDynamicLogic1979,kozenKleeneAlgebraTests1997}:
\begin{align*}
    \mbox{\textbf{while }} p \mbox{\textbf{ do }} \term[1] &\;\defeq\; (p \term[1])^{*} p^{-}, &
    \mbox{\textbf{if }} p \mbox{\textbf{ then }} \term[1] \mbox{\textbf{ else }} \term[2] &\;\defeq\; (p \term[1]) \union (p^{-} \term[2]).
\end{align*}
To encode the condition \eqref{class cond: test}, we define the following \kl{language}:
\AP
\phantomintro{\laTest}
\[\reintro*\laTest_{k} \;\defeq\;
\RELk{k}^{*} \compo
\set*{\struc \in \RELk{k} \mid \exists b \in B,\
b^{\struc} \cap \bar{b}^{\struc} \neq \emptyset
\text{ or } b^{\struc} \cup \bar{b}^{\struc} \neq \diagonal_{\univ{\struc}} } \compo
\RELk{k}^{*}.\]
Then, up to structure isomorphism,
$\REL^{\resTest}_{\pw \le k-1}$ coincides with
$\set{\glue \mul{\struc} \mid \mul{\struc} \in \RELk{k}^{+} \setminus (\laTest_{k} \cup \laIncon_{k})}$.
For the language $\binenc{\laTest_{k}}{k}$,
a \kl{2AFA} guesses a block, $b\in B$, and endpoint indices, and then probes either
a pair belonging to both $b$ and $\bar b$, an off-diagonal pair in their union,
or a missing diagonal pair.
Combined with \Cref{lemma: to binary AFA}, this gives in polynomial time
a \kl{2AFA} of size polynomial in $k$ and $\card\vsig$.
Hence, we can encode \kl{tests}.

\paragraph*{For converse}
Let $C, \breve{C} \subseteq \vsig$ be two disjoint sets of \kl{variables} with a bijection $\breve{\mbox{\quad}} \colon C \to \breve{C}$.
To encode \intro*\kl{converse}, it suffices to force the class of \kl{structures} $\struc$
so that
\begin{align*}
\breve{c}^{\struc} = (c^{\struc})^{\smile} \text{\quad for every $c \in C$.} \tag{converse}\label{class cond: converse}
\end{align*}
\AP
For a class $\algclass \subseteq \REL$,
let $\algclass^{\intro*\resConv} \defeq \set{\struc \in \algclass \mid \text{$\struc$ satisfies \eqref{class cond: converse}}}$.
In this class, converse is pushed to atoms.
For each term $\term$, the resulting term, also denoted by $\term^{\smile}$, satisfies
$\jump{\term^{\smile}}^{\struc} = (\jump{\term}^{\struc})^{\smile}$ for all
$\struc \in \REL^{\resConv,\resTest,\resNom}$, by the following rules:
\begin{gather*}
    c^{\smile} \defeq \breve{c}, \quad
    \breve{c}^{\smile} \defeq c, \quad
    \id^{\smile} \defeq \id, \quad
    \emp^{\smile} \defeq \emp, \quad
    (\term[1] \compo \term[2])^{\smile} \defeq \term[2]^{\smile} \compo \term[1]^{\smile}, \\
    (\term[1] \union \term[2])^{\smile} \defeq \term[1]^{\smile} \union \term[2]^{\smile}, \quad
    (\term[1]^{*})^{\smile} \defeq (\term[1]^{\smile})^{*}, \quad
    (\term[1]^{\lop})^{\smile} \defeq (\term[1]^{\smile})^{\lop}, \quad
    (\term[1]^{\smile})^{\smile} \defeq \term[1].
\end{gather*}
To encode the condition \eqref{class cond: converse}, we define the following \kl{language}:
\AP
\phantomintro{\laConv}
\[\reintro*\laConv_{k} \;\defeq\;
\RELk{k}^{*} \compo
\set{\struc \in \RELk{k} \mid \exists c \in C,\ (c^{\struc})^{\smile} \neq \breve{c}^{\struc} } \compo 
\RELk{k}^{*}.\]
Then, up to structure isomorphism,
$\REL^{\resConv}_{\pw \le k-1}$ coincides with
$\set{\glue \mul{\struc} \mid \mul{\struc} \in \RELk{k}^{+} \setminus \laConv_{k}}$.
For the \kl{language} $\binenc{\laConv_{k}}{k}$,
a \kl{2AFA} guesses a block, $c\in C$, and $x,y$, and compares the bits for
$c(\const{c}_x,\const{c}_y)$ and $\breve c(\const{c}_y,\const{c}_x)$.
Together with \Cref{lemma: to binary AFA}, this is a polynomial-time construction
of size polynomial in $k$ and $\card\vsig$.
Hence, we can encode \kl{converse}.

\paragraph*{For nominals}
To encode \intro*\kl{nominals} (from hybrid modal logic) \cite{arecesHybridLogics2007}, it suffices to force the class of \kl{structures} $\struc$
so that
\begin{align*}
\forall l \in L,\ \exists x \in \univ{\struc},\quad
l^{\struc} = \set{(x,x)}. \tag{nominal}\label{class cond: nominal}
\end{align*}
\AP
For a class $\algclass \subseteq \REL$,
let $\algclass^{\intro*\resNom} \defeq \set{\struc \in \algclass \mid \text{$\struc$ satisfies \eqref{class cond: nominal}}}$.
To encode the condition \eqref{class cond: nominal}, we define the following \kl{language}:
\AP
\phantomintro{\laNom}
\begin{align*}
    \reintro*\laNom_{k} &\defeq
    \bigcup_{l \in L}
    \left(\begin{aligned}
       & \set{\struc \in \RELk{k} \mid l^{\struc} = \emptyset}^{+} \cup {} \\
       & (\RELk{k}^{*} \compo \set{\struc \in \RELk{k} \mid l^{\struc} \not\subseteq \diagonal_{\univ{\struc}}} \compo \RELk{k}^{*}) \cup {} \\
       & \RELk{k}^{*} \compo \set*{
            \begin{aligned}
                &\struc_1 \dots \struc_{n} \in \RELk{k}^{+} \mid \exists x, \exists y, \tuple{x, x} \in l^{\struc_1}   \\
                &{} \land \tuple{y, y} \in l^{\struc_n} \land  (x \neq y \lor \exists j \in \rangeone{n}, x \not\in \univ{\struc_{j}})
            \end{aligned}} \compo \RELk{k}^{*}
    \end{aligned}\right).
\end{align*}
Then, up to structure isomorphism, $\REL^{\resNom}_{\pw \le k-1}$ coincides with
$\set{\glue \mul{\struc} \mid \mul{\struc} \in \RELk{k}^{+} \setminus \laNom_{k}}$.
The three alternatives in $\laNom_k$ exclude, respectively, an empty interpretation, a non-diagonal pair, and more than one quotient vertex.
For the language $\binenc{\laNom_{k}}{k}$,
a \kl{2AFA} guesses $l\in L$ and one of the three alternatives above.
It scans all blocks for the first alternative, probes an off-diagonal pair for the second,
and stores $x,y\in\rangeone{k}$ while scanning the selected factor for the third.
Combining this with \Cref{lemma: to binary AFA} gives a polynomial-time construction
of size polynomial in $k$ and $\card\vsig$.
Hence, we can encode \kl{nominals}.

\paragraph*{Putting it all together}
We finally consider \kl{loop-RKA} with \kl{tests}, \kl{converse}, and \kl{nominals};
precisely, we consider \kl{loop-RKA} over the class
$\REL^{\resTop, \resConv, \resTest, \resNom}$ via the above encodings for these extensions,
where $B, \bar{B}, C, \breve{C}, L$ are pairwise disjoint subsets of $\vsig$.
We recall the fact that 
\kl{loop-RKA} with \kl{tests}, \kl{converse}, and \kl{nominals}
still has the \kl{linearly bounded pathwidth model property}:
\begin{proposition}[{\cite[Proposition 6.5]{nakamuraDerivativesGraphsPositive2025}}]\label{proposition: ext bounded pw property}
    For all \kl{loop-RKA terms} $\term[1]$ and $\term[2]$,
    we have:
    \[\REL^{\resTop, \resConv, \resTest, \resNom} \klmodels \term[1] \le \term[2] \quad\Leftrightarrow\quad
    \REL^{\resTop, \resConv, \resTest, \resNom}_{\pw \le \iw(\term[1]) + \card L} \klmodels \term[1] \le \term[2].\]
\end{proposition}

\begin{theorem*}[Restatement of \Cref{theorem: KL extras PSPACE-complete}]
\theoremKLextrasPSPACEcomplete
\end{theorem*}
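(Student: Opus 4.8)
The lower bound is immediate, since \kl{RKA terms} embed syntactically into the terms under consideration and the \kl{equational theory} of \kl{RKA} is already \kl{PSPACE}-hard \cite{meyerEquivalenceProblemRegular1972}. For the upper bound, the plan is to re-run the argument behind \Cref{theorem: KL PSPACE-complete restatement}, using \Cref{proposition: ext bounded pw property} in place of \Cref{proposition: bounded pw property} and adding the constraint \kl{languages} of \Cref{section: encoding extras} to the right-hand side of the final inclusion. First I would normalize the input with the syntactic translations of that section: push \kl{converse} $\bl^{\smile}$ and the \kl{test} complement $\bl^{-}$ down to \kl{variables}, and replace \kl{top} $\top$ by $c_{\top}^{*}$. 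The two input \kl{terms} then become plain \kl{loop-RKA terms} over the extended signature $\vsig' \defeq \vsig \dcup \set{c_{\top}, l, r}$, with $B, \bar{B}, C, \breve{C}, L$ pairwise disjoint subsets of $\vsig$. Since both pushdowns are structure-preserving and a complemented \kl{test} has \kl{intersection width} $1$, this preprocessing is polynomial-time and does not change $\iw(\term[1])$; I would then put $k \defeq \iw(\term[1]) + \card L + 1$ and guard the two \kl{terms} by $c_{\top}^{*}\, l \,\bl\, r\, c_{\top}^{*}$ exactly as in \Cref{lemma: encoding the equational theory of KL}, so that on every glued normalized \kl{path decomposition} the relevant \kl{semantics} is empty or the full square.

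Next I would convert the two guarded \kl{terms} to \kl{loop-automata} via \Cref{defi: term to automata} and apply \Cref{lemma: to 2AFA}, obtaining \kl{2AFAs} over $\REL_{k}$ whose \kl{languages} record precisely the normalized \kl{path decompositions} witnessing the two inclusions. Arguing as in \Cref{lemma: encoding the equational theory of KL}, but now invoking \Cref{proposition: ext bounded pw property} together with the identities of \Cref{section: encoding extras} that express $\REL^{\mathrm{test}_{B, \bar{B}}, \mathrm{conv}_{C, \breve{C}}, \mathrm{nom}_{L}}_{\pw \le k-1}$ as the glued members of $\REL_{k}^{+}$ lying outside certain ``constraint'' \kl{languages}, I obtain that $\REL^{\mathrm{test}_{B, \bar{B}}, \mathrm{conv}_{C, \breve{C}}, \mathrm{nom}_{L}} \klmodels \term[1] \le \term[2]$ is equivalent to the inclusion of the left-hand \kl{2AFA}'s \kl{language} in the union of the right-hand \kl{2AFA}'s \kl{language} with the six \kl{languages} $\mathcal{L}^{\mathrm{Inac}}_{k}, \mathcal{L}^{\mathrm{Incon}}_{k}, \mathcal{L}^{\top}_{k}, \mathcal{L}^{\mathrm{test}_{B, \bar{B}}}_{k}, \mathcal{L}^{\mathrm{conv}_{C, \breve{C}}}_{k}, \mathcal{L}^{\mathrm{nom}_{L}}_{k}$. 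Pairwise disjointness of $B, \bar{B}, C, \breve{C}, L$ makes the underlying conditions on \kl{structures} independent, so this union is exactly the set of decompositions to be discarded.

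Finally I would pass to the binary encoding $\binenc{\bl}{k}$ of \Cref{section: binary encoding}: by \Cref{lemma: to binary AFA} the two \kl{2AFAs} become \kl{2AFAs} over $\set{0, 1}$ of size polynomial in $k$ and $\|\term[1]\|$, and, as indicated in \Cref{section: encoding extras}, each of the six constraint \kl{languages}, in its binary encoding, is recognized by a \kl{1NFA}---hence a \kl{2AFA}---with polynomially many \kl{states} in $k$ and $\card \vsig$. Taking the union of all of these yields a single polynomial-size \kl{2AFA} for the right-hand side, so the whole question reduces in polynomial time to a \kl{language inclusion problem} of \kl{2AFAs}, which lies in \kl{PSPACE} by \Cref{proposition: 2AFA PSPACE}. \Cref{corollary: KA with domain PSPACE-complete} then follows via the encodings of \kl{domain} and \kl{range} by $(\term \compo \top)^{\lop}$ and $(\top \compo \term)^{\lop}$.

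The step I expect to be the main obstacle is the \kl{nominal} case. Unlike the other five constraints, ``$\card{l^{\struc}} = 1$'' is a \emph{global} property of the glued \kl{structure}: $\mathcal{L}^{\mathrm{nom}_{L}}_{k}$ must guarantee, for each $l \in L$, that $l$ is non-empty and sub-diagonal and---crucially---that the single \kl{vertex} carrying $l$ is never dropped between adjacent \kl{structures} along the \kl{path decomposition}. Verifying this amounts to propagating a candidate port across the whole \kl{string}, and one has to confirm that this stays doable with polynomially many \kl{states} after the binary encoding (a finite automaton that ``remembers the current port of $l$'', uniformly in $l$, suffices). One must also be careful to feed the larger \kl{pathwidth} bound $\iw(\term[1]) + \card L$ of \Cref{proposition: ext bounded pw property}---not $\iw(\term[1])$ alone---into the choice of $k$ everywhere, while noting that $k$ stays linear in the input \kl{size}, so that all the automata constructions remain polynomial.
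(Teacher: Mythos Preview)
Your proposal is correct and follows essentially the same route as the paper's proof: set $k \defeq \iw(\term[1]) + \card L + 1$, invoke \Cref{proposition: ext bounded pw property}, extend \Cref{lemma: encoding the equational theory of KL} by adjoining the three additional constraint \kl{languages} $\mathcal{L}^{\mathrm{test}_{B,\bar{B}}}_{k}$, $\mathcal{L}^{\mathrm{conv}_{C,\breve{C}}}_{k}$, $\mathcal{L}^{\mathrm{nom}_{L}}_{k}$, pass to the binary encoding via \Cref{lemma: to binary AFA}, and conclude by \Cref{proposition: 2AFA PSPACE}. Your added discussion of the preprocessing (pushing $\smile$ and the \kl{test} complement to \kl{variables}) and of why the \kl{nominal} constraint is the only global one is accurate and simply makes explicit what the paper leaves implicit in the definitions of \Cref{section: encoding extras}.
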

\begin{proof}
\proofcase{(Lower bound)}
Immediate from the lower bound of \Cref{theorem: KL PSPACE-complete}.

\proofcase{(Upper bound)}
Given two \kl{loop-RKA} terms with "top", "tests", "converse", and "nominals", $\term[1]$ and $\term[2]$,
choose fresh variables $\aterm[3]_{\top}$, $l$, and $r$ that do not occur in $\term[1]$ or $\term[2]$ and are disjoint from $B, \bar{B}, C, \breve{C}$, and $L$.
Let $\term[1]'$ and $\term[2]'$ be the \kl{loop-RKA} terms obtained from $\term[1]$ and $\term[2]$ by replacing "top", "tests", and "converse" with their encodings as above, respectively.
By letting $k \defeq \iw(\term[1]') + \card L + 1$, we have: 
\begin{align*}
&\REL^{\resTop, \resConv, \resTest, \resNom}\klmodels \term[1] \le \term[2]
~\Leftrightarrow~ \REL^{\resTop, \resConv, \resTest, \resNom}\klmodels \term[1]' \le \term[2]'\\
&\Leftrightarrow~ \REL^{\resTop, \resConv, \resTest, \resNom}_{\pw \le k-1} \klmodels \term[1]' \le \term[2]' \tag{\Cref{proposition: ext bounded pw property}}\\
&\Leftrightarrow~ \ljump{\loopautomatonAFA_{k}^{\loopRKAautomaton{\aterm[3]_{\top}^* {l} \term[1]' {r} \aterm[3]_{\top}^*}}} \subseteq \ljump{\loopautomatonAFA_{k}^{\loopRKAautomaton{\aterm[3]_{\top}^* {l} \term[2]' {r} \aterm[3]_{\top}^*}}} \cup \laInac_{k} \cup \laIncon_{k} \cup \laTop_{k}
\cup \laConv_{k}
\cup \laTest_{k}
\cup \laNom_{k}
\tag{\Cref{lemma: encoding the equational theory of KL} with the encodings above}\\
&\Leftrightarrow~ \ljump{\binloopautomatonAFA_{k}^{\loopRKAautomaton{\aterm[3]_{\top}^* {l} \term[1]' {r} \aterm[3]_{\top}^*}}} \subseteq \ljump{\binloopautomatonAFA_{k}^{\loopRKAautomaton{\aterm[3]_{\top}^* {l} \term[2]' {r} \aterm[3]_{\top}^*}}}
\cup \binenc{
\laInac_{k}
\cup \laIncon_{k}
\cup \laTop_{k}
\cup \laConv_{k}
\cup \laTest_{k}
\cup \laNom_{k}}{k}
\tag{\Cref{lemma: to binary AFA}}.
\end{align*}
For the right-hand side language,
since we can construct the \kl{2AFA} for each \kl{language} in polynomial time, we can construct its \kl{2AFA} by taking the union.
Hence, by \Cref{proposition: 2AFA PSPACE}, this completes the proof.
\end{proof}

\Cref{corollary: KA with domain PSPACE-complete} immediately follows from \Cref{theorem: KL extras PSPACE-complete}
via the encoding of \kl{domain} \eqref{equation: dom range by loop top}.

\bibliography{main}

\appendix

\end{document}